\newtheorem{theorem}{Theorem}
\newtheorem{lemma}{Lemma}
\newtheorem{proposition}{Proposition}
\newtheorem{assumption}{Assumption}
\theoremstyle{definition}
\newtheorem{remark}{Remark}
\renewcommand{\thefootnote}{\arabic{footnote}}
\newcommand{\argmin}{\operatornamewithlimits{arg\,min}}
\newcommand*{\rom}[1]{\expandafter\@slowromancap\romannumeral #1@}
\def \var{\text{var}}
\def \trace{\text{tr}}
\def \expect{\mathbb{E}}
\def \prob{\mathbb{P}}
\def \real{\mathbb{R}}
\def \diag{\text{diag}}
\def \ran{\mathcal{R}}
\def \({\left(}
\def \){\right)}
\def \[{\left[}
\def \]{\right]}
\DeclarePairedDelimiter\smallnorm{\lVert}{\rVert}
\providecommand{\norm}[1]{\left\lVert#1\right\rVert}
\def \M{M_0}
\def \sig{\sigma}
\def \Mp{M}
\def \Sig{\Sigma}
\def \y{y}
\def \My{M_\y}
\def \epsilony{\epsilon_\y}
\def \U{U}
\def \Uc{U_c}
\def \V{V}
\def \Vc{V_c}
\def \Lam{\Lambda}
\def \lam{\lambda}
\def \Mhat{\hat{M}}
\def \s{s}
\def \shat{\hat{s}}
\def \lamhat{\hat{\lambda}} 
\def \Vhat{\hat{\V}} 
\def \Vhatc{\hat{\V}_{c}} 
\def \Uhat{\hat{\U}} 
\def \Uhatc{\hat{\U}_{c}} 
\def \p{p}
\def \phat{\hat{\p}}
\def \tauhatp{\hat{\tau}_{\p}}
\def \tauhatphat{\hat{\tau}_{\phat}}
\def \bu{\mathbf{u}}
\def \bv{\mathbf{v}}
\newcommand{\blam}{\boldsymbol{\mathbf{\lambda}}}
\def \n{n}
\def \d{d}
\def \rank{r}
\def \m{m}
\def \L{L}
\def \O{\mathcal{O}}
\def \Q{\mathcal{Q}}
\def \I{\mathbbm{1}} 
\def \P{\mathcal{P}_{\Omega}}
\begin{document}


\renewcommand{\baselinestretch}{1.2}

\markright{ \hbox{\footnotesize\rm Statistica Sinica
}\hfill\\[-13pt]
\hbox{\footnotesize\rm
}\hfill }

\markboth{\hfill{\footnotesize\rm Juhee Cho, Donggyu Kim, and Karl Rohe} \hfill}
{\hfill {\footnotesize\rm Asymptotic theory for LRMC} \hfill}

\renewcommand{\thefootnote}{\arabic{footnote}}
$\ $\par


\fontsize{10.95}{14pt plus.8pt minus .6pt}\selectfont
\vspace{0.8pc}
\centerline{\large\bf Asymptotic Theory for Estimating the Singular Vectors and }
\vspace{2pt}
\centerline{\large\bf Values of a Partially-observed Low Rank Matrix with Noise}
\vspace{.4cm}
\centerline{Juhee Cho, Donggyu Kim, and Karl Rohe\footnote{This research is supported by NSF grant DMS-1309998 and ARO grant W911NF-15-1-0423.
}}
\vspace{.4cm}
\centerline{\it University of Wisconsin-Madison}
\vspace{.55cm}
\fontsize{9}{11.5pt plus.8pt minus .6pt}\selectfont


\begin{quotation}
\noindent {\it Abstract:}
Matrix completion algorithms recover a low rank matrix from a small fraction of the entries, each entry contaminated   with additive errors.
In practice, the singular vectors and singular values of the low rank matrix play a pivotal role for statistical analyses and inferences.
This paper proposes estimators of these quantities and studies their asymptotic behavior.
Under the setting where the dimensions of the matrix increase to infinity and the probability of observing each entry is identical, Theorem \ref{thm1} gives the rate of convergence for the estimated singular vectors; Theorem \ref{corol1} gives a multivariate central limit theorem for the estimated singular values.
Even though the estimators use only a partially observed matrix, 
	they achieve the same rates of convergence as the fully observed case.
These estimators combine to form a consistent estimator of the full low rank matrix 
	that is computed with a non-iterative algorithm.
In the cases studied in this paper, this estimator achieves the minimax lower bound in \cite{koltchinskii2011}.
The numerical experiments corroborate our theoretical results.\par

\vspace{9pt}
\noindent {\it Key words and phrases: Matrix completion, low rank matrices, singular value decomposition, matrix estimation}
\par
\end{quotation}\par

\def\thefigure{\arabic{figure}}
\def\thetable{\arabic{table}}

\fontsize{10.95}{14pt plus.8pt minus .6pt}\selectfont


\section{Introduction} \label{intro}
The matrix completion problem arises in several different machine learning and engineering applications, ranging from collaborative filtering (\cite{rennie2005}), to computer vision (\cite{weinberger2006}), to positioning (\cite{montanari2010}), and to recommender systems (\cite{bennett2007}).
The  literature has  established a sizable body of algorithmic research (\cite{rennie2005,keshavan2009,cai2010,mazumder2010,hastie2014,cho2015nips})
and theoretical results 
(\cite{fazel2002,srebro2004,candes2009recht,candes2010plan,keshavan2010a,
recht2011,gross2011,negahban2011,koltchinskii2011,rohde2011,koltchinskii2011solo,candes2011tight,
negahban2012,cai2013,davenport2014,chatterjee2014}).
This extant literature is primarily focused on estimating the unobserved entries of the matrix.
In several of these previous estimation techniques, the algorithms first estimate the singular vectors and singular values of the low rank matrix.  
Also, based upon classical multivariate statistics,  these singular vectors and singular values can serve various types of statistical analyses and inferences.
For example, the overarching aim in the Netflix problem was to predict the unobserved film ratings and the previous algorithms and theories served this purpose.  However, if one wishes to interpret the resulting model predictions, then the estimated singular vectors and singular values can provide insights on (i) the main latent factors of film preferences and (ii) their relative strengths, respectively.  In the Netflix example,
\begin{quote} ``The first factor has on one side lowbrow comedies and horror movies, aimed at a male or adolescent audience (Half Baked, Freddy vs. Jason), while the other side contains drama or comedy with serious undertones and strong female leads (Sophie's Choice, Moonstruck). The second factor has independent, critically acclaimed, quirky films (Punch-Drunk Love, I Heart Huckabees) on one side, and mainstream formulaic films (Armageddon, Runaway Bride) on the other side.'' (\cite{koren2009matrix})
\end{quote}
This inference is based upon the leading singular vectors of the estimated matrix. 
To the best of our knowledge, no previous  research has studied the statistical properties of the estimated singular vectors and singular values.  

This paper proposes estimators of the singular vectors and singular values of the low rank matrix 
	as well as an estimator of the low rank matrix itself.
First, Lemma \ref{lem0} studies the singular vectors and singular values of a partially observed matrix that simply  substitutes zeros for the unobserved entries; the resulting estimators are biased.  
The proposed estimators adjust for this bias.  
Theorem \ref{thm1} 	finds the convergence rate for the bias-adjusted singular vector estimators and 
	Theorem \ref{corol1} gives a multivariate central limit theorem for the bias-adjusted singular value estimators.  
Despite the fact that the proposed estimators are built upon a partially observed matrix,
	they converge at the same rate as the standard estimators built from a fully observed matrix 
	up to a constant factor which depends on the probability of observing each entry.
Combining the proposed singular vector and value estimators,
	Section \ref{consist} gives a one-step consistent estimator of the low rank matrix
	which does not iterate over several singular value decompositions or eigenvalue decompositions.  
The mean squared error of this estimator achieves the minimax lower bound in Theorems 5-7 (\cite{koltchinskii2011}).

The rest of this paper is organized as follows. 
Section \ref{setup} describes the model setup. 
Section \ref{estimate} shows that the singular vectors and singular values of a partially observed matrix are biased 
	and suggests a bias-adjusted alternative.
Section \ref{converge} finds (1) the convergence rates of the estimated singular vectors and (2)  the asymptotic distribution of the estimated singular values.
Section \ref{consist} proposes and studies a one-step consistent estimator of the full matrix.
Section \ref{simul} corroborates the theoretical findings with numerical experiments. 
Finally, Section \ref{proofs} provides the proofs of our main theoretical results.
The proofs of the other results are collected in the Appendix.


\section{Model setup} \label{setup}

The underlying matrix that we wish to estimate is an $\n\times\d$ matrix $M_0$ with rank $\rank$.  By singular value decomposition (SVD), 
\begin{equation} \label{mod}
\M = \U\Lam\V^T,
\end{equation}
for orthonormal matrices 
$\U = (\U_1,\ldots,\U_\rank) \in \real^{\n \times \rank} \ \mbox{ and }  \ 
	\V = (\V_1,\ldots,\V_\rank) \in \real^{\d \times \rank}$
containing the left and right singular vectors, and a diagonal matrix 
$\Lam = \diag (\lam_1, \ldots, \lam_\rank) \in \real^{\rank \times \rank}$
containing the singular values.
$\M$ is corrupted by noise $\epsilon \in \real^{\n \times \d}$,
	where the entries of $\epsilon$ are i.i.d. sub-Gaussian random variables with mean zero and variance $\sig^2$.  
Let $\y \in \{0,1\}^{\n \times \d}$ be such that $\y_{kh} = 1$ if the $(k,h)$-th entry of $\M+\epsilon$ is observed and $\y_{kh} = 0$ otherwise.  
The entries of $\y$ are i.i.d. Bernoulli($\p$) and independent of the entries of $\epsilon$.
Thus, the total number of observed entries in $\M+\epsilon$ is a Binomial($\n\d,\p$) random variable. 
We observe $\y$ and the partially observed matrix $\Mp  \in\real^{\n\times\d}$, where
\begin{equation*}
\Mp_{kh} 
= \big[\y \cdot ({\M} + \epsilon)\big]_{kh} 
		=  \left\{\begin{matrix}
				{\M}_{kh}+\epsilon_{kh} & \text{if observed ($\y_{kh}=1$)}\\ 
				0 &  \text{otherwise ($\y_{kh}=0$)} 
				\end{matrix}\right. 
\end{equation*}
for $1\le k\le\n$ and $1\le h\le\d$. 
Throughout the paper, it is presumed that $\rank \ll \d \leq \n$.  
Moreover, the entries of $\M$ are bounded in absolute value by a constant $\L>0$.

\begin{remark}
Depending on the case, the noise $\epsilon$ can be related to the measurement system so that 
	assuming that there exist errors for unobserved entries does not make sense.
Hence, assume a hierarchical model as follows;
\begin{eqnarray*}
&&\epsilon_{ij}| \y_{ij}=0 = 0 \;\text{ a.s.,}\cr
&&\epsilon_{ij}| \y_{ij}=1 \sim \text{ subgaussian, and}\cr
&&\quad\;\, \y_{ij} \quad\;\;\,\sim \text{ i.i.d. Bernoulli}(\p).
\end{eqnarray*}
In this setting, the results obtained in this paper would still hold 
	although it may require more techniques or minor changes in the proof.
For simplicity of the paper, we only focus on the original setting. 
\end{remark}


\section{Estimation of singular values and vectors of $\M$} \label{estimate}

The vast majority of previous estimators of $\M$ have been initialized with $\Mp$, in effect imputing the missing values with zero. 
In this section, we study the properties of singular vectors and values of $\Mp$. 
This suggests alternative estimators of the singular vectors and values of $\M$.


\subsection{Properties of singular values and vectors of $\Mp$} \label{propertyM}

Define
$$\hat{\Sig}:=\Mp^T \Mp \quad\text{and}\quad \hat{\Sig}_t:=\Mp \Mp^T.$$
Then, the eigenvectors of $\hat{\Sig}$ and $\hat{\Sig}_t$ are the same as the right and left singular vectors of $\Mp$, respectively, and
	the squared root of eigenvalues of $\hat{\Sig}$ are the same as the singular values of $\Mp$.
The following lemma shows that $\hat{\Sig}$ and $\hat{\Sig}_t$ are biased estimators of $\M^T\M$ and $\M\M^T$, respectively.

\begin{lemma} \label{lem0}
Under the model setup in Section \ref{setup}, we have
\begin{equation} \label{expect-Sigp}
\expect \, \hat{\Sig} = \p^2 \M^T\M + \p(1-\p)\,\diag(\M^T\M) + \n\p\sig^2 I_\d,
\end{equation}
and similarly, 
\begin{equation} \label{expect-Sigpt}
\expect \, \hat{\Sig}_t = \p^2 \M\M^T + \p(1-\p)\,\diag(\M\M^T) + \d\p\sig^2 I_\n,
\end{equation}
where $I_\d$ and $I_\n$ are $\d\times\d$ and $\n\times\n$ identity matrices, respectively.
\end{lemma}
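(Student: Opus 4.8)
The plan is to compute $\expect\,\hat{\Sig} = \expect(\Mp^T\Mp)$ entrywise. Write $\Mp_{kh} = \y_{kh}(\M_{kh}+\epsilon_{kh})$, so that the $(i,j)$-th entry of $\Mp^T\Mp$ is
\[
(\Mp^T\Mp)_{ij} = \sum_{k=1}^{\n} \Mp_{ki}\Mp_{kj} = \sum_{k=1}^{\n} \y_{ki}\y_{kj}(\M_{ki}+\epsilon_{ki})(\M_{kj}+\epsilon_{kj}).
\]
I would then take expectations term by term, using that $\y$ and $\epsilon$ are independent, that $\expect\epsilon_{kh}=0$, $\expect\epsilon_{kh}^2=\sig^2$, that distinct entries of $\epsilon$ are independent, and that the $\y_{kh}$ are i.i.d.\ Bernoulli($\p$) so $\expect\y_{kh}=\p$ and $\expect(\y_{ki}\y_{kj})=\p^2$ when $i\neq j$ but $\expect\y_{ki}^2 = \expect\y_{ki}=\p$.

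The key step is to split into the off-diagonal case $i\neq j$ and the diagonal case $i=j$. For $i\neq j$: the cross terms involving a single $\epsilon$ vanish in expectation, the $\epsilon_{ki}\epsilon_{kj}$ term vanishes by independence of distinct noise entries, and only $\expect(\y_{ki}\y_{kj})\M_{ki}\M_{kj} = \p^2 \M_{ki}\M_{kj}$ survives, giving $\p^2 (\M^T\M)_{ij}$ summed over $k$. For $i=j$: we get $\sum_k \expect\big[\y_{ki}(\M_{ki}+\epsilon_{ki})^2\big] = \sum_k \p(\M_{ki}^2 + \sig^2) = \p\,(\M^T\M)_{ii} + \n\p\sig^2$. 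Comparing with $\p^2(\M^T\M)_{ii}$, the diagonal carries an extra $\p(1-\p)(\M^T\M)_{ii}$ plus $\n\p\sig^2$, which is exactly the $\p(1-\p)\,\diag(\M^T\M) + \n\p\sig^2 I_\d$ correction. This establishes \eqref{expect-Sigp}; \eqref{expect-Sigpt} follows by the identical argument applied to $\Mp\Mp^T$, where the inner sum now runs over the $\d$ columns, producing $\d\p\sig^2 I_\n$ instead.

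This is essentially a bookkeeping computation, so there is no serious obstacle; the only point requiring a little care is correctly separating the cases $i=j$ versus $i\neq j$ (equivalently, keeping track of when $\y_{ki}\y_{kj}$ has expectation $\p$ rather than $\p^2$), since that distinction is precisely the source of the diagonal bias term. I would present the calculation compactly by treating a generic entry and then reading off the matrix identity.
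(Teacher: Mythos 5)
Your computation is correct: the whole content of the lemma is that $\expect(\y_{ki}\y_{kj})=\p^2$ off the diagonal but $\expect\y_{ki}^2=\expect\y_{ki}=\p$ on it, and your diagonal bookkeeping $\p=\p^2+\p(1-\p)$ together with $\expect[\y_{ki}(\M_{ki}+\epsilon_{ki})^2]=\p(\M_{ki}^2+\sig^2)$ recovers exactly the correction terms $\p(1-\p)\,\diag(\M^T\M)+\n\p\sig^2 I_\d$; the transposed statement follows by the same argument with the roles of rows and columns exchanged, as you say. The paper organizes the same elementary calculation differently: it first centers the observed matrix, writing $\Mp=\p\M+\My+\epsilony$ with $\My=[(\y_{kh}-\p)\M_{kh}]$ and $\epsilony=[\y_{kh}\epsilon_{kh}]$, expands $\hat{\Sig}=\Mp^T\Mp$ into the nine resulting matrix products, notes that all cross terms have expectation zero, and evaluates $\expect(\My^T\My)=\p(1-\p)\,\diag(\M^T\M)$ and $\expect(\epsilony^T\epsilony)=\n\p\sig^2 I_\d$ (each of which is verified by an entrywise check of the same kind you perform). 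The two routes are mathematically equivalent, but the paper's centered decomposition is not purely stylistic: the expansion \eqref{Sigp} is reused verbatim in the subsequent concentration and asymptotic arguments (e.g., the proofs of Lemmas \ref{lem6} and \ref{lem7} and of Theorem \ref{thm2}), whereas your direct entrywise argument is more self-contained and arguably the cleanest way to prove this lemma in isolation.
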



The proof of this lemma is in Appendix \ref{apdx0}.
The right-hand side of \eqref{expect-Sigp} contains terms  beyond $\p^2 \M^T \M$
	and they make the singular vectors and singular values of $\Mp$ biased estimators of the singular vectors and values of $\M$.
While the bias coming from $\n\p\sig^2 I_\d$ 
is manageable\footnote{This term does not change the singular vectors of $\expect \,\hat{\Sig}$; it merely increases each singular value by  $\n\p\sig^2$.},
	the bias coming from $\p(1-\p)\,\diag(\M^T\M)$ is not.  
The same applies to $\hat{\Sig}_t $ in \eqref{expect-Sigpt}.

To get rid of the terms producing unmanageable biases, we define $\hat{\Sig}_\p$ and $\hat{\Sig}_{\p t}$ and their eigenvectors and eigenvalues as follows,
\begin{equation}\label{Sigpp}
\begin{aligned}
\hat{\Sig}_\p :=& \hat{\Sig} - (1-\p)\,\diag(\hat{\Sig}) 
\cr
	=& (\V_\p,\V_{\p c}) \,\diag({\lam_\p^2}_1,\ldots,{\lam_\p^2}_\d)(\V_\p,\V_{\p c})^T, \;\text{ and}
\cr 
\hat{\Sig}_{\p t} :=& \hat{\Sig}_t - (1-\p)\,\diag(\hat{\Sig}_t)
\cr
	=& (\U_\p,\U_{\p c}) \,\diag({\lam_{\p t}^2}_1,\ldots,{\lam_{\p t}^2}_\n)(\U_\p,\U_{\p c})^T,
\end{aligned}
\end{equation}
where
\begin{eqnarray*}
\V_\p=({\V_\p}_1,\ldots,{\V_\p}_\rank) \in \real^{\d \times \rank}, && 
\V_{\p c}=({\V_\p}_{\rank+1},\ldots, {\V_\p}_{\d}) \in \real^{\d \times (\d-\rank)},\\
\U_\p=({\U_\p}_1,\ldots,{\U_\p}_\rank) \in \real^{\n \times \rank}, &&
\U_{\p c}=({\U_\p}_{\rank+1},\ldots,{\U_\p}_\n) \in \real^{\n \times (\n-\rank)}.\
\end{eqnarray*} 
The following proposition shows that $\hat{\Sig}_{\p}$ and $\hat{\Sig}_{\p t}$ adjust the bias.


\begin{proposition} \label{prop0}
Under the model setup in Section \ref{setup}, we have by eigendecomposition,
\begin{eqnarray*}
	&&\expect \, \hat{\Sig}_\p = \p^2 \M^T\M + \n\p^2\sig^2 I_\d =  ({\V}, {\Vc} ) \ddot{\Lam}_\p^2 ({\V}, {\Vc} )^T \text{ and} \cr
	&&  \expect \, \hat{\Sig}_{\p t} = \p^2 \M\M^T + \d\p^2\sig^2 I_\n  = ({\U}, {\Uc} ) \ddot{\Lam}_{\p t}^2 ({\U}, {\Uc} )^T,
\end{eqnarray*}
where 
	$\V$ and $\U$ are as defined in \eqref{mod}, 
	${\Vc} \in \real^{\d \times (\d-\rank)}$, ${\Uc} \in \real^{\n \times (\n-\rank)}$,
\begin{align*}
\ddot{\Lam}_\p^2 &= \diag ({\ddot{\lam_\p}^2}_1, \ldots, {\ddot{\lam_\p}^2}_\d) 
\\
	&= \diag (\p^2[{\lam_1^2}+\n\sig^2], \ldots, \,\p^2[{\lam_\rank^2} +\n\sig^2],\,\p^2\n\sig^2,\ldots, \, \p^2\n\sig^2)\in\real^{\d\times\d}, \text{ and}
\\
\ddot{\Lambda}_{\p t}^2 
	&= \diag (\p^2[{\lam_1^2}+\d\sig^2], \ldots, \p^2[{\lam_\rank^2}+\d\sig^2],\,\p^2\d\sig^2,\ldots, \,\p^2\d\sig^2) \in\real^{\n\times\n}.
\end{align*}
\end{proposition}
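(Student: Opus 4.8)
\emph{Proof sketch.} The plan is to reduce everything to the formulas already supplied by Lemma \ref{lem0}, together with elementary bookkeeping for the $\diag(\cdot)$ operation, and then to read off the eigendecomposition from the SVD \eqref{mod}.

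First I would compute $\expect\,\hat{\Sig}_\p$. Taking expectations in the definition $\hat{\Sig}_\p = \hat{\Sig} - (1-\p)\,\diag(\hat{\Sig})$ from \eqref{Sigpp} and using linearity of both $\expect$ and $\diag(\cdot)$ gives $\expect\,\hat{\Sig}_\p = \expect\,\hat{\Sig} - (1-\p)\,\diag(\expect\,\hat{\Sig})$. Plugging in \eqref{expect-Sigp} and using the idempotence $\diag(\diag(A)) = \diag(A)$ together with $\diag(I_\d) = I_\d$, one obtains $\diag(\expect\,\hat{\Sig}) = \p\,\diag(\M^T\M) + \n\p\sig^2 I_\d$. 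Substituting this back, the two $\diag(\M^T\M)$ contributions cancel exactly, since $\p(1-\p) - (1-\p)\p = 0$, while the identity contribution becomes $[\n\p - (1-\p)\n\p]\sig^2 I_\d = \n\p^2\sig^2 I_\d$; hence $\expect\,\hat{\Sig}_\p = \p^2\M^T\M + \n\p^2\sig^2 I_\d$. The argument for $\expect\,\hat{\Sig}_{\p t}$ is identical, starting instead from \eqref{expect-Sigpt}.

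Next I would exhibit the eigendecomposition. From \eqref{mod} we have $\M^T\M = \V\Lam^2\V^T$. Choose $\Vc \in \real^{\d\times(\d-\rank)}$ so that $(\V,\Vc)$ is an orthogonal $\d\times\d$ matrix, i.e. an orthonormal completion of the columns of $\V$; then $I_\d = (\V,\Vc)(\V,\Vc)^T$ and $\M^T\M = (\V,\Vc)\,\diag(\lam_1^2,\ldots,\lam_\rank^2,0,\ldots,0)\,(\V,\Vc)^T$. Adding the two pieces yields $\p^2\M^T\M + \n\p^2\sig^2 I_\d = (\V,\Vc)\,\ddot{\Lam}_\p^2\,(\V,\Vc)^T$ with $\ddot{\Lam}_\p^2$ exactly as in the statement. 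Since $\lam_i > 0$ for $i \le \rank$, the first $\rank$ diagonal entries $\p^2(\lam_i^2 + \n\sig^2)$ strictly exceed the remaining entries $\p^2\n\sig^2$, so this is a genuine eigendecomposition with the stated ordering and the columns of $\V$ recovered as the leading eigenvectors. The same construction with $\U$, $\M\M^T = \U\Lam^2\U^T$, and an orthonormal completion $\Uc$ gives the second identity.

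There is no real obstacle here; the only points requiring minor care are (i) tracking the $\diag(\cdot)$ operation so that the bias term $\p(1-\p)\,\diag(\M^T\M)$ cancels cleanly against the correction, and (ii) recording that $(\V,\Vc)$ and $(\U,\Uc)$ are chosen orthogonal, which is what promotes the algebraic identity to an honest eigendecomposition.
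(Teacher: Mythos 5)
Your proposal is correct and follows essentially the same route as the paper, which states only that the proposition "easily follows from Lemma \ref{lem0} and \eqref{Sigpp}": you take expectations in the definition of $\hat{\Sig}_\p$, cancel the $\diag(\M^T\M)$ bias term using Lemma \ref{lem0}, and read off the eigendecomposition from the SVD \eqref{mod} with orthonormal completions $\Vc$ and $\Uc$. Your write-up simply makes explicit the bookkeeping the paper leaves to the reader, and the computations (the coefficient $\p^2+\p(1-\p)=\p$ inside the diagonal, the resulting $\n\p^2\sig^2 I_\d$ term, and the block-diagonal form of $\p^2\M^T\M+\n\p^2\sig^2 I_\d$) are all accurate.
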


The proof of this proposition easily follows from Lemma \ref{lem0} and \eqref{Sigpp}.

Proposition \ref{prop0} shows that the top $\rank$ eigenvectors of $\expect \, \hat{\Sig}_\p$ and $\expect \, \hat{\Sig}_{\p t}$ are the same as the right and left singular vectors of $\M$, respectively. 
Also, the top $\rank$ eigenvalues of $\expect \, \hat{\Sig}_\p$ are easily adjusted to match the singular values of $\M$ as follows,
\begin{equation*}
\lam_i^2 = \frac{1}{\p^2}{\ddot{\lam_\p}^2}_i -\n\sig^2, \quad\text{for } i=1,\ldots,\rank.
\end{equation*}


\subsection{Estimators of singular values and vectors of $\M$}

The results in Proposition \ref{prop0} suggest plug-in estimators 
	using the leading eigenvectors and eigenvalues of $ \hat{\Sig}_\p$ and the leading eigenvectors of $\hat{\Sig}_{\p t}$ 
	as estimators of $\V$, $\Lam$, and $\U$, respectively.
However, since $\p$ is an unknown parameter in practice, 
	the proposed estimators use instead of $\p$ the proportion of observed entries in $\Mp$, $\phat$, which is defined as 
\begin{equation} \label{phat}
\phat = \frac{\sum_{k=1}^\n\sum_{h=1}^\d \y_{kh}}{\n\d}.
\end{equation}
Using $\phat$,  define $\hat{\Sig}_{\phat}$ and $\hat{\Sig}_{\phat t}$ as
\begin{equation} \label{Sigpphat}
\hat{\Sig}_{\phat} := \hat{\Sig} - (1-\phat)\,\diag(\hat{\Sig}) \quad\text{and}\quad \hat{\Sig}_{\phat t} := \hat{\Sig}_t - (1-\phat)\,\diag(\hat{\Sig}_t).
\end{equation}
By eigendecomposition, 
\begin{equation}\label{SigpphatDecompose}
\hat{\Sig}_{\phat} = ({\Vhat}, {\Vhatc} ) \,\Lam_{\phat}^2\, ({\Vhat}, {\Vhatc} )^T 
	\quad\text{and}\quad 
	\hat{\Sig}_{\phat t} = ({\Uhat}, {\Uhatc} ) \,\Lam_{\phat t}^2 \, ({\Uhat}, {\Uhatc} )^T,
\end{equation}
where ${\Vhat} \in \real^{\d \times \rank}$, ${\Vhatc} \in \real^{\d \times (\d-\rank)}$, 
	$\Lam_{\phat}^2 = \diag (\lam_{\phat 1}^2, \ldots, {\lam_{\phat \d}^2}) \in \real^{\d \times \d}$, 
	${\Uhat} \in \real^{\n \times \rank}$, ${\Uhatc} \in \real^{\n \times (\n-\rank)}$, and 
	$\Lam_{\phat t}^2 = \diag ({\lambda_{\phat t 1}^2}, \ldots, {\lambda_{\phat t \n}^2}) \in \real^{\n \times \n}$. 
Then,  estimate  the left and right singular vectors, $\U$ and $\V$, of $\M$ by $\Uhat$ and $\Vhat$, respectively.
Also, estimate the singular values, $\lam_i, i=1,\ldots,\rank$, of $\M$ by 
\begin{equation} \label{lambdahat}
\lamhat_i = \sqrt{\frac{1}{\phat^2}\big({\lam_{\phat}^2}_i - \tauhatphat\big)} \;\;\text{ for $i = 1, \ldots, \rank$},
\end{equation}
where $\tauhatphat = \frac{1}{\d-\rank} \trace\(\Vhatc^T \hat{\Sig}_{\phat} \Vhatc\). $

For any $A \in \real^{\n\times\d}$, let the $i$-th left singular vector of $A$ be denoted by $\bu_i(A)$,  
	the $i$-th right singular vector of $A$ by $\bv_i(A)$, and
	the top $i$-th singular value of $A$ by $\blam_i(A)$
	for $i =1, \ldots,\d$.
Then, Algorithm \ref{alg:UhatVhatLamhat} summarizes the steps to compute the proposed estimators of the singular values and vectors of $\M$.
\begin{algorithm}[ht]
\caption{\;Estimators of $\U_i$, $\V_i$, and $\lambda_i$  for $i=1,\ldots, \rank$}
\begin{algorithmic}
    \Require{$\Mp$, $\y$, and $\rank$}
    \State 
        $\phat \gets \frac{1}{\n\d}\sum_{k=1}^\n \sum_{h=1}^\d \y_{kh}$
    \State 
        ${\hat{\Sig}}_{\phat} \gets \Mp^T\Mp - (1-\phat) \diag(\Mp^T\Mp)$
    \State 
        ${\hat{\Sig}}_{t \phat} \gets \Mp\Mp^T - (1-\phat) \diag(\Mp\Mp^T)$
    \State 
        $\Vhat_i \gets \bv_i({\hat{\Sig}}_{\phat}), \quad \forall i \in \{1,\ldots,\rank\}$
    \State 
        $\Uhat_i \gets \bu_i({\hat{\Sig}}_{\phat t}), \quad \forall i \in \{1,\ldots,\rank\}$
    \State 
        $\tauhatphat \gets \frac{1}{\d-\rank} \sum_{i=\rank+1}^\d \blam_i({\hat{\Sig}}_{\phat})$
    \State 
        $\lamhat_i \gets \frac{1}{\phat}\sqrt{\blam_i({\hat{\Sig}}_{\phat}) - \tauhatphat}, \quad \forall i \in \{1,\ldots,\rank\}$
    \\
    \Return{$\Vhat_i$, $\Uhat_i$, and $\lamhat_i$ for $i=1,\ldots, \rank$}
\end{algorithmic}
\label{alg:UhatVhatLamhat}
\end{algorithm}


\section{Asymptotic theory} \label{asympt}

This section investigates the statistical properties of the estimators  proposed in \eqref{SigpphatDecompose} and \eqref{lambdahat}.


\subsection{Convergence rate of the estimated singular vectors and asymptotic distribution of the estimated singular values} \label{converge}

Let $x=(x_1, \ldots, x_n)^T$ be a $n$-dimensional vector and $ A=(A_{kh})$ a $n\times d$ matrix. 
Then, the $\ell_p$-norm is defined as follows,
\begin{equation*}
\left \|  x \right \| _p =\( \sum_{i=1} ^{p} |x_i|^{p} \)^{1/p}, \quad\text{and}\quad \left \|  A \right \|_p =\sup \{ \left \|  A x \right \| _p, \left \|  x \right \| _p =1\}, \;\; p=1,2, \infty.
\end{equation*}
The spectral norm $\|  A \| _2$ is a square root of the largest eigenvalue of $ A  A^T$,
\begin{equation*}
\left \| A \right \| _1 =\max_{1\leq h \leq d} \sum_{k=1} ^{n} |A_{kh}|, \;\text{ and }\; \left \| A \right \| _{\infty} =\max_{1\leq k \leq n} \sum_{h=1} ^{d} |A_{kh}|.
\end{equation*}
The squared Frobenius norm is defined by $\left \| A \right \| _F ^2 = \trace\(A^T A\)$, the trace of $A^T A$.
We denote by $c>0$ and $C>0$ generic constants that are free of $\n$, $\d$, and $\p$, and different from appearance to appearance. 

To measure how close the proposed estimator $\hat\V$ is to $\V$ (or, $\hat\U$ to $\U$), 
	we introduce a classical notion of distance between subspaces.
Let $\ran(Z_1)$ denote a column space spanned by $Z_1 \in \real^{\d\times\rank}$ and $\ran(Z_2)$ by $Z_2 \in \real^{\d\times\rank}$.
Then, to measure the dissimilarity between $\ran(Z_1)$ and $\ran(Z_2)$,  consider the following loss function 
$$\| \sin (Z_1, Z_2 ) \| _F ^2 = \| \sin \Theta (\ran(Z_1), \ran(Z_2) ) \| _F ^2,$$
where $\sin \Theta (\ran(Z_1), \ran(Z_2) )$ is a diagonal matrix of singular values (canonical angles) 
	of $P_1 P_2 ^{\perp}$ with orthogonal projections $P_1$ and $P_2$ of $Z_1$ and $Z_2$, respectively.
Here $P^{\perp}=I - P$.
The canonical angles generalize the notion of angles between lines and are often used to define the distance between subspaces.
If the columns of $Z_1$ and $Z_2$ are singular vectors, 
	$\ran(Z_1)$ and $\ran(Z_2)$ have projections $P_1=Z_1 Z_1^T$ and $P_2=Z_2Z_2 ^T$, 
	respectively, and $\|\sin (Z_1, \hat{Z_2} )\|_F ^2 = \|Z _1Z _1 ^T (Z _2Z _2 ^T) ^{\perp}\|_F ^2 
	= \frac{1}{2}\|Z _1Z _1 ^T -Z _2Z _2 ^T\|_F ^2$.
Proposition 2.2 in \cite{vu2013} relates this subspace distance 
	to the Frobenius distance 
\begin{equation} \label{sineDist}
	\frac{1}{2} \inf_{\O \in \mathbb{V}_{\rank,\rank}} \smallnorm{Z_1 - Z_2\O}_F^2 
		\le \smallnorm{ \sin (Z_1 ,Z_2 )}_F^2 
			\le \inf_{\O \in \mathbb{V}_{\rank,\rank}} \smallnorm{Z_1 - Z_2\O}_F^2 ,
\end{equation}
where $\mathbb{V}_{\rank,\rank} = \{O \in \real^{\rank\times\rank} : O^TO = I_{\rank}$ and $OO^T = I_{\rank}\}$ denotes the Stiefel manifold of $\rank \times \rank$ orthonormal matrices.
In other words, the distance between two subspaces corresponds to the minimal distance between their orthonormal bases.

\begin{assumption}\label{assume1}
~
\begin{enumerate}
	\item [(1)] $\lam_{i} = b_i \sqrt {\n\d}, i = 1, \ldots, \rank$, where $\frac{1}{c}\le b_i \le c$ for a constant $c>0$;
	\item [(2)] there exists a constant $\m \in \{1, \ldots,\rank \}$ such that ${b_{m}} > {b_{m+1}}$, where ${b}_{\rank+1}=0$;
	\item [(3)] $\d \le \n \le e^{\d^{\alpha}}$ for a constant $\alpha<1$ free of $\n$, $\d$, and $\p$.
\end{enumerate}
\end{assumption}
\begin{remark} \label{rmk:justfyND}
To motivate Assumption \ref{assume1} (1),
suppose that a non-vanishing proportion of entries of $\M$ contains non-vanishing signals (i.e. ${\M}_{kh}^2 \ge c_0$ for some constant $c_0>0$)
	and that the rank of $\M$ is fixed. 
Then, $$\sum_{k=1}^\n \sum_{h=1}^\d {\M}_{kh}^2=\norm{\M}_F^2\ge c\n\d$$ for some constant $c>0$. 
Because the squared Frobenius norm is also the sum of the squared singular values of $\M$, 
	the order of the singular values of $\M$ should be $\sqrt{\n\d}$ (see also \cite{fan2013}).  
Assumption \ref{assume1}(1) may seem uncommon in the matrix completion literature,
	but consider the widely-used assumption (II.2) in \cite{candes2010plan},	
$$  \max_{1\le k\le \n} |\U_{ik}| \le \sqrt{C/\n} \;\;\text{and}\;\;   \max_{1\le h\le \d} |\V_{ih}| \le \sqrt{C/\d}$$
for $i=1,\ldots,\rank$ and a constant $C\ge 1$, which prevents spiky singular vectors.
Under the model setup in Section \ref{setup} where the entries of $\M$ are bounded in absolute value by a constant $\L>0$, 
	this implies Assumption \ref{assume1}(1). 
\end{remark}
The following theorem shows the convergence of $\hat\V$ to $\V$ and $\hat\U$ to $\U$.


\begin{theorem} \label{thm1}
Under the model setup in Section \ref{setup} and Assumption \ref{assume1}, 
let $\Vhat^{(\m)}$ and $\Uhat^{(\m)}$ be the first $m$ columns of  $\Vhat $ and $\Uhat$ defined in \eqref{SigpphatDecompose}, respectively, 
and let $\V^{(\m)}$ and $\U^{(\m)}$ be the first $m$ columns of $\V $ and $\U$ defined in \eqref{mod}, respectively.
Then, for large $\n$ and $\d$,
\begin{eqnarray} \label{Vconsist} 
\expect\norm{\sin \big( \Vhat^{(\m)}, \V^{(\m)} \big)}_F^2 
	\leq  \frac{C_1\, n^{-1}}{\p\,({b}_{\m}^2 - {b}_{\m+1}^2)^2}
\end{eqnarray}
and
\begin{eqnarray} \label{Uconsist}
\expect\norm{\sin \big( \Uhat^{(\m)}, \U^{(\m)} \big)}_F^2 
	\leq  \frac{C_2 \, d^{-1}}{\p\,({b}_{\m}^2 - {b}_{\m+1}^2)^2},
\end{eqnarray}
where $C_1$ and $C_2$ are generic constants free of $n, d,$ and $p$.
\end{theorem}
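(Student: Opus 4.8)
\noindent\emph{Proof proposal.}
The two inequalities \eqref{Vconsist} and \eqref{Uconsist} are symmetric: one runs the argument below once with the pair $(\hat{\Sig}_{\phat},\expect\hat{\Sig}_{\p})$ and once with $(\hat{\Sig}_{\phat t},\expect\hat{\Sig}_{\p t})$, interchanging the roles of $\n$ and $\d$; I describe the first. By Proposition \ref{prop0}, $\Sig_0:=\expect\hat{\Sig}_{\p}=\p^2\M^T\M+\n\p^2\sig^2 I_\d$ has leading $\m$ eigenvectors equal to the columns of $\V^{(\m)}$ and $\m$-th spectral gap $\delta:=\p^2(\lam_\m^2-\lam_{\m+1}^2)=\p^2(b_\m^2-b_{\m+1}^2)\,\n\d>0$ by Assumption \ref{assume1}(1)--(2) (with $\lam_{\rank+1}=0$). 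Put $E:=\hat{\Sig}_{\phat}-\Sig_0$.

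First I would reduce to controlling $E$ in the signal directions. Since $\Vhat^{(\m)}$ maximizes $V\mapsto\langle\hat{\Sig}_{\phat},VV^T\rangle$ over $\d\times\m$ matrices with orthonormal columns, the inequality $\langle\hat{\Sig}_{\phat},\Vhat^{(\m)}\Vhat^{(\m)T}-\V^{(\m)}\V^{(\m)T}\rangle\ge0$ combined with the $\sin\Theta$ curvature bound $\langle\Sig_0,\V^{(\m)}\V^{(\m)T}-\Vhat^{(\m)}\Vhat^{(\m)T}\rangle\ge\delta\,\smallnorm{\sin(\Vhat^{(\m)},\V^{(\m)})}_F^2$ yields the deterministic estimate
\begin{equation*}
\delta\,\smallnorm{\sin(\Vhat^{(\m)},\V^{(\m)})}_F^2\ \le\ \langle E,\ \Vhat^{(\m)}\Vhat^{(\m)T}-\V^{(\m)}\V^{(\m)T}\rangle .
\end{equation*}
Writing $\Vhat^{(\m)}\Vhat^{(\m)T}-\V^{(\m)}\V^{(\m)T}=(I-\V^{(\m)}\V^{(\m)T})\Vhat^{(\m)}\Vhat^{(\m)T}-\V^{(\m)}\V^{(\m)T}(I-\Vhat^{(\m)}\Vhat^{(\m)T})$, applying Cauchy--Schwarz, and using $\smallnorm{(I-\V^{(\m)}\V^{(\m)T})\Vhat^{(\m)}}_F=\smallnorm{\sin(\Vhat^{(\m)},\V^{(\m)})}_F$ together with \eqref{sineDist}, the right-hand side is at most $C(\smallnorm{E\V^{(\m)}}_F+\|E\|_2\,\smallnorm{\sin(\Vhat^{(\m)},\V^{(\m)})}_F)\,\smallnorm{\sin(\Vhat^{(\m)},\V^{(\m)})}_F$. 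Hence on the event $\mathcal{G}:=\{\|E\|_2\le\delta/(2C)\}$ one has $\smallnorm{\sin(\Vhat^{(\m)},\V^{(\m)})}_F\le(2C/\delta)\smallnorm{E\V^{(\m)}}_F$, while on $\mathcal{G}^c$ the trivial bound $\smallnorm{\sin(\Vhat^{(\m)},\V^{(\m)})}_F^2\le\rank$ applies, so
\begin{equation*}
\expect\smallnorm{\sin(\Vhat^{(\m)},\V^{(\m)})}_F^2\ \le\ \frac{4C^2}{\delta^2}\,\expect\smallnorm{E\V^{(\m)}}_F^2\ +\ \rank\,\prob(\mathcal{G}^c).
\end{equation*}

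The crux is $\expect\smallnorm{E\V^{(\m)}}_F^2$. From \eqref{Sigpp} and \eqref{Sigpphat}, $E=(\hat{\Sig}_{\p}-\expect\hat{\Sig}_{\p})+(\phat-\p)\,\diag(\hat{\Sig})$; the $\phat$-correction is negligible, since $\expect(\phat-\p)^4=\O((\n\d)^{-2})$ and each diagonal entry of $\hat{\Sig}$ concentrates at order $\n\p$, so Cauchy--Schwarz gives $\expect\smallnorm{(\phat-\p)\diag(\hat{\Sig})\V^{(\m)}}_F^2=\O(\p^3\n/\d)$ up to logarithmic factors. For the main term, fix $i\le\m$ and write $((\hat{\Sig}_{\p}-\expect\hat{\Sig}_{\p})\V_i)_g=\sum_{k=1}^\n A_{kg}$, where $A_{kg}=\Mp_{kg}(\Mp\V_i)_k-(1-\p)(\V_i)_g\Mp_{kg}^2$ are mean zero and \emph{independent across $k$}. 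Conditioning on the $k$-th rows of $\y$ and $\epsilon$, the only term of order larger than $\p^2$ in $\expect A_{kg}^2$ is the one of size $\asymp\p\cdot(\expect[(\Mp\V_i)_k])^2=\p^3\lam_i^2\U_{ki}^2$, so $\expect A_{kg}^2=\O(\p^2)+\O(\p^3\lam_i^2\U_{ki}^2)$. Summing over $k$ (using $\sum_k\U_{ki}^2=1$), then over $g=1,\dots,\d$ and over $i\le\m$, and using $\lam_i^2=b_i^2\n\d$ with $b_i\le c$,
\begin{equation*}
\expect\smallnorm{(\hat{\Sig}_{\p}-\expect\hat{\Sig}_{\p})\V^{(\m)}}_F^2=\O(\n\d\p^2)+\O(\p^3\n\d^2)=\O(\p^3\n\d^2),
\end{equation*}
hence $\expect\smallnorm{E\V^{(\m)}}_F^2=\O(\p^3\n\d^2)$, and with $\delta=\p^2(b_\m^2-b_{\m+1}^2)\n\d$ the first term in the bound above is of order $\n^{-1}\p^{-1}(b_\m^2-b_{\m+1}^2)^{-2}$.

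The main obstacle is controlling $\prob(\mathcal{G}^c)=\prob(\|E\|_2>\delta/(2C))$: the gap $\delta\asymp\p^2\n\d$ dominates the scale $\p\sqrt{\n\d}$ of $\|E\|_2$ only up to logarithmic factors, so Chebyshev applied to $\expect\|E\|_F^2=\O(\p^2\n\d^2)$ is not quite enough when $\p\to0$. I would instead truncate the sub-Gaussian entries of $\epsilon$ at level $\sqrt{\log(\n\d)}$ (valid simultaneously with probability $1-(\n\d)^{-c}$) and apply a matrix Bernstein inequality to the resulting sum of $\n$ independent bounded summands; Assumption \ref{assume1}(3) is precisely what keeps the logarithmic factors from overwhelming $\delta$, giving $\rank\,\prob(\mathcal{G}^c)=o(\n^{-1}\p^{-1}(b_\m^2-b_{\m+1}^2)^{-2})$. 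Combining the three displays yields \eqref{Vconsist}. Inequality \eqref{Uconsist} follows by repeating the argument with $\hat{\Sig}_{\phat t}$, $\expect\hat{\Sig}_{\p t}=\p^2\M\M^T+\d\p^2\sig^2 I_\n$ (the gap $\delta$ is unchanged), and $\n\leftrightarrow\d$: the conditional mean in the analogue of $A$ now has size $\asymp\p\lam_i(\V_i)_h$ and the outer index ranges over $\n$ coordinates, so $\expect\smallnorm{(\hat{\Sig}_{\p t}-\expect\hat{\Sig}_{\p t})\U^{(\m)}}_F^2=\O(\p^3\n^2\d)$, and dividing by $\delta^2$ produces $\d^{-1}$ in place of $\n^{-1}$.
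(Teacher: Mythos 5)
Your argument is correct in substance and reaches the paper's rates, but it takes a genuinely different route. The paper inserts the intermediate estimator $\V_\p^{(\m)}$ (the leading eigenvectors of $\hat{\Sig}_\p$, built with the true $\p$), splits $\expect\smallnorm{\sin(\Vhat^{(\m)},\V^{(\m)})}_F^2$ by the triangle inequality, and applies the Davis--Kahan $\sin\theta$ theorem twice --- once for $\hat{\Sig}_\p$ versus $\expect\hat{\Sig}_\p$ (Proposition \ref{prop1}, Lemma \ref{lem7}) and once for $\hat{\Sig}_{\phat}$ versus $\hat{\Sig}_\p$ (Lemma \ref{lem5}) --- each time controlling the \emph{random} eigengaps through Weyl's inequality on high-probability events furnished by the spectral-norm bounds of Lemmas \ref{lem6}, \ref{lem4}, and \ref{lem9}. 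You instead compare $\hat{\Sig}_{\phat}$ directly with the deterministic matrix $\expect\hat{\Sig}_\p$ via the variational characterization of the top eigenspace plus the Vu--Lei curvature lemma, so the gap $\delta=\p^2(b_\m^2-b_{\m+1}^2)\n\d$ is deterministic and no Weyl step or intermediate basis is needed; the whole burden then sits on (i) the second-moment bound $\expect\smallnorm{E\V^{(\m)}}_F^2=O(\p^3\n\d^2)$, which is exactly the content of the paper's Lemmas \ref{lem7} and \ref{lem5} combined (your row-wise variance computation reproduces the dominant cross terms $\p\M^T\My$ and $\p\M^T\epsilony$), and (ii) one spectral-norm concentration event for $E=\hat{\Sig}_{\phat}-\expect\hat{\Sig}_\p$, playing the role of Lemmas \ref{lem6} and \ref{lem4}. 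Your route is structurally leaner (one perturbation step, no random-gap bookkeeping), while the paper's decomposition pays for itself later because $\V_\p$, $\hat{\Sig}_\p$, and the event-based eigenvalue bounds are reused in the proofs of Theorem \ref{thm2} and Proposition \ref{prop5}. Three small points to tighten if you write this out: your $A_{kg}$ should be explicitly centered (as written they are not mean zero, though the centered sums are what you use); the good event must also absorb the $(\phat-\p)\,\diag(\hat{\Sig})$ part of $E$, which is easy since $|\phat-\p|\,\smallnorm{\diag(\hat{\Sig})}_2=O_p(\p^{3/2}\sqrt{\n/\d})\ll\delta$; and the subordinate term $O(\p^2\n\d)$ in $\expect\smallnorm{E\V^{(\m)}}_F^2$ is dominated by $O(\p^3\n\d^2)$ only when $\p\d\gtrsim 1$, the same implicit regime ($\p\d/\log\n\to\infty$, cf.\ Remark \ref{rmk2}) that the paper's Lemma \ref{lem7} also needs, so this is consistent rather than a defect. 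The concentration step you leave as a sketch (truncation plus matrix Bernstein over the $\n$ independent rows) is indeed how the paper proceeds through Lemma \ref{lem9} via Propositions 1--2 of \cite{koltchinskii2011}, and a failure probability $O(\n^{-2})$ suffices since $\m\,\n^{-2}=o\big(\p^{-1}\n^{-1}(b_\m^2-b_{\m+1}^2)^{-2}\big)$.
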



The proof of this theorem is in Section \ref{proofs:thm1}.


\begin{remark} \label{rmk2}
As long as $\frac{\p\,\d}{\log\n} \to \infty$, the convergence rates in Theorem \ref{thm1} will hold.
Hence, under the setting where $\p$ goes to zero, if $\d/\log\n$ diverges fast enough that $\frac{\p\,\d}{\log\n} \to \infty$, 
	we can still obtain the results in Theorem \ref{thm1}.
\end{remark}
\begin{remark}
Despite the fact that $\Vhat^{(\m)}$ is built on a partially observed matrix $\Mp$, 
	Theorem \ref{thm1} gives the convergence rate $\frac{n^{-1/2}}{({b}_{\m}^2 - {b}_{\m+1}^2) }$
	which is the standard convergence rate for eigenvectors (\cite{anderson1958}). 
The effect of the partial observations appears in the denominator of the right-hand side of \eqref{Vconsist} as $\p$.
A similar discussion applies to $\Uhat^{(\m)}$ in \eqref{Uconsist}.
\end{remark}

The next theorem shows the asymptotic distribution of $\lamhat_i^2$ centered around $\lam_i^2$.
\begin{theorem} \label{thm2}
Suppose $\n\d^{-1}\to\infty$. Then, under the model setup in Section \ref{setup} and Assumption \ref{assume1}, we have
\begin{equation*}
\frac{\sum_{i=1}^{\m} {\lamhat}_i^2  - \sum_{i=1}^{\m} \lam^2_i }{ \sqrt{nd} \sigma_\lambda}  
		\,  \rightarrow  \, \mathcal{N}(0,1) \, \text{ in distribution, }\quad\text{as $\n$ and $\d$} \to \infty.
\end{equation*}
where
\begin{eqnarray*}
\sigma_\lambda^2
= 
\frac{4(1-\p)}{\p} \Bigg\{ \sum_{k=1}^\n \sum_{h=1}^\d {\M}_{kh}^2 \bigg( \sum_{i=1}^\m b_i\U_{ik}\V_{ih} \bigg)^2
	-\bigg(\sum_{i=1}^\m  b^2_i\bigg)^2 \Bigg\} + \frac{4\sig^2}{\p} \sum_{i=1}^\m b_i^2,
\end{eqnarray*}
$\U_{ik}$ is the $k$-th entry of $\U_i$, and $\V_{ih}$ is the $h$-th entry of $\V_i$.
\end{theorem}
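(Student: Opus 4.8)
The plan is to reduce $\sum_{i=1}^{\m}\lamhat_i^2-\sum_{i=1}^{\m}\lam_i^2$ to a sum of independent, mean‑zero random variables indexed by the matrix entries and then apply a Lyapunov central limit theorem. From \eqref{lambdahat}, $\sum_{i=1}^{\m}\lamhat_i^2=\phat^{-2}\big(\sum_{i=1}^{\m}\blam_i(\hat{\Sig}_{\phat})-\m\,\tauhatphat\big)$, and since $\trace(\diag(\hat{\Sig}))=\trace(\hat{\Sig})$ we have the exact identity $\trace(\hat{\Sig}_{\phat})=\phat\,\trace(\hat{\Sig})=\phat\,\norm{\Mp}_F^2$, hence $\tauhatphat=(\d-\rank)^{-1}\big(\phat\,\norm{\Mp}_F^2-\sum_{i=1}^{\rank}\blam_i(\hat{\Sig}_{\phat})\big)$, so the estimator is an explicit function of $\phat$, $\norm{\Mp}_F^2$, and the top eigenpairs of $\hat{\Sig}_{\phat}$. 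First I would replace the top‑$\m$ and top‑$\rank$ eigenvalue sums by population quadratic forms: by Ky Fan's maximum principle, for symmetric $A$ and orthonormal $Z\in\real^{\d\times\ell}$, $0\le\sum_{i=1}^{\ell}\blam_i(A)-\trace(Z^TAZ)\le 2\norm{A}_2\,\norm{\sin\Theta}_F^2$ where $\norm{\sin\Theta}_F^2$ is the squared sine‑distance between $\ran(Z)$ and the top‑$\ell$ eigenspace of $A$. Applying this with $A=\hat{\Sig}_{\phat}$, $Z=\V^{(\m)}$ and $Z=\V$, using $\norm{\hat{\Sig}_{\phat}}_2=O_P(\p^2\n\d)$ (from Proposition \ref{prop0}, Weyl's inequality, and the spectral‑norm bounds in the proof of Theorem \ref{thm1}) together with $\expect\norm{\sin(\Vhat^{(\m)},\V^{(\m)})}_F^2=O(\n^{-1}\p^{-1})$ from Theorem \ref{thm1} (and the analogue for $\rank$), the replacement error is $O_P(\p\d)$. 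Since $\n\d^{-1}\to\infty$, $\p\d=o(\p^2\sqrt{\n\d})$, so after dividing by $\phat^{2}\asymp\p^2$ these replacements cost only $o_P(\sqrt{\n\d})$, and
\begin{equation*}
\sum_{i=1}^{\m}\lamhat_i^2=W+o_P(\sqrt{\n\d}),\quad W:=\frac{1}{\phat^{2}}\Bigg[\sum_{i=1}^{\m}\V_i^T\hat{\Sig}_{\phat}\V_i-\frac{\m\,\phat}{\d-\rank}\norm{\Mp}_F^2+\frac{\m}{\d-\rank}\sum_{i=1}^{\rank}\V_i^T\hat{\Sig}_{\phat}\V_i\Bigg],
\end{equation*}
whose numerator $\phat^2W$, after substituting $\hat{\Sig}_{\phat}=\Mp^T\Mp-(1-\phat)\diag(\Mp^T\Mp)$ and $\Mp_{kh}=\y_{kh}({\M}_{kh}+\epsilon_{kh})$, is a polynomial of degree at most four in the independent variables $\{\y_{kh},\epsilon_{kh}\}$.

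Next I would expand $W$ to first order. Writing $\phat=\p+(\phat-\p)$ with $\phat-\p=O_P(\sqrt{\p/(\n\d)})$, Taylor‑expanding $\phat^{-2}$ and the factor $\phat$, and centering the quadratic forms $\V_i^T\Mp^T\Mp\V_i=\sum_k\big(\sum_h\y_{kh}({\M}_{kh}+\epsilon_{kh})\V_{ih}\big)^2$ and $\norm{\Mp}_F^2=\sum_{k,h}\y_{kh}({\M}_{kh}+\epsilon_{kh})^2$, one obtains $W=\sum_{i=1}^{\m}\lam_i^2+W_{\mathrm{lin}}+o_P(\sqrt{\n\d})$, where $W_{\mathrm{lin}}$ is a sum over $(k,h)$ of mean‑zero independent terms: one family proportional to $\y_{kh}-\p$ with coefficient built from ${\M}_{kh}$ and $({\M}^{(\m)})_{kh}:=\sum_{i=1}^{\m}\lam_i\U_{ik}\V_{ih}$, a second family proportional to $\y_{kh}\epsilon_{kh}$ with coefficient $({\M}^{(\m)})_{kh}$, plus smaller pieces. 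The negligibility of everything discarded --- the terms quadratic in $\epsilon$, the $(\phat-\p)^2$ term, and the off‑diagonal parts of $\diag({\M}^T{\M})$ --- would be checked using Assumption \ref{assume1}, the bound $|{\M}_{kh}|\le\L$, the sub‑Gaussian moments of $\epsilon$, and $\n\d^{-1}\to\infty$; here I would also verify $\expect W=\sum_{i=1}^{\m}\lam_i^2+o(\sqrt{\n\d})$, the exact cancellation of the $\n\sig^2$ contributions being forced by the definition of $\tauhatphat$ (it already holds at the level of Proposition \ref{prop0}).

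Then I would compute the variance. Splitting $\var(W_{\mathrm{lin}})$ into the contribution of $\{\y_{kh}\}$, with $\var\,\y_{kh}=\p(1-\p)$, and of $\{\y_{kh}\epsilon_{kh}\}$, with variance $\p\sig^2$, and dividing by $\phat^{4}\asymp\p^4$ yields the two stated pieces, $\frac{4(1-\p)}{\p}\big\{\sum_{k,h}{\M}_{kh}^2(\sum_i b_i\U_{ik}\V_{ih})^2-(\sum_i b_i^2)^2\big\}$ and $\frac{4\sig^2}{\p}\sum_i b_i^2$, up to a $1+o(1)$ factor, with the subtracted $(\sum_i b_i^2)^2$ coming from the covariance between the sampling family and the fluctuation of $\phat^{-2}$ (equivalently, of $\phat\,\norm{\Mp}_F^2$); hence $\var(W_{\mathrm{lin}})=\n\d\,\sigma_\lambda^2(1+o(1))$. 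Finally, since $W_{\mathrm{lin}}$ is a sum of independent mean‑zero variables whose absolute third moments are bounded uniformly (using $|{\M}_{kh}|\le\L$ and sub‑Gaussianity of $\epsilon$), Lyapunov's condition $\sum_{k,h}\expect|X_{kh}|^3=o\big((\n\d)^{3/2}\big)=o\big((\var W_{\mathrm{lin}})^{3/2}\big)$ holds, so $W_{\mathrm{lin}}/(\sqrt{\n\d}\,\sigma_\lambda)\to\mathcal N(0,1)$; combining this with the $o_P(\sqrt{\n\d})$ remainders via Slutsky's theorem finishes the proof.

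The main obstacle I anticipate is the reduction in the first paragraph: one must show the eigenvalue‑to‑quadratic‑form error is of the small order $\p\d$, rather than the crude spectral‑norm order $\norm{\hat{\Sig}_{\phat}-\expect\hat{\Sig}_{\p}}_2$, and this hinges on feeding the \emph{squared} $\sin\Theta$ distance of Theorem \ref{thm1} (whose rate already carries a factor $\n^{-1}$, not $\n^{-1/2}$) into the Ky Fan bound, which is exactly where $\n\d^{-1}\to\infty$ is used. A secondary difficulty is the variance bookkeeping: tracking every covariance among the sampling variables, the noise variables, and the estimated $\phat$ so that the cross‑terms assemble into precisely the stated $\sigma_\lambda^2$, in particular the cancellation producing the $-(\sum_i b_i^2)^2$ term.
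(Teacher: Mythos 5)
Your proposal is correct and reaches the same destination by essentially the same mechanism as the paper --- reduce the eigenvalue sums to quadratic forms at the population singular vectors (with error $2\|\hat{\Sig}_{\phat}\|_2\,\|\sin\Theta\|_F^2=O_p(\p\d)$, which is exactly the bound the paper gets for its terms of the form $\trace(\V_\p^{(\m)T}\hat\Sig_\p\V_\p^{(\m)})-\trace(\V^{(\m)T}\hat\Sig_\p\V^{(\m)})$ and for the replacement of $\Vhat$ by $\V_\p$), then linearize in the independent variables $\y_{kh}-\p$ and $\y_{kh}\epsilon_{kh}$, account for the fluctuation of $\phat$, and finish with a Liapunov CLT and Slutsky --- but it is organized differently in two genuine ways. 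First, you dispose of $\tauhatphat$ through the exact identity $\trace(\hat{\Sig}_{\phat})=\phat\,\|\Mp\|_F^2$, folding it into top-$\rank$ eigenvalue sums, whereas the paper treats it separately via the complement eigenvectors $\V_{\p c}$ (Proposition \ref{prop5}, showing $\tauhatp-\n\p^2\sig^2=O_p(\p\sqrt{\n})$); your route avoids any analysis of the $(\d-\rank)$-dimensional eigenspace at the price of needing Theorem \ref{thm1}-type control at level $\rank$ (available since $b_\rank>0$). Second, you absorb the $\phat^{-2}$ fluctuation by a direct Taylor expansion into a single univariate linear statistic, while the paper proves a bivariate CLT (Proposition \ref{lem8}, via Cram\'er--Wold) and then applies the delta method with coefficients $(1,-2\p^{-3})$; these are equivalent, since the paper's Cram\'er--Wold reduction is precisely your linear combination. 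Two points to tighten: (i) the $-\big(\sum_i b_i^2\big)^2$ term is the \emph{net} of the covariance between the sampling family and the $\phat$-term (contributing $-8(1-\p)\p^{-1}$) and the variance of the $\phat$-term itself (contributing $+4(1-\p)\p^{-1}$), so attributing it to the covariance alone would leave you a factor of two off unless both pieces are kept; (ii) the summands' third moments are not uniformly bounded --- the coefficients involve $\sum_{h'}\M_{kh'}\V_{ih'}=\lam_i\U_{ik}$, which can be as large as $C\sqrt{\d}$ --- so Liapunov's condition must be verified as in the paper's display \eqref{lem8:liapunov}, using $|\sum_{h'}\M_{kh'}\V_{ih'}|\le \L\sqrt{\d}$ (or $\sum_k|\U_{ik}|^3\le1$), which still gives $\sum_{k,h}\expect|X_{kh}|^3=O(\n\d^{3/2}/\p^2)=o\big((\var W_{\mathrm{lin}})^{3/2}\big)$; with these repairs your argument goes through under the stated hypotheses (fixed $\p$, $\n\d^{-1}\to\infty$).
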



The proof of this theorem is in Section \ref{proofs:thm2}.
\begin{remark} \label{rmk:asympDist}
As long as $\frac{\p\,\d}{\log\n} \to \infty$ and $\p \n \d^{-1} \to \infty$, the asymptotic normality result in Theorem \ref{thm2} will hold.
Hence, under the setting where $\p$ goes to zero, if $\d/\log\n$ and $\n/\d$ diverge fast enough that $\frac{\p\,\d}{\log\n} \to \infty$ and $\p\n/\d \to \infty$, 
	we can still obtain the results in Theorem \ref{thm2}.
\end{remark}
\begin{remark}
Theorem \ref{thm2} shows that the convergence rate of $\sum_{i=1}^{\m} {\lamhat}_i^2$  is $\sqrt{\n\d}$. 
Considering Assumption \ref{assume1}(1), it is an optimal rate.
However, since the results are based on partially observed entries, the asymptotic variance, $\sigma_\lambda^2$, increases with the rate $p^{-1}$.
For example, when we have a fully-observed matrix, 
	$\sigma_\lambda^2$ simply becomes $4\sig^2 \sum_{i=1}^\m b^2_i$
	which is a lower bound for $\sigma_\lambda^2$.
\end{remark}


One of the main purposes of this paper is to investigate asymptotic behaviors of the estimators of the singular values of $\M$.
An application of the proof of Theorem \ref{thm2} and the delta method provides a multivariate central limit theorem for $\hat \lambda_1, \dots, \hat \lambda_r$. 
\begin{theorem} \label{corol1}
Suppose that $${b_{i}} > {b_{i+1}} \;\text{ for all } i \in \{1, \ldots, \rank\} \quad\text{ and }\quad \n\d^{-1}\to\infty.$$ 
Then, under the model setup in Section \ref{setup} and Assumption \ref{assume1}, we have
\begin{equation*}
\Upsilon^{-1/2}
\begin{pmatrix}
\hat\lam_1-\lam_1\\ 
\vdots \\ 
\hat\lam_r-\lam_r
\end{pmatrix}
\,  \rightarrow  \, \mathcal{N}\(0, I_{\rank} \) \, \text{ in distribution, }\quad\text{as $\n$ and $\d$} \to \infty,
\end{equation*}
where $\Upsilon = \Upsilon^T \in \real^{\rank\times\rank}$ consists of 
\begin{eqnarray}\label{asympVar}
\Upsilon_{ij}
= 
\begin{cases}
\frac{(1-\p)}{\p} \( \sum_{k=1}^\n \sum_{h=1}^\d {\M}_{kh}^2 \U_{ik}^2\V_{ih}^2
	- b_i^2 \) + \frac{\sig^2}{\p} & \text{ if } i=j \\ 
\frac{(1-\p)}{\p}\(\sum_{k=1}^\n \sum_{h=1}^\d {\M}_{kh}^2 \U_{ik}\V_{ih}\U_{jk}\V_{jh} -b_ib_j \) & \text{ if } i\ne j .
\end{cases}
\end{eqnarray}
Thus, $| {\lamhat}_i - \lam_i |= O_p\(\frac{1}{\sqrt{\p}}\)$.
\end{theorem}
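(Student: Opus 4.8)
The plan is to reduce the multivariate CLT for $(\hat\lam_1,\dots,\hat\lam_r)$ to the scalar CLT of Theorem~\ref{thm2} via the Cram\'er--Wold device plus the delta method, so the strategy mirrors the structure already established for the one-dimensional case. First I would observe that Theorem~\ref{thm2}, as stated, gives asymptotic normality for the particular linear combination $\sum_{i=1}^{\m}\hat\lam_i^2$, but its proof in Section~\ref{proofs:thm2} should in fact yield the joint asymptotic normality of the vector $(\hat\lam_1^2-\lam_1^2,\dots,\hat\lam_r^2-\lam_r^2)/\sqrt{\n\d}$; the key point is that under the strict-separation hypothesis ${b_i}>{b_{i+1}}$ for all $i$, each eigenvalue $\lam_{\phat i}^2$ is simple (with gap of order $\n\d$), so each $\hat\lam_i^2$ has a valid first-order expansion $\lam_{\phat i}^2 = \bv_i(\hat\Sig_\p)^T\hat\Sig_{\phat}\bv_i(\hat\Sig_\p) \approx \V_i^T\hat\Sig_{\phat}\V_i$ up to negligible remainder, exactly as in the proof of Theorem~\ref{thm2}. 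Each of these $r$ quantities is a sum of independent (over the $(k,h)$ cells) contributions after the expansion, so the joint CLT follows by applying the Lindeberg CLT to an arbitrary linear combination $\sum_i t_i(\hat\lam_i^2-\lam_i^2)$ and reading off the limiting covariance.

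Next I would compute the limiting covariance matrix $\Sigma_{\lambda^2}$ of the vector $\sqrt{\n\d}^{-1}(\hat\lam_i^2-\lam_i^2)_i$. From the expansion, $\hat\lam_i^2-\lam_i^2$ is, to leading order, $\frac1{\p^2}\big(\V_i^T(\hat\Sig_{\phat}-\expect\hat\Sig_\p)\V_i\big)$ plus the correction coming from replacing $\p$ by $\phat$ in the diagonal debiasing and from the $\tauhatphat$ term; the paper's Theorem~\ref{thm2} variance formula $\sigma_\lambda^2$ records exactly this for the trace-type combination with all $t_i=1$ over $i\le\m$, and the same algebra with general coefficients $t_i$ produces a quadratic form whose $(i,j)$ entry is
\begin{equation*}
\Sigma_{\lambda^2,ij} = \frac{4(1-\p)}{\p}\Big(\sum_{k=1}^\n\sum_{h=1}^\d {\M}_{kh}^2\,\U_{ik}\V_{ih}\U_{jk}\V_{jh} - b_i^2 b_j^2\Big) + \frac{4\sig^2}{\p}\,b_i^2\,\I\{i=j\}.
\end{equation*}
Here the cross term $-b_i^2 b_j^2$ arises because $\V_i^T(\M^T\M)\V_i=\lam_i^2=b_i^2\n\d$ is the mean subtracted off, the $\M_{kh}^2$-weighted sum comes from the variance of the bilinear form in the $\phat$-debiased sampling noise, and the $\sig^2$ term on the diagonal is the contribution of the sub-Gaussian noise $\epsilon$; the factor $4$ is the usual one from differentiating $\lam^2$.

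Finally I would apply the delta method to the smooth map $(x_1,\dots,x_r)\mapsto(\sqrt{x_1},\dots,\sqrt{x_r})$ evaluated at $(\lam_1^2,\dots,\lam_r^2)$, whose Jacobian is $\diag\big(1/(2\lam_1),\dots,1/(2\lam_r)\big)$. Since $\lam_i=b_i\sqrt{\n\d}$, conjugating $\Sigma_{\lambda^2}/(\n\d)$ by this Jacobian turns the $\sqrt{\n\d}$ scaling into the finite limit $\Upsilon$ with $\Upsilon_{ij}=\Sigma_{\lambda^2,ij}/(4 b_i b_j)$, which is precisely the stated expression in \eqref{asympVar}; then $\Upsilon^{-1/2}(\hat\lam-\lam)\to\mathcal N(0,I_r)$ provided $\Upsilon$ is nonsingular, which holds under the strict separation assumption together with the boundedness $\frac1c\le b_i\le c$. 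The order statement $|\hat\lam_i-\lam_i|=O_p(\p^{-1/2})$ then drops out from $\Upsilon_{ii}=O(1/\p)$ and the asymptotic normality. The main obstacle I anticipate is justifying rigorously that the first-order expansion of each simple eigenvalue $\lam_{\phat i}^2$ is valid \emph{simultaneously} for all $i=1,\dots,r$ with a uniform bound on the second-order remainder---this is where Assumption~\ref{assume1}, the eigengap bounds of order $\n\d$, the spectral-norm control of $\hat\Sig_{\phat}-\expect\hat\Sig_\p$ (of order $\sqrt{\n\d\,/\p}$ up to logs, as used for Theorem~\ref{thm1}), and the condition $\n/\d\to\infty$ all have to be combined---but since the scalar case in the proof of Theorem~\ref{thm2} already carries out essentially this argument, here it is mostly a matter of bookkeeping across the $r$ coordinates rather than a genuinely new difficulty.
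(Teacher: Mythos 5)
Your route is the same one the paper takes: its proof of Theorem \ref{corol1} is precisely ``re-run the proof of Theorem \ref{thm2} (Lemma \ref{prop2} and Proposition \ref{lem8}) for general linear combinations of the $\hat\lam_i^2$, keeping the $\phat$- and $\tauhatphat$-corrections, then apply the delta method with Jacobian $\diag\big(1/(2\lam_1),\ldots,1/(2\lam_\rank)\big)$,'' so there is no methodological difference, and your plan for controlling the remainders (eigengaps of order $\n\d$, spectral-norm bounds, $\n\d^{-1}\to\infty$) is the right bookkeeping.

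Two concrete corrections to the execution, though. First, your displayed intermediate covariance is inconsistent: expanding the paper's $\sigma_\lambda^2$, which contains $\sum_{k,h}\M_{kh}^2\big(\sum_{i\le\m} b_i\U_{ik}\V_{ih}\big)^2$, the $(i,j)$ entry of the limiting covariance of $\frac{1}{\sqrt{\n\d}}\big(\hat\lam_i^2-\lam_i^2\big)_{i\le\rank}$ must carry the weights $b_ib_j$, namely $\frac{4(1-\p)}{\p}\big(b_ib_j\sum_{k,h}\M_{kh}^2\U_{ik}\V_{ih}\U_{jk}\V_{jh}-b_i^2b_j^2\big)+\frac{4\sig^2}{\p}\,b_i^2\,\I_{(i=j)}$. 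With the entry you wrote (no $b_ib_j$ factor on the weighted sum), dividing by $4b_ib_j$ does not reproduce \eqref{asympVar}, and summing over $i,j\le\m$ does not recover $\sigma_\lambda^2$, unless all $b_i=1$; with the corrected entry both checks go through. Second, the $-b_i^2b_j^2$ term does not come from ``subtracting the mean'' $\V_i^T\M^T\M\V_i$ of the bilinear form: in the paper's argument $\Gamma_{11}$ of Proposition \ref{lem8} has no such term, and it is generated by the correlation between the leading term $\p^{-2}\lam_{\p i}^2$ and the $\phat$-correction $(\phat^{-2}-\p^{-2})\,\p^2(\lam_i^2+\n\sig^2)$, i.e.\ by the contrast $(1,\,-2\p^{-3})$ applied to $\Gamma_{\n\d}$. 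Tracking only the bilinear form would therefore miss it; since you explicitly plan to carry the $\phat$ and $\tauhatphat$ corrections through the Cram\'er--Wold computation, this is a fixable attribution error rather than a missing idea. A last small point: invertibility of $\Upsilon$ is not a consequence of the eigengap condition; it follows (for $\sig^2>0$) because the matrix with entries $\sum_{k,h}\M_{kh}^2\U_{ik}\V_{ih}\U_{jk}\V_{jh}-b_ib_j$ is positive semidefinite by Cauchy--Schwarz, so that $\Upsilon\succeq(\sig^2/\p)\,I_\rank$.
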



\begin{remark}\label{rmk:thm3}
As in case of Theorem \ref{thm2} (see Remark \ref{rmk:asympDist}), 
	as long as $\frac{\p\,\d}{\log\n} \to \infty$ and $\p \n \d^{-1} \to \infty$, the asymptotic normality result in Theorem \ref{corol1} will hold.
Note that Theorems \ref{thm2} and \ref{corol1} require an additional condition, $\p \n \d^{-1} \to \infty$, to the condition required for Theorem \ref{thm1}, $\frac{\p\,\d}{\log\n} \to \infty$.
Under the setting where $\p$ is a constant, this additional condition implies that $\d/\n$ has to go to zero.
The rationale behind this is as follows. 
In Theorems \ref{thm2} and \ref{corol1}, 
	we find the limiting distribution on the singular values of $\M$ from a $\d\times \d$ matrix $\hat{\Sig}_{\phat}$, 
	while the total number of observations is $\n\d$.
That is, the size of our parameter space is $\d^2$ and 
	the total amount of information we can use to find asymptotic properties on the parameters is $\n\d$.
Since our observations are even noisy, we need an enough number of observations to achieve our goal.   
When $\d/\n \to 0$, 
	we can make the approximation errors in the singular values of $\hat{\Sig}_{\phat}$ negligible and 
	find the limiting distribution on the singular values of $\M$.
\end{remark}
\begin{remark}
The results of Theorems \ref{thm2} and \ref{corol1} help us to make statistical inference on the singular values of $\M$. 
For example, they open up possibilities for us to evaluate 
	how many factors are significant or how influential each factor is,
	by providing the distribution of the singular values.
\end{remark}


Theorems \ref{thm1}-\ref{corol1} show that the proposed estimators for $\U, \V, $ and $\lam_i$'s are asymptotically unbiased and have optimal convergence rates. 
With these well-developed estimators for the singular values and vectors of $\M$, 
the following section proposes a consistent estimator of $\M$. 


\subsection{A consistent estimator of $\M$} \label{consist}

Suppose that ${b_{i}} > {b_{i+1}}$ for all $i =1, \ldots,\rank$. 
Theorem \ref{thm1} and \eqref{sineDist} imply that $\Vhat_i$ and $\Uhat_i$ can estimate $\V_i$ and $\U_i$ 
	up to constant factors $\text{sign}(\langle \Vhat_i, \V_i \rangle)$ and $\text{sign}(\langle \Uhat_i, \U_i \rangle)$, respectively.
Let $\s_0 = (\s_{01}, \ldots, \s_{0r} ) \in \{-1, 1\}^\rank$ be 
\begin{equation} \label{s0}
\s_{0i} = \text{sign}(\langle \Vhat_i, \V_i \rangle)\,  \text{sign}(\langle \Uhat_i, \U_i \rangle) 
\quad\text{for}\quad i \in \{1, \ldots, \rank\}.
\end{equation} 
Then, $\Mhat (\s_0) = \sum_{i=1}^\rank \s_{0i} \, \lamhat_i \Uhat_i \Vhat_i^T$
becomes a consistent estimator of $\M$.
However, since $\s_0$ is an unknown parameter in practice, 
	we employ $\shat = (\shat_1, \ldots, \shat_\rank) \in \{-1, 1\}^\rank$ as an estimator of $\s_0$; 
\begin{equation} \label{shat}
\shat = \argmin_{\s \in \{-1, 1\}^\rank} \; \smallnorm{ \P\big(\Mhat(\s)\big) - \P\big(\Mp\big) }_F^2,
\end{equation} 
where $\Omega$ contains indices of the observed entries, $y_{kh}=1 \Leftrightarrow (k,h) \in \Omega$, and
$\P(A)$ for any $A\in\real^{\n\times\d}$ denotes the projection of $A$ onto $\Omega$,
$$\P(A)_{kh} = \left\{\begin{matrix}
A_{kh} & \text{if } (k,h) \in \Omega\\ 
0 & \text{if } (k,h) \notin \Omega
\end{matrix}\right.
\quad\text{ for } 1\le k\le \n \text{ and }1\le h\le \d.$$
Hence, the proposed estimator of $\M$ is 
\begin{equation} \label{Mhats}
\Mhat (\shat) = \sum_{i=1}^\rank \shat_{i} \, \lamhat_i \Uhat_i \Vhat_i^T.
\end{equation}


\begin{remark}
Finding $\shat$ as in \eqref{shat} requires $2^\rank$ computations. 
Hence, it can be a computational bottleneck or even impossible for a large $\rank$.
In such cases, we suggest an alternate way to find $\shat$ as follows;
$$\shat_{i}^{\;alternate} = \text{sign}(\langle \Vhat_i, \mathbf{v}_i(\Mp) \rangle) \; \text{sign}(\langle \Uhat_i, \mathbf{u}_i(\Mp) \rangle) \quad\text{for}\;\; i=1,\ldots,\rank.$$ 
Note that if we use $\V_i$ and $\U_i$ instead of $\mathbf{v}_i(\Mp)$ and $\mathbf{u}_i(\Mp)$, this gives us the true sign $\s_0$ in \eqref{s0}.
\end{remark}


In the following we show that $\Mhat(\shat)$ is a consistent estimator of $\M$ under certain conditions.
The steps to compute $\Mhat(\shat)$ using $\{ \Vhat_i,\Uhat_i,\lamhat_i\}_{i=1}^{\rank}$ from Algorithm \ref{alg:UhatVhatLamhat} 
	are summarized in Algorithm \ref{alg:Mhat}.
\begin{algorithm}[ht]
\caption{\;Estimator of $\M$}
\begin{algorithmic}
    \Require{$\Vhat_i$, $\Uhat_i$, and $\lamhat_i$ for $i=1,\ldots, \rank$}
    \State 
        $\shat \gets \argmin_{s \in \{-1,1\}^\rank} \norm{ \P\big( \sum_{i=1}^\rank s_i \lamhat_i \Uhat_i \Vhat_i^T\big) - \P\big(\Mp\big) }_F^2$
    \State $\Mhat(\shat) \gets \sum_{i=1}^\rank \shat_i \lamhat_i \Uhat_i \Vhat_i^{T}$
    \\
    \Return{$\Mhat(\shat)$}
\end{algorithmic}
\label{alg:Mhat}
\end{algorithm}

\begin{assumption}\label{assume2}
~
\begin{enumerate}
	\item [(1)] $\lim_{\n\to\infty,\d\to\infty} \prob\Big(\min_{\s \in \{-1, 1\}^\rank} \; \smallnorm{ \P\big(\Mhat(\s)\big) - \P\big(\Mp\big) }_F^2\\$
		$\hspace*{5.5cm} < \smallnorm{ \P\big(\Mhat(\s_0)\big) - \P\big(\Mp\big) }_F^2 \Big) = 0$;
	\item [(2)] ${b_{i}} > {b_{i+1}}$ for all $i =1, \ldots,\rank$.
\end{enumerate}
\end{assumption}
\begin{remark} 
When the rank $\rank$ is 1, it is more straightforward to understand Assumption \ref{assume2}(1). 
Assuming that $\s_0=1$, it means that 
\begin{eqnarray*}
&&\lim_{\n\to\infty,\d\to\infty} \prob\bigg(\smallnorm{ \P\big(-\lamhat \Uhat \Vhat^T\big) - \P\big(\Mp\big) }_F^2 
\cr
&&\hspace*{5.5cm}
	< \smallnorm{ \P\big(\lamhat \Uhat \Vhat^T\big) - \P\big(\Mp\big) }_F^2 \bigg) = 0.
\end{eqnarray*} 
That is, the probability that $\shat$ picks a different sign than the true sign $\s_0=1$ goes to zero with the dimensionality.
Given the asymptotic properties of our estimators $\lamhat, \Uhat$, and $\Vhat$, this is not an unreasonable assumption to make.
\end{remark} 


\begin{theorem} \label{thm3}
Under the model setup in Section \ref{setup} and Assumptions \ref{assume1}-\ref{assume2}, 
for any given $\eta >0$, there exists a constant $C_\eta >0$ such that for sufficiently  large $\n$, 
$$ \prob \( \frac{\p\,{b}_{\rank}^4}{\n} \norm{\Mhat(\shat)-\M}_F^2 \ge C_\eta \) \le \eta . $$
Or alternatively, 
$$\smallnorm{ \Mhat(\shat) - \M }_F^2 = \frac{1}{\p\,{b}_{\rank}^4}\,o_p\(h_\n \n\),  $$
where  $h_\n$ can be anything that diverges very slowly with the dimensionality, for example, $\log (\log \d )$.
\end{theorem}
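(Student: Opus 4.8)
The plan is to reduce to the event $\{\shat=\s_0\}$ and then expand $\Mhat(\s_0)-\M$ along singular vectors and values. By Assumption \ref{assume2}(1), with probability tending to one $\s_0$ attains the minimum in \eqref{shat}, and since the $2^\rank$ objective values $\smallnorm{\P\big(\Mhat(\s)\big)-\P\big(\Mp\big)}_F^2$ are almost surely distinct this forces $\shat=\s_0$; hence for large $\n$ it suffices to bound $\smallnorm{\Mhat(\s_0)-\M}_F^2$, at the cost of an extra $\eta/4$ in the probability. Put $\tilde{\U}_i=\text{sign}(\langle\Uhat_i,\U_i\rangle)\Uhat_i$ and $\tilde{\V}_i=\text{sign}(\langle\Vhat_i,\V_i\rangle)\Vhat_i$, which absorb $\s_{0i}$ so that $\Mhat(\s_0)=\sum_{i=1}^\rank\lamhat_i\,\tilde{\U}_i\tilde{\V}_i^T$ while $\tilde{\U}=(\tilde{\U}_1,\dots,\tilde{\U}_\rank)$, $\tilde{\V}=(\tilde{\V}_1,\dots,\tilde{\V}_\rank)$ still have orthonormal columns and $\langle\tilde{\U}_i,\U_i\rangle\ge0$, $\langle\tilde{\V}_i,\V_i\rangle\ge0$. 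I would then expand $\lamhat_i\tilde{\U}_i\tilde{\V}_i^T-\lam_i\U_i\V_i^T=(\lamhat_i-\lam_i)\tilde{\U}_i\tilde{\V}_i^T+\lam_i\tilde{\U}_i(\tilde{\V}_i-\V_i)^T+\lam_i(\tilde{\U}_i-\U_i)\V_i^T$, sum over $i$, and take $\smallnorm{\cdot}_F^2$; orthonormality of the columns of $\tilde{\U}$ and $\tilde{\V}$ kills the cross terms within each of the three resulting sums, giving
\begin{equation*}
\smallnorm{\Mhat(\s_0)-\M}_F^2\;\le\;3\sum_{i=1}^\rank(\lamhat_i-\lam_i)^2\;+\;3\sum_{i=1}^\rank\lam_i^2\,\smallnorm{\tilde{\V}_i-\V_i}_2^2\;+\;3\sum_{i=1}^\rank\lam_i^2\,\smallnorm{\tilde{\U}_i-\U_i}_2^2 .
\end{equation*}
Since $\lam_i^2=b_i^2\,\n\d\le c^2\n\d$ by Assumption \ref{assume1}(1), everything reduces to bounding these three sums.

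For the eigenvector sums, the idea is to localize the subspace bound of Theorem \ref{thm1} to each column. Under Assumption \ref{assume2}(2) the gap $b_m^2-b_{m+1}^2$ (with $b_{\rank+1}=0$) is a positive constant for every $m\in\{1,\dots,\rank\}$, so \eqref{Vconsist} and \eqref{Uconsist} apply at every level $m$. Setting $\hat{P}^{(m)}:=\Vhat^{(m)}(\Vhat^{(m)})^T$ and $P^{(m)}:=\V^{(m)}(\V^{(m)})^T$, one has $\tilde{\V}_i\tilde{\V}_i^T=\hat{P}^{(i)}-\hat{P}^{(i-1)}$ and $\V_i\V_i^T=P^{(i)}-P^{(i-1)}$, so by the triangle inequality, the identity $\smallnorm{\hat{P}^{(m)}-P^{(m)}}_F^2=2\smallnorm{\sin(\Vhat^{(m)},\V^{(m)})}_F^2$, and $\smallnorm{\tilde{\V}_i-\V_i}_2^2\le\smallnorm{\tilde{\V}_i\tilde{\V}_i^T-\V_i\V_i^T}_F^2$ (valid because $\langle\tilde{\V}_i,\V_i\rangle\ge0$), Theorem \ref{thm1} at levels $i$ and $i-1$ yields $\expect\smallnorm{\tilde{\V}_i-\V_i}_2^2\le C\,\n^{-1}/\p$ and, symmetrically, $\expect\smallnorm{\tilde{\U}_i-\U_i}_2^2\le C\,\d^{-1}/\p$, with $C$ depending only on the $b_i$. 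Hence $\expect\big[\sum_i\lam_i^2\smallnorm{\tilde{\V}_i-\V_i}_2^2\big]\le C'\d/\p$ and $\expect\big[\sum_i\lam_i^2\smallnorm{\tilde{\U}_i-\U_i}_2^2\big]\le C'\n/\p$, so after multiplying by $\p\,b_\rank^4/\n$ and using $\d\le\n$ and $b_\rank\le c$, both are $O(1)$. For the singular-value sum, $\sum_i(\lamhat_i-\lam_i)^2=O_p(1/\p)$ by Theorem \ref{corol1}; even without assuming $\n\d^{-1}\to\infty$ here, the cruder bound $\sum_i(\lamhat_i-\lam_i)^2=O_p(\n/\p)$ --- obtained from $\lamhat_i+\lam_i\gtrsim\sqrt{\n\d}$, Weyl's inequality, and the spectral-norm concentration of $\Mp-\p\M$ (hence of $\hat{\Sig}_{\phat}-\expect\hat{\Sig}_\p$) established inside the proofs of Theorems \ref{thm1}--\ref{thm2} --- still suffices, since then $\tfrac{\p\,b_\rank^4}{\n}\sum_i(\lamhat_i-\lam_i)^2=O_p(b_\rank^4)=O_p(1)$. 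Markov's inequality (and, for the last sum, tightness) thus controls $\tfrac{\p\,b_\rank^4}{\n}$ times each of the three sums with probability at least $1-\eta/4$; together with $\prob(\shat\ne\s_0)\le\eta/4$ for large $\n$, a union bound produces the constant $C_\eta$, and the displayed $o_p$ form follows by letting $C_\eta$ diverge as slowly as $h_\n$.

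The main obstacle is the middle step. Theorem \ref{thm1} controls only the distance between the top-$\rank$ subspaces, whereas the decomposition above needs control of the individual columns $\tilde{\V}_i-\V_i$ and $\tilde{\U}_i-\U_i$; inside the top-$\rank$ eigenspace those columns are pinned down only up to a rotation, so the subspace bound by itself is not enough. It is precisely Assumption \ref{assume2}(2) (all $b_i$ distinct, hence eigengaps of order $\n\d$) that removes this ambiguity, and the telescoping identity $\tilde{\V}_i\tilde{\V}_i^T=\hat{P}^{(i)}-\hat{P}^{(i-1)}$ is what converts the level-$i$ and level-$(i-1)$ forms of Theorem \ref{thm1} into a column-wise bound. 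A lesser but still real technical point is obtaining some usable bound on $\lamhat_i-\lam_i$ under the hypotheses of this theorem (which omit $\n\d^{-1}\to\infty$), for which one re-uses the concentration estimates appearing in the proofs of Theorems \ref{thm1}--\ref{thm2} rather than their stated conclusions.
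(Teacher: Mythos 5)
Your overall architecture is the same as the paper's: reduce to the event $\{\shat=\s_0\}$ via Assumption \ref{assume2}(1), then bound $\smallnorm{\Mhat(\s_0)-\M}_F^2$ by splitting $\lamhat_i\tilde{\U}_i\tilde{\V}_i^T-\lam_i\U_i\V_i^T$ into a singular-value term and two singular-vector terms; this is exactly the content of the paper's Proposition \ref{prop4}. Your column-wise treatment of the vectors, via the telescoping identity $\tilde{\V}_i\tilde{\V}_i^T=\hat{P}^{(i)}-\hat{P}^{(i-1)}$ so that Theorem \ref{thm1} at levels $i$ and $i-1$ controls each $\smallnorm{\tilde{\V}_i-\V_i}_2$ under Assumption \ref{assume2}(2), is in fact a more explicit justification of a step the paper only cites Theorem \ref{thm1} for, and that part of your argument is fine.

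The gap is in the singular-value term. You are right that Theorem \ref{corol1} cannot be invoked (it assumes $\n\d^{-1}\to\infty$, which Theorem \ref{thm3} does not), but the fallback you propose does not deliver what you claim. The ``cruder bound'' $\sum_i(\lamhat_i-\lam_i)^2=O_p(\n/\p)$ is not obtainable from Weyl's inequality together with the spectral-norm concentration established in the paper: Lemma \ref{lem9} (and Lemmas \ref{lem6}, \ref{lem4}) are high-probability bounds carrying $\log\n$ factors, so that route gives only $|\lamhat_i-\lam_i|=O_p\big(\sqrt{\n\log\n/\p}\big)$ (and, if one works with $\hat{\Sig}_{\phat}-\expect\hat{\Sig}_\p$ directly, an additional term of order $\sqrt{\n}\,\log\n/(\p\sqrt{\d})$). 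Consequently $\frac{\p\,b_\rank^4}{\n}\sum_i(\lamhat_i-\lam_i)^2=O_p(\log\n)$, which is neither $O_p(1)$ nor $o_p(h_\n)$ for $h_\n=\log(\log\d)$, so neither displayed conclusion of the theorem follows from your argument in regimes such as $\d\asymp\n$ or $\log\n\gg\sqrt{\d}$ (allowed by Assumption \ref{assume1}(3)). What the paper does instead, inside Proposition \ref{prop4}, is re-run the trace/quadratic-form expansion from the proof of Theorem \ref{thm2} (Lemma \ref{prop2} and the bounds on its terms): since only quantities of the form $\V^{(\m)T}(\hat{\Sig}_\p-\expect\hat{\Sig}_\p)\V^{(\m)}$ are needed rather than an operator norm, one obtains $|\lamhat_i-\lam_i|=O_p\big(\p^{-1/2}+\p^{-1}\sqrt{\d/\n}\big)$ with no condition $\n\d^{-1}\to\infty$ and no logarithmic loss, which makes the singular-value contribution negligible after the $\p\,b_\rank^4/\n$ normalization. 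Your crude route could only be rescued by a log-free bound such as $\expect\norm{\Mp-\p\M}_2\le C\sqrt{\p\,\n}$, which is standard random-matrix material but is not established anywhere in the paper; as written, this step is a genuine gap.
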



The proof of this theorem is in Section \ref{proofs:thm3}.


\begin{remark}
As in case of Theorem \ref{thm1} (see Remark \ref{rmk2}), as long as $\frac{\p\,\d}{\log\n} \to \infty$, the convergence rates in Theorem \ref{thm3} will hold. 
If we let $\p=\frac{N}{\n\d}$ so that $N$ represents the number of observed entries in the population sense, 
	this condition implies that $\frac{N}{\n\log\n}\rightarrow\infty$. 
Therefore, for $\Mhat(\shat)$ to be consistent, the number of observed entries should increase at a faster rate than $\n\log\n$.
This is a comparable result to Theorem 1 in \cite{candes2010plan}.
\end{remark}
\begin{remark}
The additional condition, $\p \n \d^{-1} \to \infty$, required for Theorems \ref{thm2} and \ref{corol1} (see Remarks \ref{rmk:asympDist} and \ref{rmk:thm3}), 
	is not needed for Theorems \ref{thm1} and \ref{thm3}.
It means that if $\p$ is a constant, even though $\d/\n\rightarrow c$ for some $0<c\le1$ or $\d\le\n$, the results in  Theorems \ref{thm1} and \ref{thm3} will still hold, but the results in Theorems \ref{thm2} and \ref{corol1} will not.
\end{remark}
\begin{remark}
Theorem \ref{thm3} shows that 
$\frac{1}{nd} \smallnorm{\Mhat(\shat)-\M}_F^2$ is bounded by $Cp^{-1}d^{-1}$ for some constant $C>0$.
Under the setting where the rank of $\M$ is fixed as in this paper, 
	this is matched to the minimax lower bound in Theorems 5-7 (\cite{koltchinskii2011}). 
The previous estimators that obtain the minimax rate are computed 
	via semidefinite programs that require iterating over several SVDs.
However, the proposed estimator is a non-iterative algorithm. 
\end{remark}
\begin{remark}
\cite{chatterjee2014} established the minimax error rate for estimators of a general class of noisy incomplete matrices which extend beyond low rank matrix completion. 
In the regime studied herein, the convergence rate of our estimator of $\M$ is faster than the convergence rate in Theorem 2.1 (\cite{chatterjee2014}).
This is likely because we consider a smaller class of matrices, where the singular values of a low rank matrix have the divergence rate $\sqrt{nd}$ (Assumption \ref{assume1}(1)).
Remark \ref{rmk:justfyND} justifies this assumption in the setting of low rank matrix completion. 
\end{remark}


Throughout this paper, we have assumed that the rank, $\rank$, of $\M$ is known. 
However, it is an unknown parameter and needs to be estimated.
The following lemma proposes an estimator of $\rank$ and shows its consistency.


\begin{lemma} \label{thm5}
Let $C_\d>0$ such that $C_\d/\d\rightarrow 0$ and $C_\d\rightarrow \infty$, for example, $C_\d=c\log\d$ for any $c>0$. 
Also, let $\,\hat\rank = \big|\{ i\in \{1,\ldots,\d\}\; |\; \lam_{\phat i}^2 \,\ge\, \p^2\n\, C_\d \}\big|$ where $\lam_{\phat i}^2$ is defined in \eqref{SigpphatDecompose}. 
Then, for any given $\delta>0$, we have
$$\prob(\hat\rank = \rank)=1-O(\n^{-\delta}).$$
\end{lemma}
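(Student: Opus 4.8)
The plan is to convert the event $\{\hat\rank=\rank\}$ into a statement about the gap between the $\rank$-th and $(\rank+1)$-th eigenvalues of $\hat{\Sig}_{\phat}$, control that gap with Weyl's perturbation inequality together with the population eigenvalue formula of Proposition \ref{prop0}, and feed in an operator-norm deviation bound for $\hat{\Sig}_{\phat}-\expect\hat{\Sig}_\p$ that is parallel to what already underlies the proof of Theorem \ref{thm1}.

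First I would order the eigenvalues $\lam_{\phat 1}^2\ge\cdots\ge\lam_{\phat\d}^2$ of $\hat{\Sig}_{\phat}$. Since $\rank\ll\d$ and $\hat\rank$ counts those exceeding the threshold $\p^2\n C_\d$, one has $\{\hat\rank\ne\rank\}\subseteq\{\lam_{\phat\rank}^2<\p^2\n C_\d\}\cup\{\lam_{\phat,\rank+1}^2\ge\p^2\n C_\d\}$, so it suffices to bound each of these two probabilities by $O(\n^{-\delta})$. By Proposition \ref{prop0} the $i$-th eigenvalue of $\expect\hat{\Sig}_\p$ equals $\p^2(b_i^2\n\d+\n\sig^2)$ for $i\le\rank$ and $\p^2\n\sig^2$ for $i>\rank$, and Weyl's inequality bounds the difference between $\lam_{\phat i}^2$ and the $i$-th eigenvalue of $\expect\hat{\Sig}_\p$ by $\Delta:=\|\hat{\Sig}_{\phat}-\expect\hat{\Sig}_\p\|_2$, for every $i$. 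Using $b_\rank\ge 1/c$ and $C_\d=o(\d)$, the first event lies (for large $\d$) in $\{\Delta\ge\tfrac{1}{2c^2}\p^2\n\d\}$; using $C_\d\to\infty$, the second lies in $\{\Delta\ge\tfrac12\p^2\n C_\d\}$; and since $C_\d=o(\d)$ the latter threshold is the smaller one. Hence $\prob(\hat\rank\ne\rank)\le 2\,\prob(\Delta\ge\tfrac12\p^2\n C_\d)$.

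It then remains to show $\prob(\Delta\ge\tfrac12\p^2\n C_\d)=O(\n^{-\delta})$. Setting $E:=\Mp-\p\M$, so $\expect E=0$, and expanding $\Mp^T\Mp=(\p\M+E)^T(\p\M+E)$, one obtains
\begin{equation*}
\hat{\Sig}_{\phat}-\expect\hat{\Sig}_\p=(\phat-\p)\diag(\hat{\Sig})+(\hat{\Sig}-\expect\hat{\Sig})-(1-\p)\diag(\hat{\Sig}-\expect\hat{\Sig}),\qquad \hat{\Sig}-\expect\hat{\Sig}=\p(\M^TE+E^T\M)+(E^TE-\expect E^TE).
\end{equation*}
I would bound the summands separately, each on an event of probability $1-O(\n^{-\delta})$: $|\phat-\p|$ by a Bernstein bound on the $\mathrm{Binomial}(\n\d,\p)$ count and $\|\diag(\hat{\Sig})\|_2=\max_h\sum_k\Mp_{kh}^2$ by a sub-exponential tail bound with a union bound over $h\le\d$; $\p\|\M^TE\|_2\le\p\lam_1\norm{\U^TE}_F$ using $\M^T=\V\Lam\U^T$, $\lam_1\asymp\sqrt{\n\d}$, $\rank$ fixed, $\|\U_i\|_2=1$ and $\|\M\|_\infty\le\L$ (so that $\expect\norm{\U^TE}_F^2\lesssim\p\d$, then concentrating); and $\norm{E^TE-\expect E^TE}_2$ by matrix Bernstein applied to the sum over $k$ of the independent rank-one matrices built from the rows of $E$, whose sizes and matrix variance are controlled by $\|\M\|_\infty\le\L$ and the sub-Gaussianity of $\epsilon$. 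The hypothesis $\n\le e^{\d^\alpha}$ of Assumption \ref{assume1}(3) is what lets the union bounds over the $\d$ coordinates be absorbed into a factor $\n^{-\delta}$ once the constants are taken proportional to $\delta$. Combining yields, with probability $1-O(\n^{-\delta})$, a bound of the form $\Delta\le C\p^{3/2}\d\sqrt{\n\log\n}$ up to lower-order terms; these estimates are of the same nature as those in Section \ref{proofs:thm1} and can be cited rather than redone.

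To conclude, this bound is $o(\p^2\n\d)$ under the standing assumptions, and it is $o(\p^2\n C_\d)$ as soon as $C_\d$ grows fast enough that $\d^2\log\n=o(\p\n C_\d^2)$; the example $C_\d=c\log\d$ meets this when $\d$ does not grow too fast relative to $\n$, and a larger admissible choice such as $C_\d=\sqrt{\d}$ works more broadly. For such $C_\d$ we get $\prob(\Delta\ge\tfrac12\p^2\n C_\d)=O(\n^{-\delta})$, which with the reduction above gives $\prob(\hat\rank=\rank)=1-O(\n^{-\delta})$. I expect the deviation step to be the main obstacle: whereas in Theorem \ref{thm1} the eigengap one must beat is of order $\p^2\n\d$, here the relevant gap on the noise-floor side is only $\p^2\n C_\d$, a factor $C_\d/\sig^2$ above the population noise eigenvalue $\p^2\n\sig^2$, so the operator-norm deviation of $\hat{\Sig}_{\phat}$ must be pinned down sharply \emph{and}, simultaneously, with polynomial-in-$\n$ tails, and it is this tension that dictates how fast $C_\d$ may be allowed to grow.
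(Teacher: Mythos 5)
Your proposal follows essentially the same route as the paper's proof: reduce $\{\hat\rank\ne\rank\}$ to deviations of $\lam_{\phat \rank}^2$ and $\lam_{\phat\,\rank+1}^2$ from the population eigenvalues $\p^2(\lam_\rank^2+\n\sig^2)$ and $\p^2\n\sig^2$ given by Proposition \ref{prop0}, control these via Weyl's theorem by $\norm{\hat{\Sig}_{\phat}-\expect\hat{\Sig}_{\p}}_2$, and bound that operator norm with polynomial-in-$\n$ tails using exactly the estimates behind Lemmas \ref{lem4} and \ref{lem6}. Your closing caveat about how fast $C_\d$ must grow relative to the deviation bound (so that it is $o(\p^2\n C_\d)$) is in fact a more careful treatment of the final comparison than the paper's own one-line conclusion, but it does not change the argument's structure.
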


The proof of this lemma is in Appendix \ref{apdx4}.

\begin{remark}
Empirically to find $C_\d$ and $\hat\rank$ in Lemma \ref{thm5}, 
	we suggest using a scree plot of the singular values of $\hat{\Sig}_{\phat}$ in \eqref{Sigpphat}.
\end{remark}
\begin{remark}
As long as $C_\d$ satisfies $\sig^2 \p^2\n < \p^2\n\, C_\d \le (\sig^2+b_\rank^2\d)\, \p^2\n$, 
	consistency of $\hat\rank$ in Lemma \ref{thm5} will  hold.
However, in the finite sample case, if the noise level $\sig^2$ is larger than $b_\rank^2\d$,
	it can be difficult to observe a singular-value gap and determine $\hat\rank$ using the scree plot of the singular values of $\hat{\Sig}_{\phat}$.
\end{remark}



\section{Numerical experiments} \label{simul}


\subsection{Simulations}

This section studies the performance of the proposed estimators using
 several values of the dimension $\n$ and the probability $\p$.

To simulate $\M$, generate $A \in [-5,5]^{\n \times 2}$, $B \in [-5,5]^{\d \times 2}$ to contain i.i.d. Uniform$[-5,5]$ random variables and define 
$$M_0=A B^{T} \in \real^{\n\times\d}.$$
Each entry of $\M$ is observed with probability $\p$ and unobserved with probability $1-\p$.
The observed entries of $\M$ are corrupted by noise $\epsilon$ as defined in Section \ref{setup}, 
where $\epsilon_{kh}$ are i.i.d. $\mathcal{N}(0,1)$.  
The dimension $\n$ varies from 100 to 1000 and $\p$ from 0.1 to 1, while $\d=2\sqrt{\n}$.
Each simulation was repeated 500 times and the errors were averaged.

\begin{figure}[p]
\centering
\includegraphics[width=4.5in]{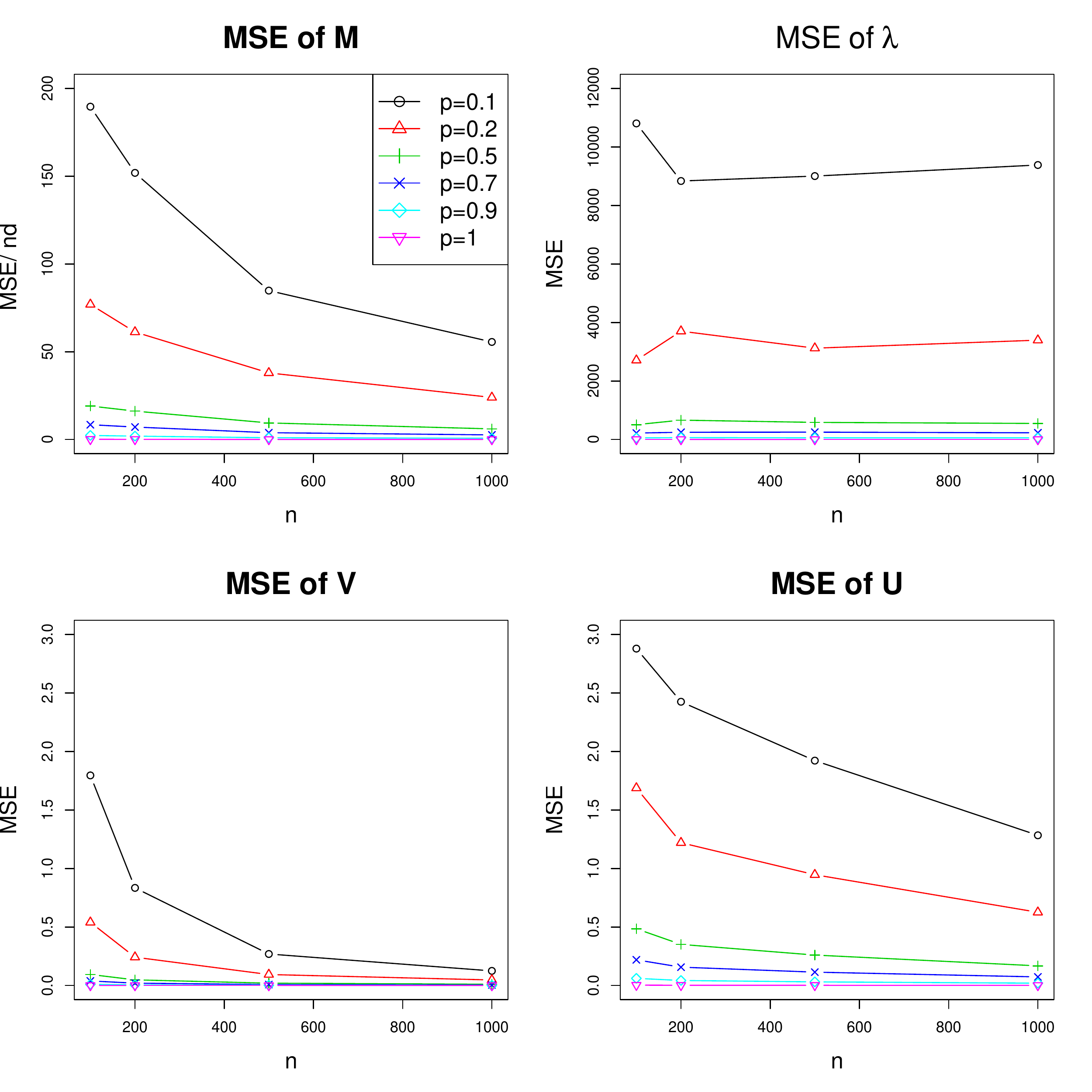}
\caption{The mean squared errors for six different values of $\p$ when $\n$ increases. Each point on the plots correspond to an average over 500 replicates.}
\label{fig:figure3}
\end{figure}

\begin{figure}[p]
\centering
\includegraphics[width=4.5in]{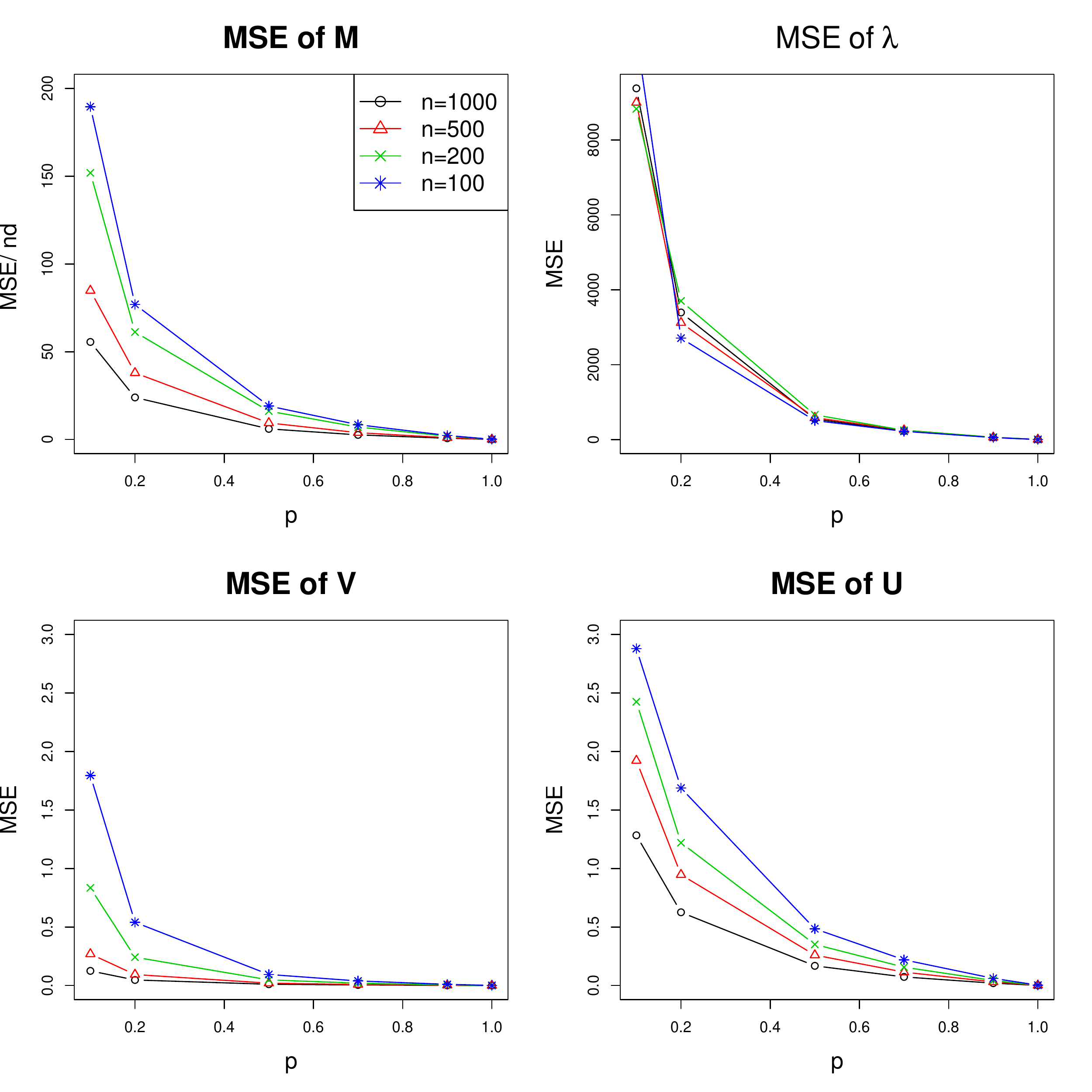}
\caption{The same mean squared errors as the ones in Figure 1 plotted for four different values of $\n$ when $\p$ increases. 
Each point on the plots correspond to an average over 500 replicates.}
\label{fig:figure2}
\end{figure}

\begin{figure}[p]
\centering
\includegraphics[width=5in]{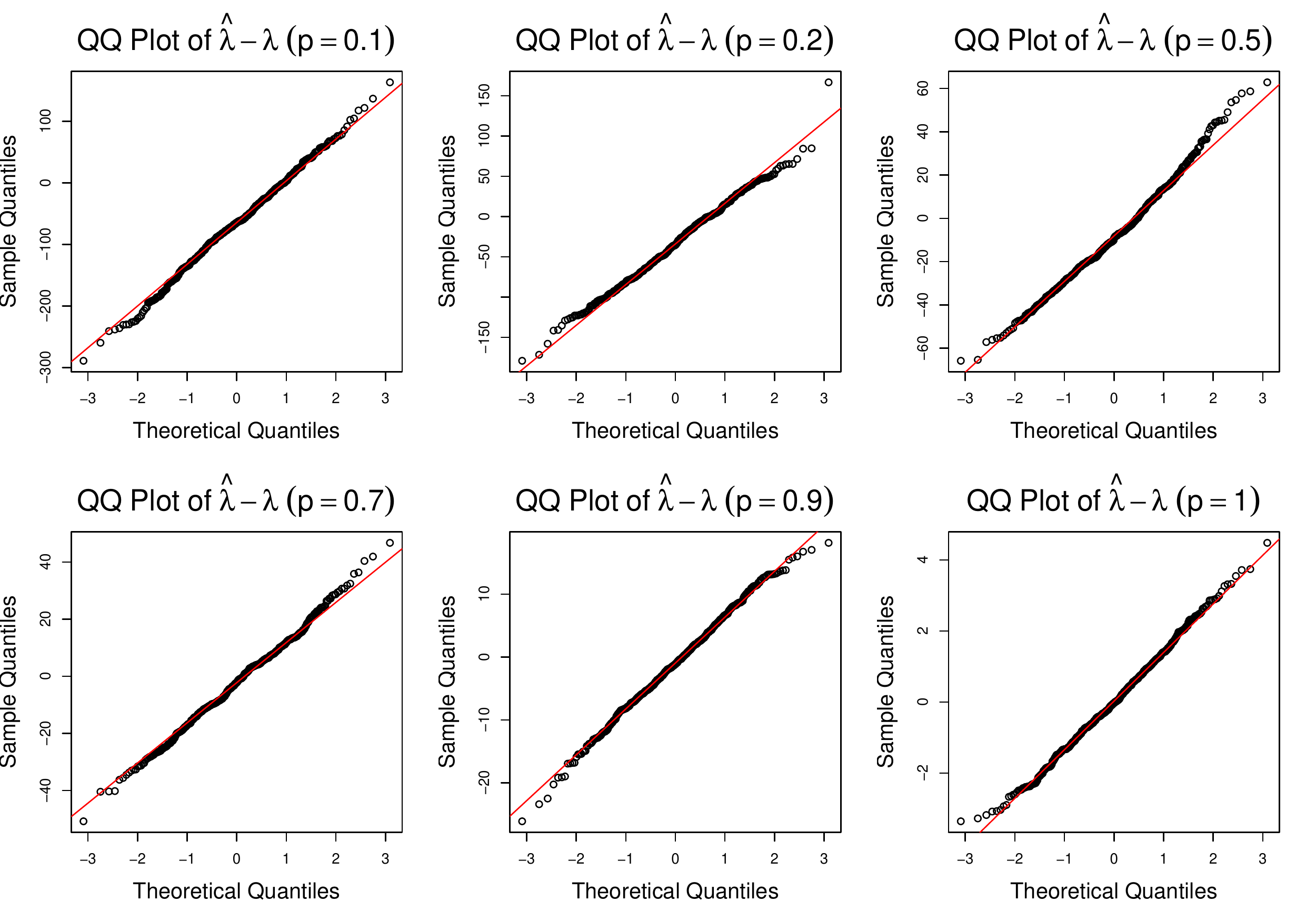}
\caption{Asymptotic normality of $\sum_{i=1}^2\lamhat_i - \sum_{i=1}^2\lam_i$ as $\p$ varies from 0.1 to 1. Across the plots, we fixed $\n$ to be 1000.}
\label{fig:figure1}
\end{figure}

Figures \ref{fig:figure3} and \ref{fig:figure2} summarize the resulting mean squared errors calculated by $\frac{1}{\n\d}\smallnorm{\Mhat(\shat)-\M}_F^2$, 
	$\smallnorm{\diag(\lamhat_1, \lamhat_2)-\Lam}_F^2$, $\smallnorm{\Vhat-\V}_F^2$, and $\smallnorm{\Uhat-\U}_F^2$, 
	when $\n$ and $\p$ increase along the $x$-axis, respectively.  
The MSE for $\Vhat$ decreases more rapidly than the MSE for $\Uhat$ and both MSEs decrease when $\p$ increases; 
	this is consistent with the results in Theorem \ref{thm1}. 
The MSE of $\Mhat$ decreases with the increase of $\n$ and $\p$. 
The MSE of $\lamhat$ stays stable over the changes of $\n$ 
	since it is measured on $\lamhat_i$ instead of $\lamhat_i^2$ (see Theorem \ref{corol1}), 
	but decreases with the increase of $\p$.

We further studied the asymptotic normality of $\sum_{i=1}^2\lamhat_i$ in Theorem \ref{corol1}. 
Figure \ref{fig:figure1} graphs the QQ plot of $\sum_{i=1}^2\lamhat_i - \sum_{i=1}^2\lam_i$,
	where the dimension $\n$ is fixed at 1000 and $\p$ varies from 0.1 to 1. 
This shows that the asymptotic normality holds across various values of $\p$. 


\subsection{A data example}

To illustrate the proposed estimation methods, this section analyzes the MovieLens 100k data (\cite{movielens100k}).
The data set consists of 100,000 ratings from 943 users and 1682 movies and each user has rated at least 20 movies. 
Taking this partially observed data matrix as $\Mp$, we computed $\hat{\Sigma}_{\phat}$ as in \eqref{Sigpphat}
	and plotted the scree plot of the singular values of $\hat{\Sigma}_{\phat}$ to determine $\hat\rank$. 
Figure \ref{fig:figure4} shows the result. 
\begin{figure}[!ht]
\centering
\includegraphics[width=4in]{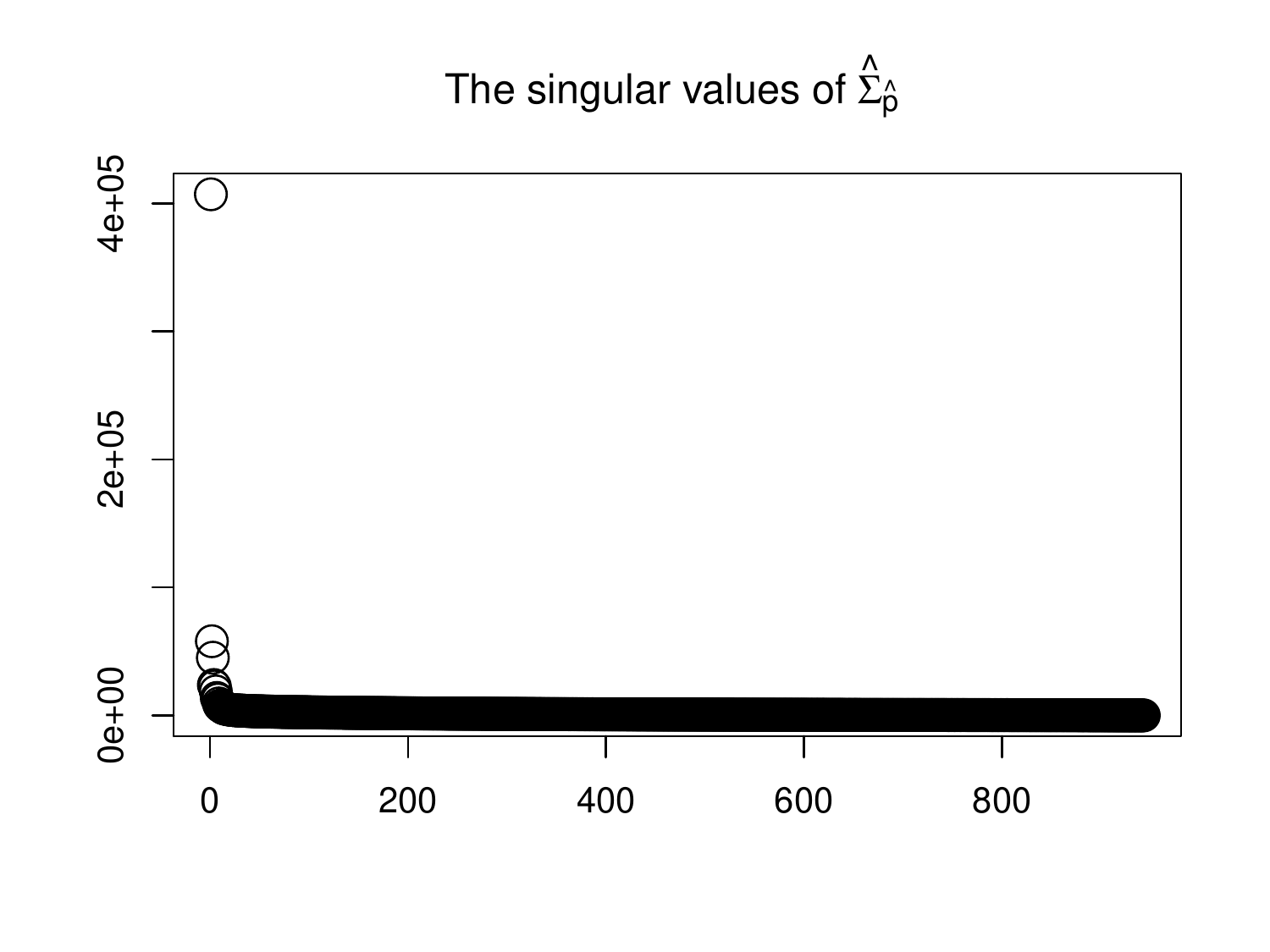}
\caption{The singular values of $\hat{\Sigma}_{\phat}$ computed by taking the MovieLens 100k data matrix as $\Mp$. 
	From this scree plot, we choose $\hat\rank$ to be 3.}
\label{fig:figure4}
\end{figure}
Since there exists a singular value gap between the 3rd and 4th singular values, we chose $\hat\rank=3$.
Then, we computed the estimators of the singular vectors and values and the estimator of the full low rank matrix as illustrated in Algorithms 
\ref{alg:UhatVhatLamhat} and \ref{alg:Mhat}.

The estimated singular vectors help us understand what the main factors of movie preferences are.
Table \ref{tbl1} shows lists of movies that characterize the top 3 singular vectors (factors of movie preferences).
Particularly, it presents 5 movies that correspond to the largest values in each singular vector and 5 movies that correspond to the smallest values. 
\begin{table}[!ht]
\centering
\caption{Lists of movies that characterize each of the top 3 singular vectors}
\label{tbl1}
\begin{tabular}{|c|c|c|}
\hline
\multirow{2}{*}{\begin{tabular}[c]{@{}c@{}}1st \\ singular\\ vector\end{tabular}} & \begin{tabular}[c]{@{}c@{}}One side \\  (well-known, top-rated) \end{tabular}  & \begin{tabular}[c]{@{}c@{}}Silence of the Lambs, Fargo, Star Wars,  \\ Return of the Jedi, Raiders of the Lost Ark \end{tabular} \\ \cline{2-3} 
                                             & \begin{tabular}[c]{@{}c@{}}The other side \\  (unknown, pooly-rated) \end{tabular} & \begin{tabular}[c]{@{}c@{}}A Further Gesture, Mat i syn, \\ A Very Natural Thing, Hush, Office Killer\end{tabular}              \\ \hline
\multirow{2}{*}{\begin{tabular}[c]{@{}c@{}}2nd \\ singular\\ vector\end{tabular}}  & \begin{tabular}[c]{@{}c@{}}One side \\  (box-office hit in 90's) \end{tabular} & \begin{tabular}[c]{@{}c@{}}Scream, Air Force One, The Rock, \\ Contact, Liar Liar\end{tabular}                                  \\ \cline{2-3} 
                                             & \begin{tabular}[c]{@{}c@{}}The other side \\  (classic in 40's-60's) \end{tabular} & \begin{tabular}[c]{@{}c@{}}Citizen Kane, The Graduate, Casablanca, \\ The African Queen, Dr. Strangelove\end{tabular}           \\ \hline
\multirow{2}{*}{\begin{tabular}[c]{@{}c@{}}3rd \\ singular\\ vector\end{tabular}}  & \begin{tabular}[c]{@{}c@{}}One side \\  (action, thriller) \end{tabular} & \begin{tabular}[c]{@{}c@{}}Jurassic Park, Top Gun, Speed, True Lies, \\ Batman\end{tabular}                                     \\ \cline{2-3} 
                                             & \begin{tabular}[c]{@{}c@{}}The other side \\  (drama) \end{tabular} & \begin{tabular}[c]{@{}c@{}}Il Postino, Secrets \& Lies, English Patient, \\ Full Monty, L.A. Confidential\end{tabular}          \\ \hline
\end{tabular}
\end{table}
The 1st factor has well-known and top-rated movies on one side and unknown and poorly-rated movies on the other side.
The 2nd factor has box-office hit movies in 1990's on one side and memorable classic movies in 1940's-1960's on the other side.
The 3rd factor has action and thriller movies on one side and quieter and drama movies on the other side.

The estimated singular values help us see how influential the main factors of movie preferences are.
Particularly, Figure \ref{fig:figure5} shows the estimated singular values and their $95\%$ confidence intervals.
For the standard deviation used in the confidence intervals, 
	we used $\Upsilon_{ii}^{-1/2}$ from \eqref{asympVar} in Theorem \ref{thm3}.
Computing $\Upsilon_{ii}^{-1/2}$ requires information on the values of the parameters $\M,\U,\V,\lam_i,\p$, and $\sig^2$,
	but we replaced these with the estimated values $\Mhat(\shat),\Uhat,\Vhat,\lamhat_i,\phat$, and $\tauhatphat/\n\phat^2$.
From Figure \ref{fig:figure5}, we observe that all 3 factors of movie preferences are significant.
\begin{figure}[!ht]
\centering
\includegraphics[width=4in]{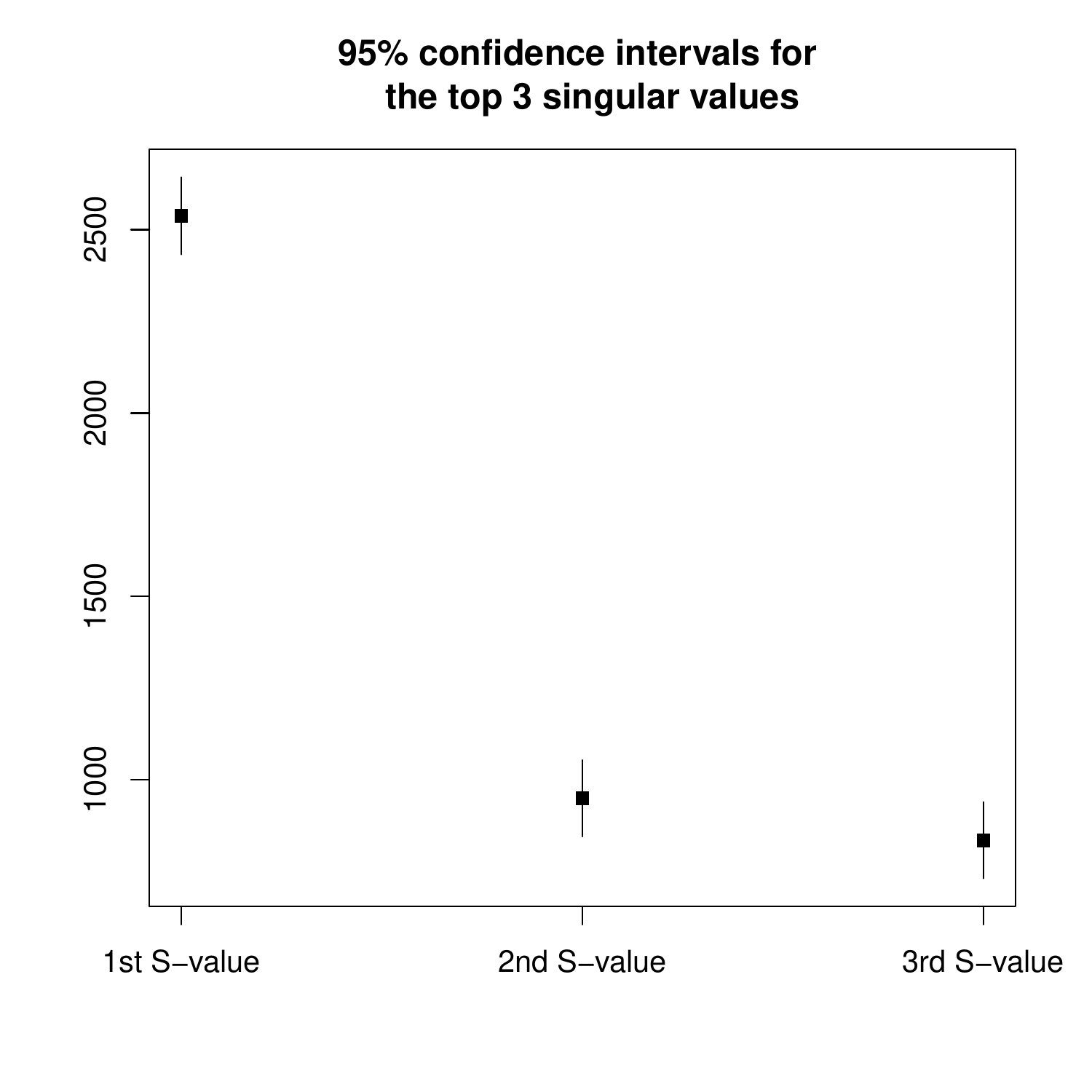}
\caption{The 3 estimated singular values and their $95\%$ confidence intervals.}
\label{fig:figure5}
\end{figure}

To find the RMSE of our estimator of the full low rank matrix, $\Mhat(\shat)$, 
	we used 5 training and 5 test data sets from 5-fold cross validation which is publicly provided in \cite{movielens100k}.
The RMSE was computed by 
$$\sqrt{\frac{\smallnorm{\mathcal{P}_{\Omega_{test}}(\Mhat(\shat))-\mathcal{P}_{\Omega_{test}}(\Mp)}_F^2}{|\Omega_{test}|}},$$
	where $\Omega_{test}$ contains indices of observed entries that belong to the test set, 
	$\mathcal{P}_{\Omega_{test}}$ for a matrix $A\in\real^{\n\times\d}$ denotes the projection of $A$ onto $\Omega_{test}$, and 
	$|\Omega_{test}|$ denotes the cardinality of $\Omega_{test}$. 
The average of the resulting RMSEs was 1.656.


\newpage
\section{Proofs} \label{proofs}


\subsection{Proofs for Theorem \ref{thm1}} \label{proofs:thm1}

The proof of the following proposition and  lemmas are in Appendix \ref{apdx1}.


\begin{proposition} \label{prop1}
Under the model setup in Section \ref{setup} and Assumption \ref{assume1}, we have for large $\n$ and $\d$,
\begin{eqnarray} \label{prop1:res1} 
\expect\norm{\sin \big( \V_\p^{(\m)}, \V^{(\m)}\big)}_F^2 &\le& \frac{C_1 \, \n^{-1}}{\p\,({b}_{\m}^2 - {b}_{\m+1}^2)^2}, \text{ and} 
\end{eqnarray}
\begin{eqnarray*}
\expect\norm{\sin \big( \U_\p^{(\m)}, \U^{(\m)}\big)}_F^2 &\le& \frac{C_2 \, \d^{-1}}{\p\,({b}_{\m}^2 - {b}_{\m+1}^2)^2},
\end{eqnarray*}
where $\V_\p$ and $\U_\p$ are defined in \eqref{Sigpp} and $C_1$ and $C_2$ are generic constants free of $\n,\d$, and $\p$.
\end{proposition}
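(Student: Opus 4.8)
\noindent The plan is to reduce \eqref{prop1:res1} to an eigenvector perturbation bound for $\hat{\Sig}_\p$ around its mean, and then to gain a factor $\p$ by bounding only the ``relevant'' part of the perturbation. Set $E := \hat{\Sig}_\p - \expect\hat{\Sig}_\p$. By Proposition~\ref{prop0}, $\expect\hat{\Sig}_\p = (\V,\Vc)\ddot{\Lam}_\p^2(\V,\Vc)^T$, so its leading $\m$ eigenvectors are the columns of $\V^{(\m)}$ and, under Assumption~\ref{assume1}(1)--(2), the gap between its $\m$-th and $(\m+1)$-th eigenvalues equals $\delta := \p^2(\lam_\m^2 - \lam_{\m+1}^2) = \p^2(b_\m^2 - b_{\m+1}^2)\,\n\d > 0$ (with $\lam_{\rank+1}:=0$). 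Letting $W\in\real^{\d\times(\d-\m)}$ collect the trailing eigenvectors $\V_{\m+1},\dots,\V_\d$ of $\expect\hat{\Sig}_\p$, a one-sided (Sylvester-equation) form of the Davis--Kahan $\sin\Theta$ theorem gives, on the event $\mathcal{A}:=\{\norm{E}_2\le\delta/4\}$, the bound $\norm{\sin(\V_\p^{(\m)},\V^{(\m)})}_F \le \tfrac{2}{\delta}\norm{W^T E\,\V^{(\m)}}_F$, and hence $\expect\norm{\sin(\V_\p^{(\m)},\V^{(\m)})}_F^2 \le \tfrac{4}{\delta^2}\,\expect\norm{W^T E\,\V^{(\m)}}_F^2 + \m\,\prob(\mathcal{A}^c)$. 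It is essential to estimate the \emph{one-sided} quantity $\norm{W^T E\,\V^{(\m)}}_F$ and not $\norm{E}_F$: a direct estimate gives only $\expect\norm{E}_F^2 = \O(\n\d^2\p^2)$, which through the crude bound $\norm{\sin}_F\le 2\norm{E}_F/\delta$ would cost a spurious factor $\p$ and land on the suboptimal rate $\O(\p^{-2}\n^{-1})$.

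For the exceptional event I would use matrix Bernstein. Decompose $E = \sum_{k=1}^\n E_k$, where $E_k := a_k a_k^T - (1-\p)\diag(a_k a_k^T) - \p^2 m_k m_k^T - \p^2\sig^2 I_\d$, with $a_k\in\real^\d$ the $k$-th row of $\Mp$ (viewed as a column) and $m_k$ the $k$-th row of $\M$; the $E_k$ are independent with $\expect E_k = 0$. On a high-probability event one has $\norm{E_k}_2 = \O(\p\d)$ uniformly in $k$ (bound $\norm{a_k}_2^2$ and $\max_g (a_k)_g^2$ using sub-Gaussian tails of $\epsilon$ and the constraint $\n\le e^{\d^\alpha}$ from Assumption~\ref{assume1}(3)), together with the variance proxy $\norm{\sum_k\expect E_k^2}_2 = \O(\n\p^3\d^2)$; matrix Bernstein then yields $\prob(\norm{E}_2\ge\delta/4)\le 2\d\,e^{-c\p\n}$, which is $o(\n^{-1})$ whenever $\p\d/\log\n\to\infty$. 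Since $\norm{\sin}_F^2\le\m\le\rank$ always, the term $\m\,\prob(\mathcal{A}^c)$ is then negligible relative to the target bound.

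The crux is $\expect\norm{W^T E\,\V^{(\m)}}_F^2 = \O(\n\d^2\p^3)$. Writing $\norm{W^T E\,\V^{(\m)}}_F^2 = \sum_{i=1}^\m\sum_{j=\m+1}^\d \langle\V_j, E\V_i\rangle^2$ and using independence, $\expect\langle\V_j,E\V_i\rangle^2 = \sum_k\expect\langle\V_j, E_k\V_i\rangle^2$. Decompose $a_k = \p m_k + z_k$, where $z_k$ has independent mean-zero coordinates with $\expect (z_k)_g^2 = \O(\p)$. Since $\V_i$ with $i\le\m\le\rank$ is a right singular vector of $\M$, $\V_i^T m_k = \lam_i\U_{ik}$, while $\V_j^T m_k = 0$ for $j>\rank$; substituting into $\langle\V_j, E_k\V_i\rangle = (\V_j^T a_k)(\V_i^T a_k) - (1-\p)\langle\V_j,\diag(a_k a_k^T)\V_i\rangle - \p^2(\V_j^T m_k)(\V_i^T m_k)$, the pure-signal term $\p^2(\V_j^T m_k)(\V_i^T m_k)$ cancels and the leading random term is the ``signal$\,\times\,$fluctuation'' $\p\lam_i\U_{ik}(\V_j^T z_k)$ (plus the symmetric $\p\lam_j\U_{jk}(\V_i^T z_k)$ when $j\le\rank$), whose second moment is $\p^2\lam_i^2\U_{ik}^2\,\expect(\V_j^T z_k)^2 = \O(\p^3\lam_i^2\U_{ik}^2)$. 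The remaining pieces — the noise$\,\times\,$noise term $(\V_i^T z_k)(\V_j^T z_k)$ and the diagonal correction — are of strictly lower order, as one checks using $\norm{\M}_\infty\le\L$, $\lam_i^2 = b_i^2\n\d$, and the incoherence $\U_{ik}^2 = \O(\n^{-1})$, $\V_{ih}^2 = \O(\d^{-1})$ for $i\le\rank$ (which follows from $\norm{\M}_\infty\le\L$ and Assumption~\ref{assume1}(1), since $|\lam_i\U_{ik}| = |\V_i^T m_k|\le\norm{m_k}_2\le\L\sqrt\d$). Summing over $k$ with $\sum_k\U_{ik}^2 = 1$ gives $\expect\langle\V_j, E\V_i\rangle^2 = \O(\p^3\lam_i^2) = \O(\p^3\n\d)$, and therefore $\expect\norm{W^T E\,\V^{(\m)}}_F^2 \le \m(\d-\m)\,\O(\p^3\n\d) = \O(\p^3\n\d^2)$ because $\m\le\rank = \O(1)$.

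Combining, $\expect\norm{\sin(\V_\p^{(\m)},\V^{(\m)})}_F^2 \le \tfrac{4}{\delta^2}\,\O(\p^3\n\d^2) + o(\n^{-1}) = \O\big(\p^3\n\d^2 / (\p^2(b_\m^2-b_{\m+1}^2)\n\d)^2\big) = C_1\n^{-1} / \big(\p(b_\m^2-b_{\m+1}^2)^2\big)$, which is \eqref{prop1:res1}. The bound for $\U_\p^{(\m)}$ follows from the identical argument applied to $\hat{\Sig}_{\p t} = \sum_{h=1}^\d E_{t,h}$, a sum of $\d$ independent \emph{column} contributions: by Proposition~\ref{prop0}, $\expect\hat{\Sig}_{\p t}$ has leading $\m$-dimensional eigenspace $\U^{(\m)}$ with the same gap $\p^2(b_\m^2-b_{\m+1}^2)\n\d$, and the analogue of the crux step gives $\O(\p^3\n^2\d)$ (there are now $\sim\n$ trailing directions), producing the $\d^{-1}$ rate. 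The hard part throughout is the crux step: recognizing that one must bound the one-sided perturbation $\norm{W^T E\,\V^{(\m)}}_F$, and then pushing the signal$\,\times\,$fluctuation expansion of $\langle\V_j, E_k\V_i\rangle$ far enough to capture the gain of a factor $\p$ (landing on $\p^3$ per entry, not $\p^2$) while keeping the diagonal-correction and noise$\,\times\,$noise cross terms under control. The matrix-Bernstein step is routine but requires the sharpened variance proxy $\O(\n\p^3\d^2)$, since a naive $\O(\n\p^2\d^2)$ would be too weak in the regime $\p\d/\log\n\to\infty$.
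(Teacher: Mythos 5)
Your proposal is correct and follows essentially the same route as the paper: a one-sided Davis--Kahan bound applied on a high-probability event where $\norm{\hat{\Sig}_\p-\expect\hat{\Sig}_\p}_2$ is small relative to the population gap $\p^2(b_\m^2-b_{\m+1}^2)\n\d$ (the paper's Lemma \ref{lem6} plus Weyl), combined with the second-moment bound $\expect\norm{(\hat{\Sig}_\p-\expect\hat{\Sig}_\p)\V^{(\m)}}_F^2=O(\p^3\n\d^2)$, which is exactly the paper's Lemma \ref{lem7} (your ``signal\,$\times$\,fluctuation'' terms are its dominant terms $(C)$ and $(E)$). The only differences are organizational: you control the exceptional event by matrix Bernstein with truncation on the row-sum decomposition rather than via the paper's spectral-norm bounds on $\My$ and $\epsilony$, and you expand $\langle\V_j,E_k\V_i\rangle$ row by row instead of using the paper's decomposition $\Mp=\p\M+\My+\epsilony$.
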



\begin{lemma} \label{lem6}
Under the model setup in Section \ref{setup} and Assumption \ref{assume1}, for any given $\mu_1>0$, there exists a large constant $C_{\mu_1}>0$ such that
\begin{equation} \label{lem6:res1} 
	\frac{1}{\n\d} \norm{ \hat{\Sig}_\p-\expect\hat{\Sig}_\p } _2 \le 
	C_{\mu_1}\, \max \left\{ \p \frac{\log\n}{\d},\, \p^{3/2}\sqrt{\frac{\log \n}{\n}}
\right\}
\end{equation}
with probability at least $1-O\(\n^{-\mu_1}\)$, where $\hat{\Sig}_{\p}$ is defined in \eqref{Sigpp}.
Similarly, for any given $\mu_2>0$, there exists a large constant $C_{\mu_2}>0$ such that
\begin{equation*}
	\frac{1}{\n\d} \norm{ \hat{\Sig}_{\p t} - \expect \(\hat{\Sig}_{\p t}\) } _2 \le 
	C_{\mu_2}\, \max \left\{ \p \frac{\log\n}{\d},\, \p^{3/2}\sqrt{\frac{\log \n}{\d}}
\right\}
\end{equation*}
with probability at least $1-O\(\n^{-\mu_2}\)$, where $\hat{\Sig}_{\p t}$ is defined in \eqref{Sigpp}.
\end{lemma}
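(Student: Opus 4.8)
The plan is to write $\hat{\Sig}_\p-\expect\,\hat{\Sig}_\p$ as a sum of $\n$ independent, mean‑zero random $\d\times\d$ matrices and apply a matrix Bernstein inequality. Writing $v_k\in\real^\d$ for the transpose of the $k$‑th row of $\Mp$, we have $\hat{\Sig}=\Mp^T\Mp=\sum_{k=1}^{\n}v_kv_k^T$, so by \eqref{Sigpp} $\hat{\Sig}_\p=\sum_{k=1}^{\n}W_k$ with $W_k:=v_kv_k^T-(1-\p)\diag(v_kv_k^T)$; a direct computation (consistent with Proposition \ref{prop0}) gives $\expect W_k=\p^2 m_km_k^T+\p^2\sig^2 I_\d$, where $m_k$ is the $k$‑th row of $\M$, so $\hat{\Sig}_\p-\expect\,\hat{\Sig}_\p=\sum_k(W_k-\expect W_k)$. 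Splitting each row as $v_k=a_k+e_k$, with $a_k,e_k$ the contributions to row $k$ of the observed signal $\y\cdot\M$ and observed noise $\y\cdot\epsilon$, decomposes $W_k$ into a masking part (quadratic in $a_k$), a signal--noise part (bilinear in $a_k,e_k$), and a pure‑noise part (quadratic in $e_k$), each treated the same way.

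Two preliminary reductions handle the unbounded noise. First, truncate $\epsilon$ at level $T=c_{\mu_1}\sig\sqrt{\log\n}$: by the sub‑Gaussian tail bound and a union bound over the $\n\d\le\n^2$ entries --- Assumption \ref{assume1}(3) makes $\log(\n\d)\asymp\log\n$ --- the event $\{\max_{k,h}|\epsilon_{kh}|\le T\}$ has probability $1-O(\n^{-\mu_1})$ for a suitable $c_{\mu_1}$, and on it replacing $\epsilon$ by its truncation changes nothing while shifting the means $\expect W_k$ by an exponentially small amount absorbed in the constant. Second, restrict to the event that every row of $\y$ has at most $C\p\d$ ones; by Bernstein for Bernoulli sums, a union bound over rows, and the hypothesis $\p\d/\log\n\to\infty$, this has probability $1-O(\n^{-\mu_1})$. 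On the intersection $\smallnorm{W_k}_2\lesssim\smallnorm{v_k}_2^2\lesssim(\L^2+T^2)\,\p\d=:B$, so the bounded matrix Bernstein inequality applies (equivalently one uses a sub‑exponential matrix Bernstein and skips the truncation).

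The substantive step is the variance parameter $v:=\smallnorm{\sum_k\expect(W_k-\expect W_k)^2}_2\le\smallnorm{\sum_k\expect W_k^2}_2$. Using $\expect W_k^2\preceq2\,\expect(v_kv_k^T)^2+2(1-\p)^2\expect\diag(v_kv_k^T)^2$ and the identity $\expect(v_kv_k^T)^2=\expect[\,\smallnorm{v_k}_2^2\,v_kv_k^T\,]$, one expands in $v_k=a_k+e_k$ and evaluates the Bernoulli moments $\expect[\y_{kl}\y_{kh}\y_{kh'}]$, which are $\p^3$, $\p^2$, or $\p$ according to how many of $l,h,h'$ coincide, while the moments of $\epsilon$ kill every term with an unmatched noise index. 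The dominant, ``all distinct'' terms carry $\p^3$: with $|\M_{kh}|\le\L$, $\smallnorm{m_k}_2^2\lesssim\d$, and $\sum_k m_km_k^T=\M^T\M$ of spectral norm $\lam_1^2\asymp\n\d$ (Assumption \ref{assume1}(1)), one gets $\smallnorm{\sum_k\p^3\smallnorm{m_k}_2^2 m_km_k^T}_2\le\p^3(\max_k\smallnorm{m_k}_2^2)\,\smallnorm{\M^T\M}_2\lesssim\p^3\n\d^2$, and likewise for the signal--noise part (with an extra $\sig^2$), the pure‑noise part being of the smaller order $\p^2\n\d$; hence $v\lesssim\p^3\n\d^2$. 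Keeping the matrix structure is essential --- a scalar bound such as $\expect\smallnorm{W_k}_2^2\le\expect\smallnorm{v_k}_2^4$ loses a factor $\p$ and gives only $\p^2\n\d^2$. Matrix Bernstein, at deviation level $\asymp\mu_1\log\n$ (absorbing the dimension $\d\le\n$, giving probability $1-O(\n^{-\mu_1})$), then yields
\begin{equation*}
\smallnorm{\hat{\Sig}_\p-\expect\,\hat{\Sig}_\p}_2\;\lesssim\;\sqrt{v\,\mu_1\log\n}+B\,\mu_1\log\n\;\lesssim\;\p^{3/2}\d\sqrt{\n\log\n}+\p\d\log\n ,
\end{equation*}
and dividing by $\n\d$ and using $\d\le\n$ (so $\p\log\n/\n\le\p\log\n/\d$) gives \eqref{lem6:res1}; the signal--noise and pure‑noise parts obey bounds of the same form, by the same estimate or, more cleanly, by a sub‑Gaussian covariance concentration inequality whose effective entrywise scale here is $\sqrt\p\,\sig$. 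The claim for $\hat{\Sig}_{\p t}=\Mp\Mp^T$ follows verbatim from the transposed argument: the summands are now indexed by the $\d$ columns, the matrices are $\n\times\n$, the variance parameter is $\lesssim\p^3\d\n^2$, and the uniform bound is $\lesssim\p\n\log\n$, so $\smallnorm{\hat{\Sig}_{\p t}-\expect\,\hat{\Sig}_{\p t}}_2\lesssim\p^{3/2}\n\sqrt{\d\log\n}+\p\n\log\n$, which is the stated bound after dividing by $\n\d$.

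The main difficulty is two‑fold. Because $\epsilon$ is only sub‑Gaussian, one must push the truncation/bad‑event bookkeeping all the way through and verify it costs at most $O(\n^{-\mu_1})$ in probability and perturbs the means negligibly; Assumption \ref{assume1}(3) enters precisely here to make $\log(\n\d)$ comparable to $\log\n$, and $\p\d/\log\n\to\infty$ is what produces the tight row‑count bound $C\p\d$ (the crude bound $\d$ would not yield the first term in the max). More delicate is extracting the sharp power $\p^{3/2}$ in the variance term: this requires the careful separation of the Bernoulli third moments by index coincidence inside $\expect[\smallnorm{v_k}_2^2 v_kv_k^T]$, together with the low‑rank structure of $\M$ (so that $\smallnorm{\sum_k\smallnorm{m_k}_2^2 m_km_k^T}_2$ is of order $\n\d^2$ rather than the trivial $\n\d\cdot\d$), after which the residual logarithmic factors are tidied using $\d\le\n$ and the rate hypotheses.
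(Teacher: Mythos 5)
Your proposal is correct in substance but follows a genuinely different route from the paper. The paper never sums over rows: it expands $\Mp=\p\M+\My+\epsilony$ and splits $\hat{\Sig}_\p-\expect\hat{\Sig}_\p$ into five groups of terms, bounding the quadratic terms $\My^T\My$, $\epsilony^T\epsilony$, $\My^T\epsilony$ (and their diagonal corrections, via scalar Bernstein over columns) through the spectral-norm bounds $\norm{\My}_2,\norm{\epsilony}_2\lesssim\sqrt{\p\,\n\log\n}$ obtained from Propositions 1--2 of \cite{koltchinskii2011}, which yields the $\p\log\n/\d$ term, and bounding the cross terms $\p\My^T\M$, $\p\epsilony^T\M$ through $\norm{A}_2\le\sqrt{\norm{A}_1\norm{A}_\infty}$ plus entrywise sub-Gaussian/Bernstein tail bounds, which yields the $\p^{3/2}\sqrt{\log\n/\n}$ term. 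You instead apply one matrix Bernstein inequality to the row-wise sum $\sum_k(W_k-\expect W_k)$ with the variance parameter $v\lesssim\p^3\n\d^2$; your variance computation (keeping the matrix structure so as not to lose a factor of $\p$, and using $\smallnorm{\M^T\M}_2\asymp\n\d$) is the right calculation and reproduces the $\p^{3/2}\sqrt{\log\n/\n}$ term, and the approach is more unified than the paper's case-by-case treatment. Three caveats. First, with truncation at $T\asymp\sqrt{\log\n}$ your uniform bound is $B\asymp\p\d\log\n$, so the deviation term after dividing by $\n\d$ is $\p(\log\n)^2/\n$, not $\p\d\log\n/(\n\d)=\p\log\n/\n$ as written; this is dominated by the stated maximum when $\p$ is bounded away from zero (the nominal setup), or more generally when $\p\min(\n,\d)\gtrsim(\log\n)^3$, but it exceeds the claimed bound by a logarithmic factor in the extreme small-$\p$ corner (e.g.\ $\d\asymp\n$, $\p\d\asymp(\log\n)^2$) that Remark \ref{rmk2} nominally allows --- so your argument proves a very slightly weaker inequality there (the same log-factor issue appears in your transposed bound $\p(\log\n)^2/\d$ versus $\p\log\n/\d$). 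Second, the condition $\p\d/\log\n\to\infty$ is not a hypothesis of the lemma; you should either state it as an assumption or note that for $\p$ bounded below the row-count bound $C\p\d$ holds automatically (the paper's own proof also implicitly needs $\p$ not too small when it absorbs the $\L\log\n/(\n\d)$ term of Koltchinskii's bound). Third, to use the bounded matrix Bernstein inequality you must actually replace each $W_k$ by a modified summand on the truncation/row-count bad events and control the resulting mean shift, which you acknowledge but do not carry out; alternatively a sub-exponential (Orlicz-norm) matrix Bernstein avoids this entirely. None of these affects the lemma in the regime where it is used.
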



\begin{lemma} \label{lem4}
Under the model setup in Section \ref{setup} and Assumption \ref{assume1}, 
for any given $\nu_1>0$, there exists a large constant $C_{\nu_1}>0$ such that
\begin{equation} \label{lem4:res1} 
	\frac{1}{\n\d} \norm{ \hat{\Sig}_{\phat} - \hat{\Sig}_\p } _2 \le C_{\nu_1}\,\p^{3/2}\sqrt{\frac{\log\n}{\n\d}}\,\frac{1}{\d}
\end{equation}
with probability at least $1-O\(\n^{-\nu_1}\)$, 
where $\hat{\Sig}_{\phat}$ and $\hat{\Sig}_{\p}$ are defined in \eqref{Sigpphat} and \eqref{Sigpp}, respectively.
Similarly, for any given $\nu_2>0$, there exists a large constant $C_{\nu_2}>0$ such that
\begin{equation*}
	\frac{1}{\n\d} \norm{ \hat{\Sig}_{\phat t} - \hat{\Sig}_{\p t} } _2 \le C_{\nu_2}\,\p^{3/2}\sqrt{\frac{\log\n}{\n\d}}\,\frac{1}{\n}
\end{equation*}
with probability at least $1-O\(\n^{-\nu_2}\)$, 
where $\hat{\Sig}_{\phat t}$ and $\hat{\Sig}_{\p t}$ are defined in \eqref{Sigpphat} and \eqref{Sigpp}, respectively.
\end{lemma}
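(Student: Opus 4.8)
The key observation is that, unlike the matrix perturbation controlled in Lemma \ref{lem6}, the difference $\hat{\Sig}_{\phat}-\hat{\Sig}_\p$ is a \emph{diagonal} matrix with an explicit form: subtracting the definitions in \eqref{Sigpphat} and \eqref{Sigpp},
\begin{equation*}
\hat{\Sig}_{\phat}-\hat{\Sig}_\p=\big[(1-\p)-(1-\phat)\big]\diag(\hat{\Sig})=(\phat-\p)\,\diag(\hat{\Sig}),
\end{equation*}
and since $\hat{\Sig}_{hh}=\sum_{k=1}^{\n}\y_{kh}({\M}_{kh}+\epsilon_{kh})^2\ge 0$,
\begin{equation*}
\norm{\hat{\Sig}_{\phat}-\hat{\Sig}_\p}_2=|\phat-\p|\cdot\max_{1\le h\le\d}\hat{\Sig}_{hh}.
\end{equation*}
So the plan is to bound the two scalars $|\phat-\p|$ and $\max_h\hat{\Sig}_{hh}$ separately, each on an event of probability $1-O(\n^{-\nu_1})$.

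For the first factor: $\n\d\,\phat=\sum_{k,h}\y_{kh}$ is a sum of $\n\d$ i.i.d.\ Bernoulli$(\p)$ variables with variance $\n\d\,\p(1-\p)$, so Bernstein's inequality gives a constant $C$ with $|\phat-\p|\le C\sqrt{\p\log\n/(\n\d)}$ with probability $1-O(\n^{-\nu_1})$. Retaining the variance factor $\p$ here (rather than using the cruder Hoeffding bound) is precisely what yields the correct power of $\p$; the sub-Gaussian term dominates the sub-exponential remainder because, under Assumption \ref{assume1}(3) and the standing condition $\p\d/\log\n\to\infty$ (Remark \ref{rmk2}), $\p\n\d\ge\p\d\gg\log\n$.

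For the second factor: write $\hat{\Sig}_{hh}=\sum_k\y_{kh}{\M}_{kh}^2+2\sum_k\y_{kh}{\M}_{kh}\epsilon_{kh}+\sum_k\y_{kh}\epsilon_{kh}^2$. Because $|{\M}_{kh}|\le\L$ and $\epsilon_{kh}$ is sub-Gaussian and independent of $\y$, the first sum is dominated by a Binomial$(\n,\p)$, the second is a sum of independent sub-Gaussian terms, and the third a sum of independent sub-exponential terms ($\epsilon_{kh}^2$ being sub-exponential), each with mean $O(\n\p)$. A Bernstein-type tail bound for fixed $h$ together with a union bound over $h\le\d\le\n$ (costing only a $\log\n$ factor) gives $\max_h\hat{\Sig}_{hh}\le C'\,\n\p$ with probability $1-O(\n^{-\nu_1})$; here $\n\p\ge\p\d\gg\log\n$ ensures the mean, not the $O(\sqrt{\n\p\log\n})$ or $O(\log\n)$ fluctuations, sets the order.

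Combining the two bounds on their intersection,
\begin{equation*}
\frac{1}{\n\d}\norm{\hat{\Sig}_{\phat}-\hat{\Sig}_\p}_2\le\frac{CC'\,\n\p}{\n\d}\sqrt{\frac{\p\log\n}{\n\d}}=C''\,\p^{3/2}\sqrt{\frac{\log\n}{\n\d}}\,\frac{1}{\d},
\end{equation*}
which is \eqref{lem4:res1}. The bound for $\hat{\Sig}_{\phat t}-\hat{\Sig}_{\p t}$ follows along identical lines after replacing $\diag(\hat{\Sig})$ by $\diag(\hat{\Sig}_t)$: now $(\hat{\Sig}_t)_{kk}=\sum_h\y_{kh}({\M}_{kh}+\epsilon_{kh})^2$ has mean $O(\d\p)$, the union bound runs over $k\le\n$ (again a $\log\n$ factor), so $\max_k(\hat{\Sig}_t)_{kk}\le C\,\d\p$ with probability $1-O(\n^{-\nu_2})$, and one gets the factor $1/\n$ in place of $1/\d$. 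The only mildly delicate points are keeping the sharp $\sqrt{\p}$ in the deviation of $\phat$ and checking, in the regime $\p\d/\log\n\to\infty$, that the terms linear in $\n\p$ (resp.\ $\d\p$) dominate the diagonal maxima; the rest is a routine application of Bernstein's inequality.
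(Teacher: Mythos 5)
Your proposal is correct and follows essentially the same route as the paper: factor the difference as $(\phat-\p)\,\diag(\hat{\Sig})$, bound $|\phat-\p|$ by Bernstein's inequality keeping the variance factor $\p$, and show $\max_h \hat{\Sig}_{hh}=O(\n\p)$ by concentration of the diagonal about its mean plus the deterministic bound $\n\p(\L^2+\sig^2)$, exactly as in the paper's proof (which splits $\diag(\hat{\Sig})$ into its centered part, handled via a sub-exponential Bernstein bound, and its mean). Your explicit check that the fluctuation terms are dominated because $\p\n\gg\log\n$ is the same implicit step the paper uses when combining its bounds, so there is no substantive difference.
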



\begin{lemma} \label{lem5}
Under the model setup in Section \ref{setup} and Assumption \ref{assume1}, we have for large $\n$ and $\d$,
\begin{eqnarray} \label{lem5:res1} 
	\expect \norm{ \frac{1}{\n\d} \( \hat{\Sig}_{\phat}-\hat{\Sig}_\p \) \V_\p^{(\m)} }_F^2 &\le& 
	C_1 \,\max \Bigg\{ \frac{\p^3(1-\p)}{\n\d^3}, \frac{\p^2(1-\p)}{\n^2\d^{5/2}} \Bigg\}
\end{eqnarray}
and
\begin{eqnarray*}
	\expect \norm{ \frac{1}{\n\d} \( \hat{\Sig}_{\phat t}-\hat{\Sig}_{\p t} \) \U_\p^{(\m)} }_F^2 &\le& 
	C_2\, \max \Bigg\{ \frac{\p^3(1-\p)}{\d\n^3}, \frac{\p^2(1-\p)}{\d^2\n^{5/2}} \Bigg\},
\end{eqnarray*}
where $\hat{\Sig}_{\phat}$ and $\hat{\Sig}_{\phat t}$ are defined in \eqref{Sigpphat}, 
	$\hat{\Sig}_{\p}$, $\hat{\Sig}_{\p t}$, $\V_\p$, and $\U_\p$ are defined in \eqref{Sigpp}, 
	and $C_1$ and $C_2$ are generic constants free of $\n,\d$, and $\p$.
\end{lemma}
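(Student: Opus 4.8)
The plan is to reduce the quantity $\expect \norm{ \frac{1}{\n\d}(\hat{\Sig}_{\phat}-\hat{\Sig}_\p)\V_\p^{(\m)}}_F^2$ to a sum over the $\m$ columns of $\V_\p^{(\m)}$ and then bound each term by a second-moment computation. From the definitions in \eqref{Sigpp} and \eqref{Sigpphat}, $\hat{\Sig}_{\phat}-\hat{\Sig}_\p = (\p-\phat)\,\diag(\hat{\Sig})$, so the left-hand side equals $\frac{1}{\n^2\d^2}\expect\big[(\phat-\p)^2\,\smallnorm{\diag(\hat{\Sig})\,\V_\p^{(\m)}}_F^2\big]$. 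The key structural point is that $\diag(\hat{\Sig})$ is the $\d\times\d$ diagonal matrix whose $h$-th entry is $\sum_{k}\Mp_{kh}^2 = \sum_{k} \y_{kh}(\M_{kh}+\epsilon_{kh})^2$, which is (conditionally on $\y$) a sum of bounded-plus-subgaussian terms, so its entries concentrate around $\n\p(\M^2\text{-column-average}+\sig^2)$, i.e.\ are $O(\n\p)$ with high probability. Meanwhile $\phat-\p = O_p\big((\n\d)^{-1/2}\big)$ since $\n\d\,\phat\sim\text{Binomial}(\n\d,\p)$, and crucially $\phat$ depends only on $\y$, which is where the subgaussian noise can be integrated out first.

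First I would split $\hat{\Sig} = \expect[\,\hat{\Sig}\mid\y\,] + (\hat{\Sig}-\expect[\hat{\Sig}\mid\y])$ inside the diagonal, or more simply write $\diag(\hat{\Sig})_{hh} = \n\p\big(\sig^2 + \bar m_h^2\big) + R_h$ where $\bar m_h^2 = \frac1\n\sum_k \M_{kh}^2 \le \L^2$ and $R_h$ is a mean-zero remainder. The deterministic part contributes $(\phat-\p)^2$ times a vector of $\ell_2$-norm $O(\n\p)\cdot\smallnorm{\V_\p^{(\m)}}_F = O(\n\p\sqrt{\m})$ (using that the columns of $\V_\p^{(\m)}$ are orthonormal and $\m$ is fixed), giving a contribution of order $\frac{1}{\n^2\d^2}\cdot\frac{\p(1-\p)}{\n\d}\cdot(\n\p)^2 = \frac{\p^3(1-\p)}{\n\d^3}$, which is exactly the first term in the stated maximum. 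The remainder part, after Cauchy–Schwarz in the expectation against $(\phat-\p)^2$ and a bound on $\expect\smallnorm{R\,\V_\p^{(\m)}}_F^2$ — here one uses that $\var(R_h) = O(\n\p)$ for the subgaussian/Bernoulli fourth moments plus independence across rows, together with the fact that $\V_\p^{(\m)}$ has bounded $\ell_\infty\to\ell_2$ operator behavior because its entries are $O(\d^{-1/2})$ under Assumption \ref{assume1} — should produce the smaller second term $\frac{\p^2(1-\p)}{\n^2\d^{5/2}}$; the transposed statement for $\hat{\Sig}_{\phat t}-\hat{\Sig}_{\p t}$ follows by the same argument with $\n$ and $\d$ interchanged.

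The main obstacle I anticipate is the dependence between $\V_\p^{(\m)}$ and $\diag(\hat{\Sig})$: $\V_\p^{(\m)}$ is the top-$\m$ eigenbasis of $\hat{\Sig}_\p$, which is built from the very same data, so one cannot simply treat $\V_\p^{(\m)}$ as a fixed orthonormal matrix when taking expectations. Handling this cleanly will require either (i) replacing $\V_\p^{(\m)}$ by the deterministic $\V^{(\m)}$ at the cost of an error controlled by Proposition \ref{prop1} and Lemma \ref{lem6} (a $\sin\Theta$-type bound), then arguing the replaced term has the claimed order, or (ii) using a high-probability uniform bound $\smallnorm{\diag(\hat{\Sig})}_2 = O(\n\p)$ valid on an event of probability $1-O(\n^{-\mu})$ (from a Bernstein/subgaussian tail bound on each diagonal entry and a union bound over $\d$ entries, which is affordable since $\d\le\n\le e^{\d^\alpha}$), combined with the crude deterministic bound $\smallnorm{\diag(\hat{\Sig})\V_\p^{(\m)}}_F \le \smallnorm{\diag(\hat{\Sig})}_2\sqrt{\m}$ on that event and a trivial $O((\n\d)^2)$ worst-case bound off it; the off-event contribution is then negligible because $\expect(\phat-\p)^2$ times the tail probability is polynomially small. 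Approach (ii) is likely the most robust and is what I would carry out, being careful that the entrywise tail bound correctly accounts for the Bernoulli thinning (so the effective variance proxy of $\diag(\hat{\Sig})_{hh}$ is $O(\n\p)$, not $O(\n)$), which is the source of the extra factor $\p$ beyond the naive count.
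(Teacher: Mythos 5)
Your plan is essentially the paper's proof: factor $\hat{\Sig}_{\phat}-\hat{\Sig}_\p=(\p-\phat)\,\diag(\hat{\Sig})$, center $\diag(\hat{\Sig})$ around $\p\,\diag(\M^T\M)+\n\p\sig^2 I_\d$, let the deterministic part times $\expect(\phat-\p)^2=\p(1-\p)/(\n\d)$ give the term $\p^3(1-\p)/(\n\d^3)$, and treat the mean-zero remainder by Cauchy--Schwarz/H\"older against $(\phat-\p)^2$ with fourth-moment bounds, which gives $\p^2(1-\p)/(\n^2\d^{5/2})$; both rate computations check out. The only real difference is how the dependence between $\V_\p^{(\m)}$ and $\diag(\hat{\Sig})$ is handled: the paper dispatches it in a single deterministic step, $\smallnorm{A\V_\p^{(\m)}}_F^2\le \m\,\smallnorm{A}_2^2$ for any matrix with $\m$ orthonormal columns, so no independence, no event-splitting, and no replacement of $\V_\p^{(\m)}$ by $\V^{(\m)}$ is needed --- after that the whole lemma is a moment calculation on $\phat-\p$ and on $\max_h$ of the centered diagonal entries. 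Your option (ii) is the same idea wrapped in an unnecessary high-probability event, and as stated it has a flaw: there is no ``trivial $O((\n\d)^2)$ worst-case bound'' off the event, because $\epsilon$ is unbounded sub-Gaussian; you would have to control the off-event contribution by Cauchy--Schwarz with moment bounds, at which point you may as well drop the event altogether, as the paper does. Also, your appeal to Assumption \ref{assume1} for entrywise bounds $O(\d^{-1/2})$ on $\V_\p^{(\m)}$ is unjustified --- $\V_\p^{(\m)}$ is the empirical eigenbasis of $\hat{\Sig}_\p$ and no delocalization of it is assumed or proved --- but that claim is not needed once you bound $\smallnorm{R\,\V_\p^{(\m)}}_F\le\sqrt{\m}\,\smallnorm{R}_2=\sqrt{\m}\max_h|R_h|$ and use $\sum_h\expect R_h^4=O(\d\,\p^2\n^2)$.
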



\begin{proof} [Proof of Theorem \ref{thm1}]
We only prove \eqref{Vconsist} because \eqref{Uconsist} can be proved similarly.

By triangle inequality and Proposition \ref{prop1}, we have
\begin{eqnarray} \label{thm1:eq0}
&&\expect\smallnorm{\sin \big( \Vhat^{(\m)}, \V^{(\m)}\big)}_F^2
\cr
&&\le 4\, \expect\smallnorm{\sin \big( \Vhat^{(\m)}, \V_\p^{(\m)}\big)}_F^2
	+ 4 \expect\smallnorm{\sin \big( \V_\p^{(\m)}, \V^{(\m)}\big)}_F^2
\cr
&&\le 4\, \expect\smallnorm{\sin \big( \Vhat^{(\m)}, \V_\p^{(\m)}\big)}_F^2
	+ \frac{C\, \n^{-1}}{\p\,({b}_{\m}^2 - {b}_{\m+1}^2)^2}.
\end{eqnarray}
Now, consider $\expect\smallnorm{\sin \big( \Vhat^{(\m)}, \V_\p^{(\m)}\big)}_F^2$.
Let 
\begin{eqnarray*}
E_1 = \left\{\max_{1\le i\le\d} \; \frac{1}{\n\d} \big| {\lam_\p^2}_{i} - {\ddot{\lam_\p}^2}_{i} \big|< t_1 \right\}, 
\end{eqnarray*}
where $t_1 = C_{1}'\, \p \frac{\log\n}{\d} + C_1''\, \p^{3/2}\sqrt{\frac{\log \n}{\n}}$,
and
\begin{eqnarray*}
E_2 = \left\{\frac{1}{\n\d} | {\lam_\p^2}_{\m+1} - {\lam_{\phat}^2}_{\m+1} |< t_2 \right\}.
\end{eqnarray*}
where $t_2 = C_{2} \,\p^{3/2}\sqrt{ \frac{\log\n}{\n\d}}\frac{1}{\d}$.
Then, by Weyl's theorem (\cite{li1998one}), Lemma \ref{lem6}, and Lemma \ref{lem4}, 
we have for large constants $C_{1}', C_{1}''$, and $C_{2}$,
$$
\prob(E_1^c) \le \prob\(\frac{1}{\n\d} \norm{\hat{\Sig}_\p - \expect \hat{\Sig}_\p}_2 \ge t_1 \) = O\( \n^{-4} \) 
\text{ and}
$$
$$
\prob(E_2^c) \le \prob\(\frac{1}{\n\d} \norm{\hat{\Sig}_{\phat} - \hat{\Sig}_\p}_2 \ge t_2 \) = O\( \n^{-4} \).
$$
Thus, for large $\n$ and $\d$,
\begin{eqnarray} \label{thm1:eq1}
&&\expect\smallnorm{\sin \big( \Vhat^{(\m)}, \V_\p^{(\m)}\big)}_F^2
\cr
&&= \expect \bigg\{ \smallnorm{\sin \big( \Vhat^{(\m)}, \V_\p^{(\m)}\big)}_F^2 \; \I_{(E_1 \cap E_2)^c} \bigg\}
\cr
&&\quad\quad\quad\quad
	+ \expect \bigg\{ \smallnorm{\sin \big( \Vhat^{(\m)}, \V_\p^{(\m)}\big)}_F^2\; \I_{E_1 \cap E_2} \bigg\}
\cr
&&\le \m\, \Big\{ \expect \big( \I_{E_2^c} \big) 	+ \expect \big( \I_{E_1^c} \big) \Big\}
	+ \expect \Bigg\{ \frac{ \norm{ \frac{1}{\n\d} \( \hat{\Sig}_{\phat}-\hat{\Sig}_\p \) \V_\p^{(\m)} }_F^2}
		{ \( \frac{1}{\n\d} | {\lam_\p^2}_{\m} - {\lam_{\phat}^2}_{\m+1} | \)^2}
			\; \I_{E_1 \cap E_2} \Bigg\}
\cr
&&\le c \n^{-4}
	+ \expect \Bigg\{ \frac{ \norm{ \frac{1}{\n\d} \( \hat{\Sig}_{\phat}-\hat{\Sig}_\p \) \V_\p^{(\m)} }_F^2 \; \I_{E_1 \cap E_2}}
		{ \( \frac{1}{\n\d} | {\ddot{\lam_\p}^2}_{\m} - {\ddot{\lam_\p}^2}_{\m+1} | 
			-t_2 - 2 t_1 \)^2} \Bigg\}
\cr
&&\le c \n^{-4}
	+ \expect \Bigg\{ \frac{ \norm{ \frac{1}{\n\d} \( \hat{\Sig}_{\phat}-\hat{\Sig}_\p \) \V_\p^{(\m)} }_F^2}
		{ \( \frac{1}{2\n\d} | {\ddot{\lam_\p}^2}_{\m} - {\ddot{\lam_\p}^2}_{\m+1} | \)^2} \Bigg\}
\cr
&&\le c \n^{-4} + \frac{C(1-\p)}{({b}_{\m}^2 - {b}_{\m+1}^2)^2}\max \Bigg\{ \frac{1}{\p\n\d^3}, \frac{1}{\p^2\n^2\d^{5/2}} \Bigg\},
\end{eqnarray}
where $\I_E$ is an indicator function of an event $E$, the first inequality holds 
	by the fact that $\|\sin (\Vhat^{(\m)}, \V_\p^{(\m)})\|_F^2 \le \m$
	and Davis-Kahan $\sin \theta$ theorem (Theorem 3.1 in \cite{li1998two}),
and the last inequality is due to Lemma \ref{lem5}.
 
By \eqref{thm1:eq0} and \eqref{thm1:eq1}, the result \eqref{Vconsist} follows.
\end{proof}


\subsection{Proofs for Theorem \ref{thm2}} \label{proofs:thm2}

The proof of the following propositions are in Appendix \ref{apdx2}.


\begin{proposition} \label{lem8}
Under the assumptions in Theorem \ref{thm2}, we have  
\begin{eqnarray*}
&&\sqrt{\n\d} \,\Gamma_{\n\d}^{-1/2} \[
\begin{pmatrix}
\frac{1}{\n\d\,\p^2} \sum_{i=1}^{\m} {\lam_\p^2}_i \smallskip \\ 
\frac{\p^2}{\n\d} \sum_{i=1}^{\m}( \lam^2_i + \n\sig^2 ) \,\phat
\end{pmatrix}
-
\begin{pmatrix}
\frac{1}{\n\d} \sum_{i=1}^{\m}\left[ \lam^2_i + \n\sig^2 \right] \smallskip \\ 
\frac{\p^3}{\n\d} \sum_{i=1}^{\m}( \lam^2_i + \n\sig^2 ) 
\end{pmatrix}
\] \cr\cr
&&\rightarrow  \mathcal{N}\( 0, I_2\) \text{ in distribution,} \quad \text{as $\n,\d\to\infty$,}
\end{eqnarray*} 
where $\lam_{\p i}$, $\lam_{i}$, and $\phat$ are defined in \eqref{Sigpp}, \eqref{mod}, and \eqref{phat}, respectively, and $\Gamma_{\n\d} = \Gamma_{\n\d}^T \in \real^{2\times2}$ consists of 
\begin{eqnarray*}
&&(\Gamma_{\n\d})_{11} = \frac{4(1-\p)}{\p}\sum_{k=1}^\n \sum_{h=1}^\d {\M}_{kh}^2 
	\Bigg\{ \sum_{i=1}^\m b_i\U_{ik}\V_{ih} \Bigg\}^2 + \frac{4\sig^2}{\p} \sum_{i=1}^\m b_i^2,
\cr
&&(\Gamma_{\n\d})_{12} = 2\p^2(1-\p) \Bigg(\sum_{i=1}^\m  b^2_i\Bigg)^2,
	\text{ and } (\Gamma_{\n\d})_{22} = \p^5(1-\p) \(\sum_{i=1}^\m b_{i}^2\)^2.
\end{eqnarray*}

\end{proposition}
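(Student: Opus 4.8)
The plan is to prove the bivariate limit by the Cram\'er--Wold device, after expressing both coordinates of the centred vector as averages of independent contributions of the rows of $\Mp$. The second coordinate is immediate: $\phat-\p=(\n\d)^{-1}\sum_{k=1}^{\n}Z_k$ with $Z_k:=\sum_{h=1}^{\d}(y_{kh}-\p)$ independent across $k$, and since $\lam_i^2=b_i^2\n\d$ while $\n\sig^2=o(\n\d)$, we have $\sum_{i=1}^{\m}(\lam_i^2+\n\sig^2)=\n\d\big(\sum_{i=1}^{\m}b_i^2\big)(1+o(1))$, so up to an $o_p((\n\d)^{-1/2})$ term the second coordinate equals $\p^2(\n\d)^{-1}\big(\sum_{i=1}^{\m}b_i^2\big)\sum_k Z_k$. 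The substantive work concerns the first coordinate, and the first task is to linearise the sum of the top $\m$ eigenvalues of $\hat{\Sig}_\p$.

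To do this, set $A:=\expect\hat{\Sig}_\p=\p^2\M^T\M+\n\p^2\sig^2 I_\d$ and $E:=\hat{\Sig}_\p-A$. By Proposition \ref{prop0} the top-$\m$ eigenspace of $A$ is the column space of $\V^{(\m)}$ and the eigengap at $\m$ is $\p^2(\lam_\m^2-\lam_{\m+1}^2)=\p^2({b}_\m^2-{b}_{\m+1}^2)\n\d>0$ by Assumption \ref{assume1}(2). With $P:=\V^{(\m)}(\V^{(\m)})^T$ and $\hat P$ the analogous top-$\m$ eigen-projection of $\hat{\Sig}_\p$, the Ky Fan variational identity $\sum_{i=1}^{\m}{\lam_\p^2}_i=\max_{Q^TQ=I_\m}\trace(Q^T\hat{\Sig}_\p Q)$ applied to both $\hat{\Sig}_\p$ and $A$ sandwiches $\sum_{i=1}^{\m}{\lam_\p^2}_i$ between $\trace(P\hat{\Sig}_\p)$ and $\trace(P\hat{\Sig}_\p)+\langle\hat P-P,E\rangle$, so that
\begin{equation*}
\sum_{i=1}^{\m}{\lam_\p^2}_i=\sum_{i=1}^{\m}\p^2(\lam_i^2+\n\sig^2)+\sum_{i=1}^{\m}\V_i^TE\,\V_i+R_{\n\d},\qquad 0\le R_{\n\d}\le\langle\hat P-P,E\rangle .
\end{equation*}
The standard second-order eigenvalue-perturbation bound together with Davis--Kahan gives $R_{\n\d}=O\big(\norm{(\hat{\Sig}_\p-\expect\hat{\Sig}_\p)\V^{(\m)}}_F^2/(\p^2\n\d)\big)$, and the second-moment estimate for $(\hat{\Sig}_\p-\expect\hat{\Sig}_\p)\V^{(\m)}$ underlying Proposition \ref{prop1} (cf.\ Lemmas \ref{lem6} and \ref{lem5}) yields $\expect R_{\n\d}=O(\p\d)$, which is $o(\p^{3/2}\sqrt{\n\d})$ --- the scale relevant after normalisation by $\Gamma_{\n\d}^{-1/2}$ --- exactly when $\p\n\d^{-1}\to\infty$. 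Hence $R_{\n\d}$ is negligible at that scale, and the first coordinate equals $(\n\d\p^2)^{-1}\sum_{i=1}^{\m}\V_i^T(\hat{\Sig}_\p-\expect\hat{\Sig}_\p)\V_i$ up to a term of smaller order.

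Next I would decompose this linear statistic over rows. Since $\hat{\Sig}=\Mp^T\Mp$, we have $\sum_{i=1}^{\m}\V_i^T\hat{\Sig}_\p\V_i=\sum_{k=1}^{\n}g_k$ with $g_k:=\sum_{i=1}^{\m}\big[(\sum_{h}\Mp_{kh}\V_{ih})^2-(1-\p)\sum_{h}\V_{ih}^2\Mp_{kh}^2\big]$, and the pairs $(g_k-\expect g_k,\ Z_k)$, $k=1,\dots,\n$, are independent, forming a triangular array. Using $\expect\Mp_{kh}=\p\M_{kh}$, $\var(\Mp_{kh})=\p(1-\p)\M_{kh}^2+\p\sig^2$, $\cov(\Mp_{kh},y_{kh})=\p(1-\p)\M_{kh}$, together with the orthonormality of $\U$ and $\V$, one checks that the dominant part of $g_k-\expect g_k$ is the bilinear term $2\sum_{h}\big(\sum_{i=1}^{\m}\p\lam_i\U_{ik}\V_{ih}\big)(\Mp_{kh}-\p\M_{kh})$, and a direct variance computation (using $\sum_{k,h}(\sum_i b_i\U_{ik}\V_{ih})^2=\sum_i b_i^2$) then shows that $\n\d$ times the covariance matrix of the rescaled pair converges to $\Gamma_{\n\d}$; in particular the off-diagonal $(\Gamma_{\n\d})_{12}=2\p^2(1-\p)(\sum_i b_i^2)^2$ comes entirely from the shared dependence of $g_k$ and $Z_k$ on the $y_{kh}$. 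Finally, by Cram\'er--Wold it suffices to prove a univariate Lyapunov CLT for $s\sum_k(g_k-\expect g_k)+t\sum_k Z_k$ for every $(s,t)$, and the Lyapunov ratio $\sum_k\expect|\cdot|^{4}/\big(\sum_k\var(\cdot)\big)^{2}\to0$ follows from Rosenthal's inequality, the sub-Gaussianity of $\epsilon$, and the non-spikiness $\max_k\U_{ik}^2=O(\n^{-1})$, $\max_h\V_{ih}^2=O(\d^{-1})$ of Assumption \ref{assume1}(1) (see Remark \ref{rmk:justfyND}), which bound $\sum_k\expect(g_k-\expect g_k)^4$ and $\sum_k\expect Z_k^4$ by quantities of smaller order than $\big(\sum_k\var(\cdot)\big)^2\asymp(\n\d)^2$.

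The main obstacle is the first step: showing that the nonlinearity of ``sum of the top $\m$ eigenvalues'' costs only a remainder of order $o(\p^{3/2}\sqrt{\n\d})$. This is exactly where the extra hypothesis $\n\d^{-1}\to\infty$ (sharply, $\p\n\d^{-1}\to\infty$) is used, and why it is needed here but not in Theorem \ref{thm1}; the argument must pair the Davis--Kahan bound of Proposition \ref{prop1} with a \emph{sharp} second-moment bound on $(\hat{\Sig}_\p-\expect\hat{\Sig}_\p)\V^{(\m)}$ rather than a crude spectral-norm bound. A secondary, purely computational, difficulty is pinning down $\Gamma_{\n\d}$ exactly: the fourth-moment-type terms in the Bernoulli variables $y_{kh}$ and the sub-Gaussian errors $\epsilon_{kh}$ must be tracked with care to confirm both $(\Gamma_{\n\d})_{11}$ and, especially, the cross-covariance $(\Gamma_{\n\d})_{12}$.
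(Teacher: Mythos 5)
Your proposal is correct and matches the paper's proof in all essentials: Cram\'er--Wold, linearization of $\sum_{i\le\m}{\lam_\p^2}_i$ around $\trace\big(\V^{(\m)T}\hat{\Sig}_\p\V^{(\m)}\big)$ with a quadratic remainder of order $O_p(\p\d)$ controlled by the Davis--Kahan/second-moment bounds behind Proposition \ref{prop1} (negligible precisely when $\p\n\d^{-1}\to\infty$), and then a Lyapunov CLT for the resulting sums of independent terms with the same variance computation identifying $\Gamma_{\n\d}$, including the cross term coming from the shared $\y_{kh}$'s. The only differences are technical dressing: you get the remainder via a Ky Fan sandwich and verify Lyapunov on row-level sums with fourth moments, whereas the paper aligns $\V_\p^{(\m)}$ with $\V^{(\m)}$ through an orthonormal $\O$ and exploits that the $\V_{\p i}$ are exact eigenvectors (term $(f)$ of Lemma \ref{prop2}), then checks a third-moment Lyapunov condition over the doubly-indexed independent array.
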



\begin{proposition} \label{prop5}
Under the model setup in Section \ref{setup} and Assumption \ref{assume1}, 
let 
$$\hat{\tau}_{\p} = \frac{1}{\d-\rank} \trace\(\V_{\p c}^T \hat\Sig_\p \V_{\p c}\),$$
where $\hat\Sig_\p$ and $\V_{\p c}$ are defined in \eqref{Sigpp}.
Then, we have
$\hat{\tau}_{\p} - \n\p^2\sig^2 = O_p\(\p\sqrt{\n}\)$.
\end{proposition}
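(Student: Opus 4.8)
The plan is to expand $\hat\tau_\p$ directly and track the size of each term. By definition,
\[
\hat\tau_\p = \frac{1}{\d-\rank}\trace\(\V_{\p c}^T \hat\Sig_\p \V_{\p c}\)
= \frac{1}{\d-\rank}\sum_{i=\rank+1}^\d {\lam_\p^2}_i,
\]
using the eigendecomposition of $\hat\Sig_\p$ in \eqref{Sigpp}, since $\V_{\p c}$ collects the trailing eigenvectors. So the quantity to control is the average of the smallest $\d-\rank$ eigenvalues of $\hat\Sig_\p$. I would then compare this with the corresponding average for $\expect\hat\Sig_\p$, whose trailing $\d-\rank$ eigenvalues all equal $\p^2\n\sig^2$ by Proposition \ref{prop0}. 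Thus $\expect\hat\Sig_\p$ contributes exactly $\n\p^2\sig^2$, and the task reduces to bounding
\[
\Big|\hat\tau_\p - \n\p^2\sig^2\Big|
= \frac{1}{\d-\rank}\Big|\sum_{i=\rank+1}^\d \big({\lam_\p^2}_i - {\ddot{\lam_\p}^2}_i\big)\Big|.
\]

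The key step is to replace the eigenvalue differences by something computable. First I would use Weyl's inequality (as already invoked in the proof of Theorem \ref{thm1}) so that $\max_i |{\lam_\p^2}_i - {\ddot{\lam_\p}^2}_i| \le \|\hat\Sig_\p - \expect\hat\Sig_\p\|_2$; Lemma \ref{lem6} then gives a high-probability bound on the right-hand side of order $\n\d\cdot\max\{\p\log\n/\d,\ \p^{3/2}\sqrt{\log\n/\n}\}$, which after dividing by $\d-\rank$ and summing $\d-\rank$ terms is of order $\n\d\cdot\max\{\p\log\n/\d,\ \p^{3/2}\sqrt{\log\n/\n}\}$ — too crude, since we want only $\p\sqrt\n$. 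So a cruder bound is not enough, and instead I would exploit cancellation: use the identity
\[
\sum_{i=\rank+1}^\d {\lam_\p^2}_i
= \trace(\hat\Sig_\p) - \sum_{i=1}^\rank {\lam_\p^2}_i,
\]
so that
\[
\sum_{i=\rank+1}^\d \big({\lam_\p^2}_i - {\ddot{\lam_\p}^2}_i\big)
= \big(\trace(\hat\Sig_\p) - \expect\trace(\hat\Sig_\p)\big)
- \sum_{i=1}^\rank \big({\lam_\p^2}_i - {\ddot{\lam_\p}^2}_i\big),
\]
because $\trace(\expect\hat\Sig_\p) = \sum_i {\ddot{\lam_\p}^2}_i$. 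The first bracket is a centered sum of the diagonal entries of $\hat\Sig_\p = \hat\Sig - (1-\p)\diag(\hat\Sig)$, i.e. $\trace(\hat\Sig_\p) = \p\sum_{k,h}\Mp_{kh}^2$ up to the centering; this is a sum of $\n\d$ independent (given the randomness) bounded-variance terms, so by a Bernstein/Hoeffding argument its fluctuation is $O_p(\sqrt{\n\d}\cdot\text{(scale)})$, and a careful bookkeeping of the variance of $\Mp_{kh}^2 = \y_{kh}(\M_{kh}+\epsilon_{kh})^2$ together with $\|\M\|_\infty \le \L$ and Assumption \ref{assume1}(1) should give a fluctuation of order $\p\sqrt{\n\d}$, hence order $\p\sqrt\n$ after dividing by $\d$. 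For the second bracket, there are only $\rank$ (fixed) terms, each bounded by $\|\hat\Sig_\p - \expect\hat\Sig_\p\|_2$ via Weyl; but again this is too big, so here I would instead note that ${\lam_\p^2}_i - {\ddot{\lam_\p}^2}_i$ for $i\le\rank$ is controlled in the course of proving Proposition \ref{lem8} (which establishes a CLT for $\sum_{i=1}^\m{\lam_\p^2}_i$ at scale $\sqrt{\n\d}\cdot\p$), giving $\sum_{i=1}^\rank({\lam_\p^2}_i - {\ddot{\lam_\p}^2}_i) = O_p(\p\sqrt{\n\d})$, hence $O_p(\p\sqrt\n)$ after dividing by $\d-\rank$.

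The main obstacle will be getting the \emph{sharp} $\p\sqrt\n$ rate rather than the loose spectral-norm rate: the naive Weyl bound overcounts by roughly a factor of $\d$, so the argument must genuinely use the trace (a linear statistic with a concentration rate governed by its variance, $\sim\p(1-\p)\n\d\cdot\text{const}$) together with the already-established sharp control on the top $\rank$ eigenvalues. Once those two pieces are in hand, combining them with a union bound over the $O(\n^{-\mu})$-probability exceptional events and dividing by $\d-\rank \asymp \d$ yields $\hat\tau_\p - \n\p^2\sig^2 = O_p(\p\sqrt{\n\d}/\d) = O_p(\p\sqrt{\n})$ as claimed. I would also double-check that the centering constant is exactly $\n\p^2\sig^2$ and not merely $\n\p^2\sig^2 + O(\p^2\n\d^{-1})$ coming from the $\rank/(\d-\rank)$ discrepancy between $\frac1{\d-\rank}\sum_{i>\rank}{\ddot{\lam_\p}^2}_i$ and $\p^2\n\sig^2$; since each trailing $\ddot\lam_\p^2$ equals exactly $\p^2\n\sig^2$, this is clean, and the lower-order corrections are absorbed into the $O_p(\p\sqrt\n)$ term anyway.
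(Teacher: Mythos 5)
Your argument is correct, but it is organized quite differently from the paper's proof. You reduce $\hat\tau_\p-\n\p^2\sig^2$ to the identity $\sum_{i>\rank}\big({\lam_\p^2}_i-{\ddot{\lam_\p}^2}_i\big)=\big(\trace\hat{\Sig}_\p-\expect\trace\hat{\Sig}_\p\big)-\sum_{i\le\rank}\big({\lam_\p^2}_i-{\ddot{\lam_\p}^2}_i\big)$, then concentrate the full trace (a sum of $\n\d$ independent terms, since $\trace\hat{\Sig}_\p=\p\sum_{k,h}\Mp_{kh}^2$) and reuse the sharp control of the top-$\rank$ eigenvalues; the paper instead works directly with the quadratic form in the complement space, comparing $\trace(\V_{\p c}^T\hat\Sig_\p\V_{\p c})$ to $\trace(\Vc^T\expect\hat\Sig_\p\Vc)$ after a Procrustes rotation, exploiting $\M\Vc=0$ to kill the signal terms, bounding the variances of the five resulting centered quadratic forms, and handling the eigenvector swap by a Davis--Kahan-type term of order $O_p(\p)$. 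Your route buys economy: no new variance computations in the complement space, and in fact a slightly sharper bound (roughly $O_p(\p^{3/2}\sqrt{\n/\d}+\p)$, comfortably inside the claimed $O_p(\p\sqrt\n)$); the paper's route is self-contained and exhibits the explicit structure of the error. Two small points to fix in your write-up: the eigenvalue control you need is Lemma \ref{prop2} (the expansion used inside the proof of Proposition \ref{lem8}), not Proposition \ref{lem8} itself, since the latter's hypotheses include $\n\d^{-1}\to\infty$, which Proposition \ref{prop5} does not assume; and you should note explicitly that Lemma \ref{prop2} applies with $\m=\rank$, which is legitimate because $b_\rank\ge 1/c>0=b_{\rank+1}$ under Assumption \ref{assume1}(1), so the eigengap needed for the Davis--Kahan step in its proof is automatic.
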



\begin{proof}[Proof of Theorem \ref{thm2}]
We have
\begin{eqnarray*}
&&\frac{1}{\sqrt{\n\d}} \left\{ \sum_{i=1}^{\m} \lamhat_i^2 - \sum_{i=1}^{\m} \lam^2_i \right\}
\cr
&&=\frac{1}{\sqrt{\n\d}} \left\{ \( \phat^{-2} \sum_{i=1}^{\m} {\lam_{\phat}^2}_i - \sum_{i=1}^{\m}\left[ \lam^2_i + \n\sig^2 \right] \)
	+ \m \( \n\sig^2 - \frac{1}{\phat^{2}} \tauhatphat \)  \right\}
\cr
&&=\frac{1}{\sqrt{\n\d}} \left\{ \( a \) + \m \( b \)  \right\}.
\end{eqnarray*}

First, consider the term $(a)$.
We have
\begin{eqnarray} \label{prop3:eq1}
(a)
&=&\frac{1}{\phat^2} \, \trace\( \Vhat^{(\m)T} \hat{\Sig}_{\phat} \Vhat^{(\m)} \) - \sum_{i=1}^{\m}\left[ \lam^2_i + \n\sig^2 \right]
\cr 
&=&\bigg\{\frac{1}{\p^2} \, \trace\( \Vhat^{(\m)T} \hat{\Sig}_\p \Vhat^{(\m)} \) - \sum_{i=1}^{\m}\left[ \lam^2_i + \n\sig^2 \right]\bigg\}
\cr 
&&
	+ \bigg\{ \frac{1}{\p^2} \, \trace\( \Vhat^{(\m)T} \hat{\Sig}_{\phat} \Vhat^{(\m)} \) - \frac{1}{\p^2} \, \trace\( \Vhat^{(\m)T} \hat{\Sig}_\p \Vhat^{(\m)} \) \bigg\}
\cr 
&&
	+ \bigg\{ \frac{1}{\phat^2} \, \trace\( \Vhat^{(\m)T} \hat{\Sig}_{\phat} \Vhat^{(\m)} \) - \frac{1}{\p^2} \, \trace\( \Vhat^{(\m)T} \hat{\Sig}_{\phat} \Vhat^{(\m)} \) \bigg\}
\cr 
&=&(i) + (ii) + (iii) .
\end{eqnarray}

By \eqref{sineDist}, there is $\O \in \mathbb{V}_{\m,\m}$ such that 
$$\smallnorm{\Vhat^{(\m)} - \V_\p^{(\m)}\O}_F^2 \le 2\smallnorm{\sin (\Vhat^{(\m)}, \V_\p^{(\m)})}_F^2
	\quad\text{and}\quad
	\O_i^T\V_\p^{(\m)T}\hat{\Sig}_\p\V_\p^{(\m)}\O_i=\lam_{\p i}^2,$$
where $\O_i$ is the $i$-th column of $\O$.
Then, the term $(i)$ is
\begin{eqnarray} \label{prop3:eq2(i)}
(i)
&=&\frac{1}{\p^2} \, \trace\( \O^T\V_\p^{(\m)T} \hat{\Sig}_\p \V_\p^{(\m)}\O \) - \sum_{i=1}^{\m}\left[ \lam^2_i + \n\sig^2 \right]
\cr
&&\quad\quad
	+\frac{1}{\p^2} \, \trace\( \Vhat^{(\m)T} \hat{\Sig}_\p \Vhat^{(\m)} - \O^T\V_\p^{(\m)T} \hat{\Sig}_\p \V_\p^{(\m)}\O \)
\cr
&=& \frac{1}{\p^2} \, \trace\( \V_\p^{(\m)T} \hat{\Sig}_\p \V_\p^{(\m)} \) - \sum_{i=1}^{\m}\left[ \lam^2_i + \n\sig^2 \right]
\cr
&&\quad\quad
	+\frac{1}{\p^2} \, \sum_{i=1}^\m \( {\Vhat}_i^{T} \hat{\Sig}_\p {\Vhat}_i - \O_i^T{\V_\p}^{T} \hat{\Sig}_\p {\V_\p} \O_i \) 
\cr
&=& \frac{1}{\p^2} \sum_{i=1}^{\m} {\lam_\p^2}_i - \sum_{i=1}^{\m}\left[ \lam^2_i + \n\sig^2 \right]
	+ O_p\(\frac{1}{\p\d^2}\),
\end{eqnarray}
where the last equality holds by the fact that 
\begin{eqnarray}  \label{prop3:eq2(i)-supp}
&&\left| \sum_{i=1}^\m 
	\(\Vhat_i^{T}\hat\Sig_\p \Vhat_i - \O_i^T \V_\p^{(\m)T}\hat\Sig_\p \V_\p^{(\m)}\O_i \) \right|
\cr
&&=\left| \sum_{i=1}^\m \[(\Vhat_i-\V_\p^{(\m)}\O_i)^T\hat\Sig_\p(\Vhat_i-\V_\p^{(\m)}\O_i) 
	+ 2 \lam_{\p i}^2 \O_i^T \V_\p^{(\m)T}\Vhat_i - 2 \lam_{\p i}^2 \] \right|
\cr
&&=\left| \sum_{i=1}^\m \[(\Vhat_i-\V_\p^{(\m)}\O_i)^T\hat\Sig_\p(\Vhat_i-\V_\p^{(\m)}\O_i) 
	- \lam_{\p i}^2 \norm{ \Vhat_i-\V_\p^{(\m)}\O_i }_2^2 \] \right|
\cr
&&\le2 \lam_{\p 1}^2 \sum_{i=1}^\m \norm{ \Vhat_i-\V_\p^{(\m)}\O_i }_2^2
\cr
&&=2 \lam_{\p 1}^2 \norm{ \Vhat^{(\m)}-\V_\p^{(\m)}\O }_F^2
\cr
&&=O_p\(\frac{\p}{\d^2}\),
\end{eqnarray}
where the last equality is due to \eqref{sineDist}, \eqref{thm1:eq1}, and \eqref{prop3:eq2(i)-supp-supp} below;
by the application of Weyl's theorem (\cite{li1998one}) and Lemma \ref{lem6},
	we can show
\begin{eqnarray} \label{prop3:eq2(i)-supp-supp}
\lam_{\p 1}^2 = O_p(\p^2\n\d).
\end{eqnarray}

The term $(ii)$ is
\begin{eqnarray} \label{prop3:eq2(ii)}
\expect \left| (ii) \right| 
&=&\expect \left| \frac{1}{\p^2} \, (\phat - \p)\, \trace\( \Vhat^{(\m)T} \diag(\hat{\Sig}) \Vhat^{(\m)} \) \right| 
\cr
&\le&\frac{\m}{\p^2} \; \expect \left| (\phat - \p) \; \max_{1\le i\le \m} {\Vhat}_i^T \diag(\hat{\Sig}) {\Vhat}_i \right| 
\cr
&\le& \frac{\m}{\p^2} \bigg\{ \expect(\phat - \p)^2 \bigg\}^{1/2} 
	\bigg\{ \expect \Big[ \max_{1\le i\le \m} {\Vhat}_i^T \diag(\hat{\Sig}) {\Vhat}_i \Big]^2 \bigg\}^{1/2}
\cr
&\le& \frac{\m}{\p^2} \bigg\{ \expect(\phat - \p)^2 \bigg\}^{1/2} 
	\bigg\{ \expect \Big[ \norm{ \diag(\hat{\Sig}) }_2^2 \Big] \bigg\}^{1/2}
\cr
&=& \frac{\m}{\p^2} \sqrt{ \frac{\p(1 - \p)}{\n\d} } \,\bigg\{ \expect \Big[ \norm{ \diag(\hat{\Sig}) }_2^2 \Big] \bigg\}^{1/2}
\cr
&=&O\(\max\left\{\frac{1}{\,\p\,}, \sqrt{\frac{\n}{\p\d}} \right\}\),
\end{eqnarray}
where the second inequality is due to H\"older's inequality and the last equality holds by the fact that
\begin{eqnarray*}
&&\expect \Big[ \smallnorm{ \diag(\hat{\Sig}) }_2^2 \Big]
\cr
&&\le 4\, \expect \Big[ \smallnorm{ \diag(\hat{\Sig}) - \p\,\diag(\M^T\M) - \n\p\sig^2I_\d }_2^2 
\cr
&&\quad\quad\quad\quad\quad\quad\quad\quad
	+ \smallnorm{ \p\,\diag(\M^T\M) + \n\p\sig^2 I_\d }_2^2 \Big]
\cr
&&=4\, \expect \Bigg[ \max_{1\le h \le \d} 
		\Big| \sum_{k=1}^\n \big(\Mp_{kh}^2 - \p{\M}_{kh}^2 - \p\sig^2 \big) \Big|^2 \Bigg]
\cr
&&\quad\quad\quad\quad\quad\quad\quad\quad
	+ 4\, \Big\{\max_{1\le h\le\d} \p \sum_{k=1}^\n {\M}_{kh}^2 + \n\p\sig^2 \Big\}^2
\cr
&&\le 4\,\sum_{h=1}^\d \expect \Bigg\{ \bigg| \sum_{k=1}^\n \big[\Mp_{kh}^2 - \p ({\M}_{kh}^2 + \sig^2) \big] \bigg|^2 \Bigg\}
	+ 4\,\Big\{ \n\p (\L^2 + \sig^2) \Big\}^2
\cr
&&=4\,\sum_{h=1}^\d \sum_{k=1}^\n \expect \big[\Mp_{kh}^2 - \p ({\M}_{kh}^2 + \sig^2) \big]^2
	+ 4\,\Big\{ \n\p (\L^2 + \sig^2) \Big\}^2
\cr
&&= O\(\max\{\p\n\d, \p^2\n^2\}\).
\end{eqnarray*}

The term $(iii)$ in \eqref{prop3:eq1} is
\begin{eqnarray} \label{prop3:eq2(iii)}
(iii)
&=&\(\frac{1}{\phat^2} - \frac{1}{\p^2} \) \left[ \trace\( \Vhat^{(\m)T} \hat{\Sig}_{\phat} \Vhat^{(\m)} \) - \p^2 \sum_{i=1}^{\m}\( \lam^2_i + \n\sig^2 \) \right] 
\cr
&&
	+ \(\frac{1}{\phat^2} - \frac{1}{\p^2} \) \p^2 \sum_{i=1}^{\m}\( \lam^2_i + \n\sig^2 \)
\cr
&=&\(\frac{1}{\phat^2} - \frac{1}{\p^2} \) \Bigg[ 
\trace\( \Vhat^{(\m)T} \hat{\Sig}_{\phat} \Vhat^{(\m)} \) - \trace\( \Vhat^{(\m)T} \hat{\Sig}_\p \Vhat^{(\m)} \)
\cr
&&\quad\quad\quad\quad\quad\quad\quad
+ \trace\( \Vhat^{(\m)T} \hat{\Sig}_\p \Vhat^{(\m)} \) - \trace\( \O^T\V_\p^{(\m)T} \hat{\Sig}_\p \V_\p^{(\m)}\O \)
\cr
&&\quad\quad\quad\quad\quad\quad\quad
+ \trace\( \O^T\V_\p^{(\m)T} \hat{\Sig}_\p \V_\p^{(\m)}\O \) - \p^2 \, \sum_{i=1}^{\m}\( \lam^2_i + \n\sig^2 \) \Bigg] 
\cr 
&&
	+ \(\frac{1}{\phat^2} - \frac{1}{\p^2} \) \, \p^2 \, \sum_{i=1}^{\m}\( \lam^2_i + \n\sig^2 \) 
\cr
&=&O_p\(\frac{1}{\sqrt{\p^5\n\d}} \) \,\Bigg[ 
O_p\(\max\left\{\p, \sqrt{\frac{\p^3\n}{\d}} \right\}\) + O_p\(\frac{\p}{\d^2}\) + O_p\(\sqrt{\p^3\n\d}\) \Bigg] 
\cr 
&&\quad\quad
	+ \(\frac{1}{\phat^2} - \frac{1}{\p^2} \) \, \p^2 \, \sum_{i=1}^{\m}\( \lam^2_i + \n\sig^2 \) 
\cr
&=&O_p\(\frac{1}{\p}\)
	+ \(\frac{1}{\phat^2} - \frac{1}{\p^2} \) \, \p^2 \sum_{i=1}^{\m}\( \lam^2_i + \n\sig^2 \) ,
\end{eqnarray}
where  the third equality is due to \eqref{prop3:eq2(ii)}, \eqref{prop3:eq2(i)-supp}, Proposition \ref{lem8}, and the fact that
\begin{equation} \label{prop3:eq2(iii)-1}
\sqrt{\n\d} \(\frac{1}{\phat^2} - \frac{1}{\p^2}\) \; \rightarrow \; \mathcal{N}\(0, \frac{4(1-\p)}{\p^5}\) \text{ in distribution,} \text{ as } \n,\d\to\infty, 
\end{equation}
by CLT and Delta method.
From \eqref{prop3:eq2(i)}, \eqref{prop3:eq2(ii)}, and \eqref{prop3:eq2(iii)}, we have
\begin{eqnarray} \label{prop3(a)}
&&(a) = \frac{1}{\p^2} \sum_{i=1}^{\m} {\lam_\p^2}_i - \sum_{i=1}^{\m}\left[ \lam^2_i + \n\sig^2 \right]
\cr
&&\quad\quad\quad\quad
	+ \(\frac{1}{\phat^2} - \frac{1}{\p^2} \) \, \p^2 \sum_{i=1}^{\m}\( \lam^2_i + \n\sig^2 \)
	+ o_p\(\sqrt{\frac{\n\d}{\p}}\).
\end{eqnarray}

Second, the term $(b)$ is 
\begin{eqnarray} \label{prop3(b)}
(b) &=& \n\sig^2 - \frac{1}{\phat^2}\tauhatphat
\cr
&=& \(\n\sig^2 - \frac{1}{\p^2}\tauhatp\) +  \(\frac{1}{\p^2}-\frac{1}{\phat^2}\)\tauhatphat  +  \frac{1}{\p^2}\(\tauhatp - \tauhatphat\)
\cr
&=& O_p\(\frac{\sqrt{\n}}{\,\p}\) + O_p\(\sqrt{\frac{\n}{\p\d}}\) +  \frac{1}{\p^2}\(\tauhatp - \tauhatphat\)
\cr
&=& o_p\(\sqrt{\frac{\n\d}{\p}}\),
\end{eqnarray}
where the third equality is due to Proposition \ref{prop5} and \eqref{prop3:eq2(iii)-1}, and 
	the last equality holds by the fact that 
	there is $\tilde\O \in \mathbb{V}_{\d-\rank,\d-\rank}$ by \eqref{sineDist} such that 
$$\smallnorm{\Vhat_c^{(\m)} - \V_{\p c}^{(\m)}\tilde\O}_F^2 \le 2\smallnorm{\sin (\Vhat_c^{(\m)}, \V_{\p c}^{(\m)})}_F^2
	\quad\text{and}\quad
	\tilde\O_i^T\V_{\p c}^T\hat{\Sig}_\p\V_{\p c}\tilde\O_i=\lam_{\p\, \rank+i}^2,$$
where $\tilde\O_i$ is the $i$-th column of $\tilde\O$,
and that
\begin{eqnarray*}
&&|\tauhatp - \tauhatphat |
\cr
&&=\frac{1}{(\d-\rank)} \Big| \trace\(\tilde\O^T\V_{\p c}^T \hat{\Sig}_\p \V_{\p c}\tilde\O\) - \trace\(\Vhatc^T \hat{\Sig}_\p \Vhatc\) 
\cr
&&\quad\quad\quad\quad\quad\quad\quad\quad
		+ \trace\(\Vhatc^T \hat{\Sig}_\p \Vhatc\) - \trace\(\Vhatc^T \hat{\Sig}_{\phat} \Vhatc\) \Big|
\cr
&&\le \frac{1}{(\d-\rank)} 
	\left| \trace\(\tilde\O^T\V_{\p c}^T \hat{\Sig}_\p \V_{\p c}\tilde\O\) - \trace\(\Vhatc^T \hat{\Sig}_\p \Vhatc\) \right|
\cr
&&\quad\quad\quad\quad\quad\quad\quad\quad
	+ \frac{1}{(\d-\rank)} 
		\left| \trace\(\Vhatc^T \hat{\Sig}_\p \Vhatc\) - \trace\(\Vhatc^T \hat{\Sig}_{\phat} \Vhatc\) \right|
\cr
&&\le \frac{1}{(\d-\rank)} \,4 {\lam_\p^2}_1\norm{\sin (\V_{\p c}, \Vhat_c)}_F^2
	+ \frac{1}{(\d-\rank)} 
		\left| (\phat - \p) \trace\(\Vhatc^T \diag( \hat{\Sig} ) \Vhatc\) \right|
\cr
&&= \frac{1}{(\d-\rank)} \,4 {\lam_\p^2}_1\norm{\sin (\V_\p, \Vhat)}_F^2
	 + O_p\(\max \left\{ \p,\,\p^{3/2}\sqrt{\frac{\n}{\d}} \right\}\)
\cr
&&= O_p\(\frac{\p}{\d^3}\) + O_p\(\max \left\{ \p,\,\p^{3/2}\sqrt{\frac{\n}{\d}} \right\}\),
\end{eqnarray*}
where the second inequality can be derived similarly to \eqref{prop3:eq2(i)-supp},
	the second equality holds similarly to \eqref{prop3:eq2(ii)}, and 
	the last equality is due to \eqref{thm1:eq1} and \eqref{prop3:eq2(i)-supp-supp}.

Combining the results in \eqref{prop3(a)} and \eqref{prop3(b)}, we have
\begin{eqnarray*}
&&\frac{1}{\sqrt{\n\d}} \left\{ \sum_{i=1}^{\m} \lamhat_i^2 - \sum_{i=1}^{\m} \lam^2_i \right\}
\cr
&&=\frac{1}{\sqrt{\n\d}} \left\{ \( a \) + \m \( b \)  \right\}
\cr
&&= \frac{1}{\sqrt{\n\d}} \left\{ \frac{1}{\p^2} \sum_{i=1}^{\m} {\lam_\p^2}_i - \sum_{i=1}^{\m}\left[ \lam^2_i + \n\sig^2 \right]
	+ \(\frac{1}{\phat^2} - \frac{1}{\p^2} \) \, \p^2 \sum_{i=1}^{\m}\( \lam^2_i + \n\sig^2 \) \right\} 
\cr
\cr
&&\quad\quad
	+o_p(1).
\end{eqnarray*}
Thus, by Proposition \ref{lem8}, Delta method and Slutsky's theorem, we have
\begin{eqnarray*}
&&\frac{1}{\sqrt{\n\d} \sigma_\lambda} \left\{ \sum_{i=1}^{\m} \lamhat_i^2 - \sum_{i=1}^{\m} \lam^2_i \right\}
\; \rightarrow \; \mathcal{N}\(0, 1\) \text{ in distribution,} \quad\text{as }\n,\d\to\infty,
\end{eqnarray*}
where 
$\sigma_\lambda^2
=\(1 \;\; -2\p^{-3}\) \Gamma_{\n\d} \begin{pmatrix}
1\\ 
-2\p^{-3}
\end{pmatrix}
$.
\end{proof}


\subsection{Proofs for Theorem \ref{thm3}} \label{proofs:thm3}

The proof of the following Proposition is in Appendix \ref{apdx3}.


\begin{proposition} \label{prop4}
Under the model setup in Section \ref{setup}, Assumption \ref{assume1}, and Assumption \ref{assume2}(2), we have
$$
\norm{\Mhat(\s_0)-\M}_F^2 = \frac{1}{\p\,{b}_{\rank}^4} \, O_p\(\n\),
$$
where $\Mhat(\s_0)$ are defined in \eqref{s0} and \eqref{Mhats} and $\M$ is defined in \eqref{mod}.
\end{proposition}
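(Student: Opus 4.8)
The plan is to expand the error of $\Mhat(\s_0)$ into a sum of rank-one perturbations and to control three sources of error---the singular values, the left singular vectors, and the right singular vectors---separately. Since $\s_{0i}=\mathrm{sign}(\langle\Vhat_i,\V_i\rangle)\,\mathrm{sign}(\langle\Uhat_i,\U_i\rangle)$, I would set $\tilde\U_i=\mathrm{sign}(\langle\Uhat_i,\U_i\rangle)\,\Uhat_i$ and $\tilde\V_i=\mathrm{sign}(\langle\Vhat_i,\V_i\rangle)\,\Vhat_i$, the sign-corrected versions for which $\langle\tilde\U_i,\U_i\rangle\ge 0$ and $\langle\tilde\V_i,\V_i\rangle\ge 0$. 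Then $\s_{0i}\,\lamhat_i\,\Uhat_i\Vhat_i^T=\lamhat_i\,\tilde\U_i\tilde\V_i^T$, so
\begin{equation*}
\Mhat(\s_0)-\M=\sum_{i=1}^\rank\Big[(\lamhat_i-\lam_i)\,\tilde\U_i\tilde\V_i^T+\lam_i\,(\tilde\U_i-\U_i)\tilde\V_i^T+\lam_i\,\U_i(\tilde\V_i-\V_i)^T\Big].
\end{equation*}
As $\tilde\U_i,\tilde\V_i,\U_i,\V_i$ are unit vectors, taking Frobenius norms, using the triangle inequality, and using Cauchy--Schwarz over the fixed number $\rank$ of summands reduces the claim to the three bounds
\begin{gather*}
\sum_{i=1}^\rank|\lamhat_i-\lam_i|^2=\tfrac{1}{\p\,{b}_\rank^4}O_p(\n),\qquad
\sum_{i=1}^\rank\lam_i^2\,\|\tilde\U_i-\U_i\|_2^2=\tfrac{1}{\p\,{b}_\rank^4}O_p(\n),\\
\sum_{i=1}^\rank\lam_i^2\,\|\tilde\V_i-\V_i\|_2^2=\tfrac{1}{\p\,{b}_\rank^4}O_p(\n).
\end{gather*}

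For the singular-vector terms, Assumption \ref{assume2}(2) gives $b_1>\cdots>b_\rank>0$ with constant gaps, so every nonzero singular value of $\M$ is simple and the corresponding eigenvalue of $\hat\Sig_{\phat}$ has a spectral gap of order $\p^2\n\d$ (by Proposition \ref{prop0}). Running the Davis--Kahan/Weyl argument from the proof of Theorem \ref{thm1} for a \emph{single} singular vector, rather than for the leading-$\m$ block, yields $\expect\|\tilde\V_i-\V_i\|_2^2\le C\,\n^{-1}\p^{-1}$ and $\expect\|\tilde\U_i-\U_i\|_2^2\le C\,\d^{-1}\p^{-1}$, with the constant-order gaps absorbed into $C$ (alternatively one invokes Theorem \ref{thm1} together with \eqref{sineDist} and telescopes over $\m$). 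Since $\lam_i^2={b}_i^2\n\d$ and the ${b}_i$ are bounded above, $\lam_i^2\|\tilde\V_i-\V_i\|_2^2=O_p({b}_i^2\d/\p)=O_p(\n/\p)$ (using $\d\le\n$) and $\lam_i^2\|\tilde\U_i-\U_i\|_2^2=O_p({b}_i^2\n/\p)=O_p(\n/\p)$, and both are $\tfrac{1}{\p\,{b}_\rank^4}O_p(\n)$ because ${b}_\rank$ is bounded above; summing over the $\rank$ fixed indices preserves this order.

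The singular-value term is the main obstacle, because $\n\d^{-1}\to\infty$ is \emph{not} assumed here, so Theorem \ref{corol1} is unavailable and $|\lamhat_i-\lam_i|$ must be controlled directly. Writing $\lamhat_i^2=\phat^{-2}\big(\lam_{\phat,i}^2-\tauhatphat\big)$, I would apply Weyl's inequality to the eigenvalues of $\hat\Sig_{\phat}$ together with Lemmas \ref{lem6} and \ref{lem4} to bound $\big|\lam_{\phat,i}^2-\p^2(\lam_i^2+\n\sig^2)\big|$, combine this with Proposition \ref{prop5} and the bound on $|\tauhatp-\tauhatphat|$ obtained inside the proof of Theorem \ref{thm2}, and use the $\sqrt{\n\d}$-CLT for $\phat-\p$, thereby getting $|\lamhat_i^2-\lam_i^2|=o_p(\n\d/\p)$ under $\p\d/\log\n\to\infty$. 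Dividing by $\lamhat_i+\lam_i\asymp\lam_i={b}_i\sqrt{\n\d}$ then gives $|\lamhat_i-\lam_i|^2=O_p(\n/(\p\,{b}_\rank^4))$ after using ${b}_i\ge{b}_\rank$ and the boundedness of the $b_i$; in fact this term turns out to be of strictly smaller order than the singular-vector terms. Restricting everything to the high-probability event on which $\lam_{\phat,i}^2-\tauhatphat>0$---so that each $\lamhat_i$ is well defined---costs only $o(1)$ in probability, again by Lemmas \ref{lem6}, \ref{lem4} and Proposition \ref{prop5}. Summing the three contributions and returning to the decomposition above yields $\norm{\Mhat(\s_0)-\M}_F^2=\tfrac{1}{\p\,{b}_\rank^4}O_p(\n)$.
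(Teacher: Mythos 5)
Your decomposition and your treatment of the two singular-vector terms track the paper's own proof of Proposition \ref{prop4}: the paper sets $\Delta_{\lam_i}=\lamhat_i-\lam_i$, $\Delta_{\U_i}=\text{sign}(\langle\Uhat_i,\U_i\rangle)\Uhat_i-\U_i$, $\Delta_{\V_i}=\text{sign}(\langle\Vhat_i,\V_i\rangle)\Vhat_i-\V_i$, expands each $\s_{0i}\lamhat_i\Uhat_i\Vhat_i^T-\lam_i\U_i\V_i^T$ into a value part and two vector parts, and bounds $\lam_i^2\smallnorm{\Delta_{\U_i}}_2^2$ and $\lam_i^2\smallnorm{\Delta_{\V_i}}_2^2$ by $\frac{1}{\p\,b_\rank^4}O_p(\n)$ via Theorem \ref{thm1} and \eqref{sineDist}, exactly as you propose.

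The gap is in the singular-value term. From $|\lamhat_i^2-\lam_i^2|=o_p(\n\d/\p)$ you cannot conclude $|\lamhat_i-\lam_i|^2=O_p\big(\n/(\p\,b_\rank^4)\big)$: dividing by $\lamhat_i+\lam_i\ge\lam_i\asymp\sqrt{\n\d}$ only yields $|\lamhat_i-\lam_i|^2=o_p(\n\d/\p^2)$, which misses the target $O_p(\n/\p)$ by a factor of order $\d/\p$; what you actually need is $|\lamhat_i^2-\lam_i^2|=O_p(\n\sqrt{\d/\p})$. Moreover, the route you sketch does not obviously deliver that in the full regime of Assumption \ref{assume1}: Weyl's inequality with the spectral-norm bounds of Lemmas \ref{lem6} and \ref{lem4} gives $|\lam_{\phat i}^2-\p^2(\lam_i^2+\n\sig^2)|=O_p(\p\n\log\n+\p^{3/2}\d\sqrt{\n\log\n})$, hence roughly $|\lamhat_i-\lam_i|^2=O_p(\p^{-2}\n\log^2\n/\d+\p^{-1}\d\log\n)$, and this is $O_p(\n/\p)$ only when $\log^2\n\lesssim\p\d$ and $\d\log\n\lesssim\n$ --- neither of which is implied by the assumptions (consider $\d\asymp\n$, or constant $\p$ with $\log\n$ of order $\d^{\alpha}$, $1/2<\alpha<1$). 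The paper does not use Weyl at this point: it asserts, ``similarly to the proof of Theorem \ref{thm2}'', the per-index bound $|\Delta_{\lam_i}|=O_p(\p^{-1/2}+\p^{-1}\sqrt{\d/\n})$, i.e.\ it reuses the trace-expansion machinery of Section \ref{proofs:thm2} (Lemma \ref{prop2}, the terms $(a)$ and $(b)$, Proposition \ref{prop5}, and \eqref{prop3:eq2(iii)-1}), keeping the remainder terms that were discarded there only thanks to $\n\d^{-1}\to\infty$; squaring gives $O_p(\p^{-1}+\p^{-2}\d/\n)$, comfortably $\frac{1}{\p\,b_\rank^4}O_p(\n)$. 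To close your argument, replace the Weyl/spectral-norm step by this finer expansion of $\phat^{-2}\lam_{\phat i}^2-(\lam_i^2+\n\sig^2)$ and of $\phat^{-2}\tauhatphat-\n\sig^2$; your remark that the $\lam$-term ends up of smaller order than the vector terms is then correct.
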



\begin{proof}[Proof of Theorem \ref{thm3}]
For any given $\eta >0$, we have for a large $\n$,
\begin{eqnarray*}
&&\prob\(\min_{\s \in \{-1,1\}^\rank} \; \smallnorm{ \P(\Mhat(\s)) - \P(\Mp) }_F^2 < \smallnorm{ \P(\Mhat(\s_0)) - \P(\Mp) }_F^2 \) 
	\le \eta/2
\end{eqnarray*}
by Assumption \ref{assume2}(1).
Also, for any given $\eta >0$, we can find $C_\eta>0$, free of $\n$, $\d$, and $\p$, such that for large $\n$,
\begin{eqnarray*}
\prob \( \frac{\p\,{b}_{\rank}^4}{\n} \norm{\Mhat(\s_0)-\M}_F^2 \ge C_\eta \)	\le \eta/2
\end{eqnarray*}
by Proposition \ref{prop4}.
Therefore, for any given $\eta >0$, we can find $C_\eta>0$ such that
\begin{eqnarray*}
&&\prob \( \frac{\p\,{b}_{\rank}^4}{\n} \norm{\Mhat(\shat)-\M}_F^2 \ge C_\eta \) 
\cr
&&= \prob \( \frac{\p\,{b}_{\rank}^4}{\n} \norm{\Mhat(\s_0)-\M}_F^2 \ge C_\eta , \s_0=\shat \) 
\cr
&&\quad
	+ \prob \( \frac{\p\,{b}_{\rank}^4}{\n} \norm{\Mhat(\shat)-\M}_F^2 \ge C_\eta , \s_0\ne\shat \) 
\cr
&&\le \prob \( \frac{\p\,{b}_{\rank}^4}{\n} \norm{\Mhat(\s_0)-\M}_F^2 \ge C_\eta \) 
\cr
&&\quad
	+ \prob\(\min_{\s \in \{-1,1\}^\rank} \; \smallnorm{ \P(\Mhat(\s)) - \P(\Mp) }_F^2 < \smallnorm{ \P(\Mhat(\s_0)) - \P(\Mp) }_F^2 \) 
\cr
&&\le \eta/2 + \eta/2
\cr
&&= \eta.
\end{eqnarray*}

Or, for any given $\eta>0$ and $\zeta>0$, there exists $N_\zeta>0$ such that for all $\n\ge N_\zeta$, 
\begin{eqnarray*}
&&\prob \( \frac{\p\,{b}_{\rank}^4}{h_\n \n} \norm{\Mhat(\shat)-\M}_F^2 > \eta \) 
\cr
&&= \prob \( \frac{\p\,{b}_{\rank}^4}{h_\n \n} \norm{\Mhat(\s_0)-\M}_F^2 > \eta , \s_0=\shat \) 
\cr
&&\quad
	+ \prob \( \frac{\p\,{b}_{\rank}^4}{h_\n \n} \norm{\Mhat(\shat)-\M}_F^2 > \eta , \s_0\ne\shat \) 
\cr
&&\le \prob \( \frac{\p\,{b}_{\rank}^4}{h_\n \n} \norm{\Mhat(\s_0)-\M}_F^2 \ge \eta \) 
\cr
&&\quad
	+ \prob\(\min_{\s \in \{-1,1\}^\rank} \; \smallnorm{ \P(\Mhat(\s)) - \P(\Mp) }_F^2 < \smallnorm{ \P(\Mhat(\s_0)) - \P(\Mp) }_F^2 \) 
\cr
&&\le \zeta/2 + \zeta/2
\cr
&&= \zeta,
\end{eqnarray*}
where the second inequality holds due to Assumption \ref{assume2}(1) and Proposition \ref{prop4}.
\end{proof}


  \bibliography{references}

\begin{thebibliography}{11}
\expandafter\ifx\csname
natexlab\endcsname\relax\def\natexlab#1{#1}\fi
\expandafter\ifx\csname url\endcsname\relax
  \def\url#1{\texttt{#1}}\fi
\expandafter\ifx\csname urlprefix\endcsname\relax\def\urlprefix{URL
}\fi
\bibitem[Anderson et~al.(1958)Anderson, Anderson, Anderson, and Anderson]{anderson1958}
T.W.~Anderson.
\newblock \emph{An introduction to multivariate statistical analysis},
\newblock  volume 2. Wiley New York, 1958.

\bibitem[Bennett and Lanning(2007)]{bennett2007}
J.~Bennett and S.~Lanning.
\newblock The netflix prize.
\newblock In \emph{Proceedings of KDD cup and workshop}, volume 2007, page~35,
  2007.

\bibitem[Cai et~al.(2010)Cai, Cand{\`e}s, and Shen]{cai2010}
J.-F. Cai, E.~J. Cand{\`e}s, and Z.~Shen.
\newblock A singular value thresholding algorithm for matrix completion.
\newblock \emph{SIAM Journal on Optimization}, 20\penalty0 (4):\penalty0
  1956--1982, 2010.

\bibitem[Cai and Zhou(2013)]{cai2013}
T.~T. Cai and W.-X. Zhou.
\newblock Matrix completion via max-norm constrained optimization.
\newblock \emph{arXiv preprint arXiv:1303.0341}, 2013.

\bibitem[Cand{\`e}s and Plan(2010)]{candes2010plan}
E.~J. Cand{\`e}s and Y.~Plan.
\newblock Matrix completion with noise.
\newblock \emph{Proceedings of the IEEE}, 98\penalty0 (6):\penalty0 925--936,
  2010.

\bibitem[Cand{\`e}s and Plan(2011)]{candes2011tight}
E.~J. Cand{\`e}s and Y.~Plan.
\newblock Tight oracle inequalities for low-rank matrix recovery from a minimal
  number of noisy random measurements.
\newblock \emph{Information Theory, IEEE Transactions on}, 57\penalty0
  (4):\penalty0 2342--2359, 2011.

\bibitem[Cand{\`e}s and Recht(2009)]{candes2009recht}
E.~J. Cand{\`e}s and B.~Recht.
\newblock Exact matrix completion via convex optimization.
\newblock \emph{Foundations of Computational mathematics}, 9\penalty0
  (6):\penalty0 717--772, 2009.

\bibitem[Chatterjee(2014)]{chatterjee2014}
S.~Chatterjee.
\newblock Matrix estimation by universal singular value thresholding.
\newblock \emph{The Annals of Statistics}, 43\penalty0 (1):\penalty0 177--214,
  2014.

\bibitem[Cho et~al.(2015)Cho, Kim, and Rohe]{cho2015nips}
J.~Cho, D.~Kim, and K.~Rohe.
\newblock Intelligent initialization and adaptive thresholding for iterative
  matrix completion; some statistical and algorithmic theory for
  adaptive-impute, 2015.

\bibitem[Davenport et~al.(2014)Davenport, Plan, van~den Berg, and
  Wootters]{davenport2014}
M.~A. Davenport, Y.~Plan, E.~van~den Berg, and M.~Wootters.
\newblock 1-bit matrix completion.
\newblock \emph{Information and Inference}, 3\penalty0 (3):\penalty0 189--223,
  2014.

\bibitem[Fan et~al.(2013)Fan, Liao, and Mincheva]{fan2013}
J.~Fan, Y.~Liao, and M.~Mincheva.
\newblock Large covariance estimation by thresholding principal orthogonal
  complements.
\newblock \emph{Journal of the Royal Statistical Society: Series B (Statistical
  Methodology)}, 75\penalty0 (4):\penalty0 603--680, 2013.

\bibitem[Fazel(2002)]{fazel2002}
M.~Fazel.
\newblock \emph{Matrix rank minimization with applications}.
\newblock PhD thesis, PhD thesis, Stanford University, 2002.

\bibitem[\protect\citeauthoryear{GroupLens}{GroupLens}{2015}]{movielens100k}
GroupLens (2015).
\newblock Movielens100k {@MISC}.
\newblock \url{http://grouplens.org/datasets/movielens/}.

\bibitem[Gross(2011)]{gross2011}
D.~Gross.
\newblock Recovering low-rank matrices from few coefficients in any basis.
\newblock \emph{Information Theory, IEEE Transactions on}, 57\penalty0
  (3):\penalty0 1548--1566, 2011.

\bibitem[Hastie et~al.(2014)Hastie, Mazumder, Lee, and Zadeh]{hastie2014}
T.~Hastie, R.~Mazumder, J.~Lee, and R.~Zadeh.
\newblock Matrix completion and low-rank svd via fast alternating least
  squares.
\newblock \emph{arXiv preprint arXiv:1410.2596}, 2014.

\bibitem[Keshavan et~al.(2009)Keshavan, Montanari, and Oh]{keshavan2009}
R.~Keshavan, A.~Montanari, and S.~Oh.
\newblock Matrix completion from noisy entries.
\newblock In \emph{Advances in Neural Information Processing Systems}, pages
  952--960, 2009.

\bibitem[Keshavan et~al.(2010)Keshavan, Montanari, and Oh]{keshavan2010a}
R.~H. Keshavan, A.~Montanari, and S.~Oh.
\newblock Matrix completion from a few entries.
\newblock \emph{Information Theory, IEEE Transactions on}, 56\penalty0
  (6):\penalty0 2980--2998, 2010.

\bibitem[Koltchinskii et~al.(2011{\natexlab{a}})Koltchinskii, Lounici,
  Tsybakov, et~al.]{koltchinskii2011}
V.~Koltchinskii, K.~Lounici, A.~B. Tsybakov, et~al.
\newblock Nuclear-norm penalization and optimal rates for noisy low-rank matrix
  completion.
\newblock \emph{The Annals of Statistics}, 39\penalty0 (5):\penalty0
  2302--2329, 2011{\natexlab{a}}.

\bibitem[Koltchinskii et~al.(2011{\natexlab{b}})]{koltchinskii2011solo}
V.~Koltchinskii et~al.
\newblock Von neumann entropy penalization and low-rank matrix estimation.
\newblock \emph{The Annals of Statistics}, 39\penalty0 (6):\penalty0
  2936--2973, 2011{\natexlab{b}}.

\bibitem[Koren et~al.(2009)Koren, Bell, and Volinsky]{koren2009matrix}
Y.~Koren, R.~Bell, and C.~Volinsky.
\newblock Matrix factorization techniques for recommender systems.
\newblock \emph{Computer}, 42\penalty0 (8):\penalty0 30--37, 2009.

\bibitem[Li(1998{\natexlab{a}})]{li1998one}
R.-C. Li.
\newblock Relative perturbation theory: I. eigenvalue and singular value
  variations.
\newblock \emph{SIAM Journal on Matrix Analysis and Applications}, 19\penalty0
  (4):\penalty0 956--982, 1998{\natexlab{a}}.

\bibitem[Li(1998{\natexlab{b}})]{li1998two}
R.-C. Li.
\newblock Relative perturbation theory: II. eigenspace and singular subspace
  variations.
\newblock \emph{SIAM Journal on Matrix Analysis and Applications}, 20\penalty0
  (2):\penalty0 471--492, 1998{\natexlab{b}}.

\bibitem[Mazumder et~al.(2010)Mazumder, Hastie, and Tibshirani]{mazumder2010}
R.~Mazumder, T.~Hastie, and R.~Tibshirani.
\newblock Spectral regularization algorithms for learning large incomplete
  matrices.
\newblock \emph{The Journal of Machine Learning Research}, 11:\penalty0
  2287--2322, 2010.

\bibitem[Montanari and Oh(2010)]{montanari2010}
A.~Montanari and S.~Oh.
\newblock On positioning via distributed matrix completion.
\newblock In \emph{Sensor Array and Multichannel Signal Processing Workshop
  (SAM), 2010 IEEE}, pages 197--200. IEEE, 2010.

\bibitem[Negahban and Wainwright(2012)]{negahban2012}
S.~Negahban and M.~J. Wainwright.
\newblock Restricted strong convexity and weighted matrix completion: Optimal
  bounds with noise.
\newblock \emph{The Journal of Machine Learning Research}, 13\penalty0
  (1):\penalty0 1665--1697, 2012.

\bibitem[Negahban et~al.(2011)Negahban, Wainwright, et~al.]{negahban2011}
S.~Negahban, M.~J. Wainwright, et~al.
\newblock Estimation of (near) low-rank matrices with noise and
  high-dimensional scaling.
\newblock \emph{The Annals of Statistics}, 39\penalty0 (2):\penalty0
  1069--1097, 2011.

\bibitem[Recht(2011)]{recht2011}
B.~Recht.
\newblock A simpler approach to matrix completion.
\newblock \emph{The Journal of Machine Learning Research}, 12:\penalty0
  3413--3430, 2011.

\bibitem[Rennie and Srebro(2005)]{rennie2005}
J.~D. Rennie and N.~Srebro.
\newblock Fast maximum margin matrix factorization for collaborative
  prediction.
\newblock In \emph{Proceedings of the 22nd international conference on Machine
  learning}, pages 713--719. ACM, 2005.

\bibitem[Rohde et~al.(2011)Rohde, Tsybakov, et~al.]{rohde2011}
A.~Rohde, A.~B. Tsybakov, et~al.
\newblock Estimation of high-dimensional low-rank matrices.
\newblock \emph{The Annals of Statistics}, 39\penalty0 (2):\penalty0 887--930,
  2011.

\bibitem[Srebro et~al.(2004)Srebro, Rennie, and Jaakkola]{srebro2004}
N.~Srebro, J.~Rennie, and T.~S. Jaakkola.
\newblock Maximum-margin matrix factorization.
\newblock In \emph{Advances in neural information processing systems}, pages
  1329--1336, 2004.

\bibitem[Vershynin(2010)]{vershynin2010}
R.~Vershynin.
\newblock Introduction to the non-asymptotic analysis of random matrices.
\newblock \emph{arXiv preprint arXiv:1011.3027}, 2010.

\bibitem[Vu and Lei(2013)]{vu2013}
V.~Q. Vu and J.~Lei.
\newblock Minimax sparse principal subspace estimation in high dimensions.
\newblock \emph{The Annals of Statistics}, 41\penalty0 (6):\penalty0
  2905--2947, 2013.

\bibitem[Weinberger and Saul(2006)]{weinberger2006}
K.~Q. Weinberger and L.~K. Saul.
\newblock Unsupervised learning of image manifolds by semidefinite programming.
\newblock \emph{International Journal of Computer Vision}, 70\penalty0
  (1):\penalty0 77--90, 2006.



\end{thebibliography}

\newpage
\appendix
\numberwithin{equation}{section}
\section{Appendix}

\subsection{Proofs for Lemma \ref{lem0}} \label{apdx0}

\begin{proof}[Proof of Lemma \ref{lem0}]
Let 
$$\My = \left[ ( \y_{kh} - \p) {\M}_{kh} \right]_{1\le k\le\n, 1\le h\le\d} \;\text{ and }\; 
	\epsilony =\\ \left[ \y_{kh} \epsilon_{kh} \right]_{1\le k\le\n, 1\le h\le\d},$$
both in  $\real^{\n \times \d}$. 
Then, $\Mp = \p \M + \My + \epsilony$ and
\begin{eqnarray}\label{Sigp}
\hat{\Sig} 
&=& \p^2 \M^T\M + \My^T\My + \epsilony^T\epsilony \cr
&&\quad\quad
	+ \p \M^T\My + \p \My^T\M + \p \M^T\epsilony + \p \epsilony^T\M + \My^T\epsilony + \epsilony^T\My.
\end{eqnarray}
The result \eqref{expect-Sigp} follows since under the model setup in Section \ref{setup},
$$ \expect \My = 0, \;\; \expect \epsilony = 0, \;\; \expect (\My^T\My) = \p(1-\p)\,\diag(\M^T\M), \;\;
	\expect (\epsilony^T\epsilony) = \n\p\sig^2 I_\d, $$
$$ \expect (\M^T\My) = 0
,\quad
\expect (\M^T\epsilony) = 0
,\quad\text{and}\quad
\expect (\My^T\epsilony) = 0. $$

We can similarly show the result \eqref{expect-Sigpt}.
\end{proof}

\subsection{Proofs for Section \ref{proofs:thm1}} \label{apdx1}

\begin{proof}[Proof of Proposition \ref{prop1}]
We only show the result \eqref{prop1:res1}, since the other result can be shown similarly.

Let
$$E = \left\{ \frac{1}{\n\d} | {\ddot{\lam_\p}^2}_{\m+1} - {\lam_\p^2}_{\m+1} | < t \right\},$$
where $t = C_1 \p \frac{\log\n}{\d} + C_2 \p^{3/2}\sqrt{\frac{\log \n}{\n}}$. Note that $\frac{t}{\p^2} \to 0$.
By Weyl's theorem (\cite{li1998one}) and Lemma \ref{lem6}, we have for large constants $C_1,C_2>0$,
$$\prob(E^c) \le \prob\(\frac{1}{\n\d} \norm{\hat{\Sig}_\p - \expect \hat{\Sig}_\p}_2 \ge t\) = O(\n^{-2}).$$
Thus, for large $\n$,
\begin{eqnarray} \label{prop1:eq4}
&&\expect \norm{\sin\big( \V_\p^{(\m)},\V^{(\m)}\big)}_F^2
\cr
&&= \expect \bigg\{ \norm{\sin\big( \V_\p^{(\m)},\V^{(\m)}\big)}_F^2 \; \I_{E^c} \bigg\}
	+ \expect \bigg\{ \norm{\sin\big( \V_\p^{(\m)},\V^{(\m)}\big)}_F^2 \; \I_{E} \bigg\}
\cr
&&\le \m\, \prob \( E^c \)
	+ \expect \Bigg\{ \frac{ \norm{ \frac{1}{\n\d} \frac{1}{\p^2}\(\hat{\Sig}_\p - \expect \, \hat{\Sig}_\p\)\V^{(\m)} }_F^2}
		{ \left( \frac{1}{\n\d} \frac{1}{\p^2}| {\ddot{\lam_\p}^2}_{\m} - {\lam_\p^2}_{\m+1} | \right)^2} 
		\; \I_{E} \Bigg\}
\cr
&&\le \m\, \prob \( E^c \)
	+ \expect \Bigg\{ \frac{ \norm{ \frac{1}{\n\d} \frac{1}{\p^2}\(\hat{\Sig}_\p - \expect \, \hat{\Sig}_\p\)\V^{(\m)} }_F^2 \; \I_{E} }{ \left( \frac{1}{\n\d} \frac{1}{\p^2}
		\left| {\ddot{\lam_\p}^2}_{\m} - {\ddot{\lam_\p}^2}_{\m+1}\right| - \frac{t}{\p^2} \right)^2} \Bigg\}
\cr
&&\le \m\, \prob \( E^c \)
	+ \expect \Bigg\{ \frac{ \norm{ \frac{1}{\n\d} \frac{1}{\p^2}\(\hat{\Sig}_\p - \expect \, \hat{\Sig}_\p\)\V^{(\m)} }_F^2 \; \I_{E} }{ \left( \frac{1}{2\n\d} \frac{1}{\p^2}
		\left| {\ddot{\lam_\p}^2}_{\m} - {\ddot{\lam_\p}^2}_{\m+1} \right| \right)^2} \Bigg\}
\cr
&&\le c\n^{-2} + \frac{C\n^{-1}}{\p\,({b}_{\m}^2 - {b}_{\m+1}^2)^2},
\end{eqnarray}
where $\I_E$ is an indicator function of an event $E$, 
	the first inequality is due to the fact that $\|\sin (\Vhat^{(\m)}, \V_\p^{(\m)})\|_F^2 \le \m$ and 
	Davis-Kahan $\sin \theta$ theorem (Theorem 3.1 in \cite{li1998two}),
	and the last inequality holds by Lemma \ref{lem7} below.
\end{proof}

\begin{lemma} \label{lem7}
Under the model setup in Section \ref{setup} and Assumption \ref{assume1}, we have for large $\n$ and $\d$,
\begin{equation} \label{lem7:res1} 
	\expect \norm{ \frac{1}{\n\d} \; \frac{1}{\p^2} \( \hat{\Sig}_\p-\expect\hat{\Sig}_\p \) \V^{(\m)} }_F^2 \le \frac{C_1}{\p\,\n}
\end{equation}
and
\begin{equation*}
	\expect \norm{ \frac{1}{\n\d} \; \frac{1}{\p^2} \( \hat{\Sig}_{\p t}-\expect \hat{\Sig}_{\p t} \) \U^{(\m)} }_F^2 \le \frac{C_2}{\p\,\d},
\end{equation*}
where $\hat{\Sig}_{\p}$ and $\hat{\Sig}_{\p t}$ are defined in \eqref{Sigpp} and $C_1$ and $C_2$ are generic constants free of $n, d,$ and $p$.
\end{lemma}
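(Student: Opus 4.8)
Since $\m$ is a fixed constant and $\norm{A\V^{(\m)}}_F^2=\sum_{i=1}^{\m}\norm{A\V_i}_2^2$, it suffices to show $\expect\norm{(\hat{\Sig}_\p-\expect\hat{\Sig}_\p)\V_i}_2^2\le C\,\p^3\n\d^2$ for each $i\le\m$ (and, for the second display, $\expect\norm{(\hat{\Sig}_{\p t}-\expect\hat{\Sig}_{\p t})\U_i}_2^2\le C\,\p^3\n^2\d$); dividing by $(\n\d\p^2)^2$ then yields \eqref{lem7:res1}. The plan is to substitute $\Mp=\p\M+\My+\epsilony$ and expand $\hat{\Sig}$ as in \eqref{Sigp}, namely $\hat\Sig=\p^2\M^T\M+\My^T\My+\epsilony^T\epsilony+\p(\M^T\My+\My^T\M+\M^T\epsilony+\epsilony^T\M)+\My^T\epsilony+\epsilony^T\My$, and then to bound the contribution of each summand — after centering, and after taking into account the diagonal correction built into $\hat{\Sig}_\p$ — to $\expect\norm{(\,\cdot\,)\V_i}_2^2$, combining the pieces with the triangle inequality.

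For the four \emph{bilinear} terms the key is the identity $\M\V_i=\lam_i\U_i$. For instance $\p\My^T\M\V_i=\p\lam_i\My^T\U_i$, and since the entries of $\My$ are independent, mean zero, with variance $\p(1-\p)\M_{kh}^2$, one gets $\expect\norm{\My^T\U_i}_2^2=\p(1-\p)\sum_{k,h}\M_{kh}^2\U_{ik}^2\le\p\L^2\d\norm{\U_i}_2^2=\p\L^2\d$ using $|\M_{kh}|\le\L$; with $\lam_i^2=b_i^2\n\d$ this gives $\expect\norm{\p\My^T\M\V_i}_2^2\le b_i^2\L^2\,\p^3\n\d^2$, which is exactly the leading order. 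The term $\p\epsilony^T\M\V_i=\p\lam_i\epsilony^T\U_i$ is handled identically with $\var(\y_{kh}\epsilon_{kh})=\p\sig^2$ replacing $\p(1-\p)\M_{kh}^2$, giving $b_i^2\sig^2\p^3\n\d^2$, while $\p\M^T\My\V_i$ and $\p\M^T\epsilony\V_i$ are smaller by a factor $1/\d$ (one bounds them by first summing out the inner index against $\norm{\V_i}_2^2=1$).

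The four \emph{quadratic} terms require a second-moment expansion. First, the diagonal is harmless: by construction $\hat{\Sig}_\p$ equals $\p$ times the diagonal of $\hat\Sig$ on the diagonal and equals $\hat\Sig$ off it, and Proposition \ref{prop0} gives $\expect\hat{\Sig}_\p=\p^2\M^T\M+\n\p^2\sig^2 I_\d$, so the $h$-th diagonal entry of $\hat{\Sig}_\p-\expect\hat{\Sig}_\p$ equals $\p\sum_k(\Mp_{kh}^2-\p\M_{kh}^2-\p\sig^2)$, which is mean zero with variance $O(\p^3\n)$ and hence contributes only $O(\p^3\n)$ to $\expect\norm{(\,\cdot\,)\V_i}_2^2$. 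For the off-diagonal part, each of $\My^T\My$, $\epsilony^T\epsilony$ (centered), $\My^T\epsilony$ and $\epsilony^T\My$ has mean-zero off-diagonal entries; I would expand $\expect\norm{(\,\cdot\,)\V_i}_2^2$ as a sum over pairs of columns and triples/quadruples of rows, use independence of $\{\y_{kh},\epsilon_{kh}\}$ (and of $\epsilon$ from $\y$) to discard all but the ``paired'' terms, and bound the survivors using $|\M_{kh}|\le\L$ and $\norm{\V_i}_2=1$ (no incoherence beyond this is needed); each comes out $O(\p^2\n\d+\p\n)$, i.e.\ strictly lower order for large $\n,\d$. Summing all contributions yields $\expect\norm{(\hat{\Sig}_\p-\expect\hat{\Sig}_\p)\V_i}_2^2\le C(\p^3\n\d^2+\p^2\n\d+\p\n)\le C'\p^3\n\d^2$, and the transposed inequality follows by interchanging the roles of $(\n,\U)$ and $(\d,\V)$.

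The main obstacle is the quadratic part: one must verify that the diagonal rescaling inherent in the definition of $\hat{\Sig}_\p$ is precisely what removes the otherwise-too-large mean ($\asymp\p\n$) of the diagonal of $\My^T\My$ (and the analogous piece of $\epsilony^T\epsilony$), and then carry out the fourth-moment bookkeeping in correlated Bernoulli and sub-Gaussian variables without losing the required powers of $\p$ and $1/\d$. The bilinear terms, which carry the leading order, are by contrast only a short variance computation.
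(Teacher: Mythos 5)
Your proposal is correct and follows essentially the same route as the paper's proof of Lemma \ref{lem7}: expand $\hat{\Sig}_\p-\expect\hat{\Sig}_\p$ via $\Mp=\p\M+\My+\epsilony$ as in \eqref{Sigp}, bound each term's second moment, and identify $\p\My^T\M\V^{(\m)}$ and $\p\epsilony^T\M\V^{(\m)}$ as the dominant contributions of order $\p^3\n\d^2$, with the quadratic terms at the lower order $\p^2\n\d$. Your two shortcuts — treating the entire diagonal of $\hat{\Sig}_\p-\expect\hat{\Sig}_\p$ in one stroke, and using $\M\V_i=\lam_i\U_i$ for the leading bilinear terms where the paper instead keeps the diagonal correction inside each term and bounds $\sum_{h'}\M_{kh'}\V_{ih'}$ by Cauchy--Schwarz — are harmless simplifications that yield the same rates.
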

\begin{proof}[Proof of Lemma \ref{lem7}]
We only show the result \eqref{lem7:res1} because the other result holds similarly. 

From \eqref{Sigp}, \eqref{Sigpp}, Proposition \ref{prop0}, and triangle inequality, we have 
\begin{eqnarray}\label{lem7:eq1}
&&\norm{ \(\hat{\Sig}_\p - \expect\hat{\Sig}_\p\)\V^{(\m)} }_F
\cr
&&\le\norm{ \big[ \My^T\My -(1-\p) \diag(\My^T\My) - \p^2(1-\p)\,\diag(\M^T\M) \big]\V^{(\m)} }_F
\cr
&&+ \norm{ \big[ \epsilony^T\epsilony -(1-\p) \diag(\epsilony^T\epsilony)- \n\p^2\sig^2 I_\d \big]\V^{(\m)} }_F
\cr
&&+ \p\norm{ \big[ \My^T\M -(1-\p) \diag(\My^T\M) \big]\V^{(\m)} }_F
\cr
&&+ \p\norm{ \big[ \M^T\My -(1-\p) \diag(\M^T\My) \big]\V^{(\m)} }_F
\cr
&&+ \p\norm{ \big[ \epsilony^T\M -(1-\p) \diag(\epsilony^T\M) \big]\V^{(\m)} }_F
\cr
&&+ \p\norm{ \big[ \M^T\epsilony -(1-\p) \diag(\M^T\epsilony) \big]\V^{(\m)} }_F
\cr
&&+ \norm{ \big[ \My^T\epsilony -(1-\p) \diag(\My^T\epsilony) \big]\V^{(\m)} }_F
\cr
&&+ \norm{ \big[ \epsilony^T\My -(1-\p) \diag(\epsilony^T\My) \big]\V^{(\m)} }_F
\cr
&&= (A) + (B) + \p\:(C) + \p\:(D) + \p\:(E) + \p\:(F) + (G) + (H) .
\end{eqnarray}
We examine the convergence rates of the above terms, $(A)$-$(H)$. 

First, consider the term (A) in \eqref{lem7:eq1}. Then, we have
\begin{eqnarray} \label{lem7:eq1(A)}
&&\expect \norm{ \big[ \My^T\My -(1-\p) \diag(\My^T\My) - \p^2(1-\p)\,\diag(\M^T\M) \big]\V^{(\m)} }_F^2
\cr
&&= \sum_{i=1}^{\d} \sum_{j=1}^{\m} \expect \bigg\{ \sum_{k=1}^{\n}\sum_{h=1}^{\d} \Big[ 
	\p\, \Big((\y_{ki}-\p)^2 - \p(1-\p)\Big) {\M}_{ki}^2 \V_{ji} \; \I_{(h=i)}
\cr
&&\quad\quad\quad\quad\quad\quad\quad\quad\quad\quad
	+ (\y_{ki}-\p)(\y_{kh}-\p) {\M}_{ki}{\M}_{kh} \V_{jh}\; \I_{(h\ne i)} \Big]\bigg\}^2
\cr
&&= \sum_{i=1}^{\d} \sum_{j=1}^{\m} \Bigg\{ \sum_{k=1}^{\n}\sum_{h=1}^{\d} \bigg[ 
	\p^2\, \expect \Big((\y_{ki}-\p)^2 - \p(1-\p)\Big)^2 {\M}_{ki}^4 \V_{ji}^2 \; \I_{(h=i)}
\cr
&&\quad\quad\quad\quad\quad\quad\quad\quad\quad
	+ \expect \Big((\y_{ki}-\p)^2(\y_{kh}-\p)^2\Big) {\M}_{ki}^2{\M}_{kh}^2 \V_{jh}^2\; \I_{(h\ne i)} \bigg]\Bigg\}
\cr
&&= \sum_{i=1}^{\d} \sum_{j=1}^{\m} \Bigg\{ \sum_{k=1}^{\n}\sum_{h=1}^{\d} \bigg[ 
	\p^3(1-\p)(2\p-1)^2 {\M}_{ki}^4 \V_{ji}^2 \; \I_{(h=i)}
\cr
&&\quad\quad\quad\quad\quad\quad\quad\quad\quad\quad\quad\quad
	+ \p^2(1-\p)^2 {\M}_{ki}^2{\M}_{kh}^2 \V_{jh}^2 \; \I_{(h\ne i)} \bigg]\Bigg\}
\cr
&&\le \p^2(1-\p) \L^4 \sum_{i=1}^{\d} \sum_{j=1}^{\m} \sum_{k=1}^{\n}\sum_{h=1}^{\d} \V_{jh}^2
\cr
&&= \p^2(1-\p) \L^4 \sum_{i=1}^{\d} \sum_{j=1}^{\m} \sum_{k=1}^{\n} 1
\cr
&&\le C \p^2(1-\p) \n\d.
\end{eqnarray}
Similarly to \eqref{lem7:eq1(A)}, we can show that the expected values of the terms $(B), (D), (F), (G)$, and $(H)$ squared are bounded by $C\p^2\n\d$.

Second, consider the term (C) in \eqref{lem7:eq1}. Then, we have
\begin{eqnarray} \label{lem7:eq1(C)}
&&\expect \norm{ \big[ \My^T\M -(1-\p) \diag(\My^T\M) \big]\V^{(\m)} }_F^2
\cr
&&= \sum_{i=1}^{\d} \sum_{j=1}^{\m} \expect \Bigg\{ \sum_{k=1}^{\n} (\y_{ki}-\p) \, 
	\sum_{h=1}^{\d} \bigg[ \p\, {\M}_{ki}^2 \V_{ji} \, \I_{(h=i)}
\cr
&&\hspace*{6.5cm}
	+ {\M}_{ki}{\M}_{kh} \V_{jh}\, \I_{(h\ne i)} \bigg]\Bigg\}^2
\cr
&&= \sum_{i=1}^{\d} \sum_{j=1}^{\m} \sum_{k=1}^{\n} \expect(\y_{ki}-\p)^2 \; 
	\Bigg\{ \sum_{h=1}^{\d} {\M}_{ki}{\M}_{kh} \V_{jh} \Big[ 1 - (1-\p) \I_{(h=i)} \Big]\Bigg\}^2
\cr
&&= \p(1-\p)\sum_{i=1}^{\d} \sum_{j=1}^{\m} \sum_{k=1}^{\n}
	\Bigg\{ \sum_{h=1}^{\d} {\M}_{ki}{\M}_{kh} \V_{jh} \Big[ 1 - (1-\p) \I_{(h=i)} \Big]\Bigg\}^2
\cr
&&\le \p(1-\p)\L^4 \sum_{i=1}^{\d} \sum_{j=1}^{\m} \sum_{k=1}^{\n} \Bigg\{ \sum_{h=1}^{\d} 
	|\V_{jh}| \Bigg\}^2
\cr
&&\le C\p(1-\p)\n\d^2,
\end{eqnarray}
where the last inequality holds due to Cauchy-Schwarz inequality.

Lastly, for the term (E) in \eqref{lem7:eq1}, 
\begin{eqnarray} \label{lem7:eq1(E)}
&&\expect \norm{ \big[ \epsilony^T\M -(1-\p) \diag(\epsilony^T\M) \big]\V^{(\m)} }_F^2
\cr
&&= \sum_{i=1}^{\d} \sum_{j=1}^{\m} \expect \Bigg\{ \sum_{k=1}^{\n} \y_{ki}\epsilon_{ki} \; 
	\sum_{h=1}^{\d} {\M}_{kh} \V_{jh} \Big[ 1 - (1-\p) \I_{(h=i)} \Big] \Bigg\}^2
\cr
&&= \sum_{i=1}^{\d} \sum_{j=1}^{\m} \sum_{k=1}^{\n} \expect \big( \y_{ki}^2\epsilon_{ki}^2\big) \; 
	\Bigg\{ \sum_{h=1}^{\d} {\M}_{kh} \V_{jh} \Big[ 1 - (1-\p) \I_{(h=i)} \Big] \Bigg\}^2
\cr
&&= \p\sig^2\sum_{i=1}^{\d} \sum_{j=1}^{\m} \sum_{k=1}^{\n}
	\Bigg\{ \sum_{h=1}^{\d} {\M}_{kh} \V_{jh} \Big[ 1 - (1-\p) \I_{(h=i)} \Big] \Bigg\}^2
\cr
&&\le \p\sig^2\L^2 \sum_{i=1}^{\d} \sum_{j=1}^{\m} \sum_{k=1}^{\n}
	\Bigg\{ \sum_{h=1}^{\d} |\V_{jh}| \Bigg\}^2
\cr
&&\le C\p\n\d^2,
\end{eqnarray}
where last inequality holds due to Cauchy-Schwarz inequality.

The result follows from \eqref{lem7:eq1(A)}-\eqref{lem7:eq1(E)}.
\end{proof}

\begin{lemma} \label{lem9}
Under the model setup in Section \ref{setup} and Assumption \ref{assume1}, we have for any given $\xi_1>0$,
\begin{equation*}
\norm{\My}_2 \le C_{\xi_1} \sqrt{\p\, \n \log \n}
\end{equation*}
with probability $1-O(\n^{-\xi_1})$. Similarly, we have for any given $\xi_2>0$,
\begin{equation*}
\norm{\epsilony}_2 \le C_{\xi_2} \sqrt{\p\, \n \log \n}
\end{equation*}
with probability $1-O(\n^{-\xi_2})$.
\end{lemma}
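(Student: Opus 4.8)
The plan is to recognize $\My$ and $\epsilony$ as sums of independent, mean-zero random matrices and to invoke a matrix Bernstein inequality; the spectral-norm bound and its probability then follow from the standard variance/boundedness bookkeeping. First I would write $\My=\sum_{k=1}^{\n}\sum_{h=1}^{\d} Z_{kh}$ with $Z_{kh}=(\y_{kh}-\p)\,{\M}_{kh}\,e_k f_h^{T}$, where $e_k\in\real^{\n}$ and $f_h\in\real^{\d}$ are standard basis vectors; the $Z_{kh}$ are independent with $\expect Z_{kh}=0$. Since $|\y_{kh}-\p|\le1$ and $|{\M}_{kh}|\le\L$, each $\norm{Z_{kh}}_2\le\L$ almost surely. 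Using $f_h^{T}f_h=e_k^{T}e_k=1$ and $\expect(\y_{kh}-\p)^2=\p(1-\p)$, the matrix variance statistic is
\begin{equation*}
\max\Big\{\,\norm{\textstyle\sum_{k,h}\expect Z_{kh}Z_{kh}^{T}}_2,\; \norm{\textstyle\sum_{k,h}\expect Z_{kh}^{T}Z_{kh}}_2\,\Big\}
=\p(1-\p)\max\Big\{\max_{k}\textstyle\sum_{h}{\M}_{kh}^2,\; \max_{h}\textstyle\sum_{k}{\M}_{kh}^2\Big\}\le\p\,\n\,\L^2,
\end{equation*}
where the last step uses $|{\M}_{kh}|\le\L$ and $\d\le\n$. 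Matrix Bernstein then gives, for every $t\ge0$,
\begin{equation*}
\prob\big(\norm{\My}_2\ge t\big)\le(\n+\d)\exp\!\Big(\frac{-t^2/2}{\p\,\n\,\L^2+\L t/3}\Big).
\end{equation*}

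Next I would set $t=C_{\xi_1}\sqrt{\p\,\n\log\n}$. Since $\p$ is a fixed constant (more generally, whenever $\p\,\n/\log\n\to\infty$, which is implied by $\p\,\d/\log\n\to\infty$ together with $\d\le\n$), the cross term $\L t/3$ is eventually dominated by $\p\,\n\,\L^2$, so the exponent is at most $-t^2/(4\p\n\L^2)=-\big(C_{\xi_1}^2/(4\L^2)\big)\log\n$; choosing $C_{\xi_1}$ so large that $C_{\xi_1}^2\ge 4\L^2(\xi_1+1)$ bounds the probability by $(\n+\d)\,\n^{-(\xi_1+1)}=O(\n^{-\xi_1})$, which is the first claim.

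For $\epsilony$ the entries $\y_{kh}\epsilon_{kh}$ are again independent and mean zero, and the same bookkeeping (now with $\expect(\y_{kh}\epsilon_{kh})^2=\p\sig^2$) gives a matrix variance statistic at most $\p\,\n\,\sig^2$. The one step that is not purely mechanical, and which I expect to be the main obstacle, is that $\epsilon_{kh}$ is merely sub-Gaussian rather than bounded, so the bounded-summand version of Bernstein does not apply verbatim. I would resolve it by truncation: put $T=c_{\xi_2}\sqrt{\log\n}$ and split $\epsilony=A+B$ with $A_{kh}=\y_{kh}\epsilon_{kh}\I_{\{|\epsilon_{kh}|\le T\}}$ and $B_{kh}=\y_{kh}\epsilon_{kh}\I_{\{|\epsilon_{kh}|>T\}}$. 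A union bound over the $\n\d\le\n^2$ entries with the sub-Gaussian tail shows $B=0$ with probability $1-O(\n^{-\xi_2})$ once $c_{\xi_2}$ is large; and $A$ has independent entries bounded by $T$, with means exponentially small in $T^2$ (hence recentering costs $o(\sqrt{\p\n\log\n})$) and variances still at most $\p\sig^2$, so the bounded matrix Bernstein applies exactly as above with $R=T=c_{\xi_2}\sqrt{\log\n}$ in place of $\L$, yielding $\norm{A}_2\le C_{\xi_2}\sqrt{\p\,\n\log\n}$ with probability $1-O(\n^{-\xi_2})$.

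Combining these two high-probability events gives the stated bound on $\norm{\epsilony}_2$. (Alternatively one can invoke a sub-exponential version of matrix Bernstein directly, since $\expect|\y_{kh}\epsilon_{kh}|^m\le \p\,m!\,c^m$ up to constants; either route works.) In both bounds the factor $\sqrt{\p}$ comes from the variance proxies $\p\n\L^2$ and $\p\n\sig^2$, while the factor $\sqrt{\log\n}$ comes from the dimensional prefactor $\n+\d$ in the inequality; everything outside the unbounded-noise handling for $\epsilony$ is routine.
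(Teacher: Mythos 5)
Your argument for $\My$ is essentially the paper's: the paper also writes $\My$ as the sum of the independent, mean-zero single-entry matrices $\My^{(i,j)}$, bounds $\smallnorm{\My^{(i,j)}}_2\le\L$, computes the same variance proxies $\p(1-\p)\,\diag(\M\M^T)$ and $\p(1-\p)\,\diag(\M^T\M)$, and then invokes Proposition 1 of \cite{koltchinskii2011}, which is exactly the bounded matrix Bernstein inequality you apply (the paper states the bound for $\frac{1}{\n\d}\My$, which is the same statement after rescaling). The only genuine difference is the treatment of $\epsilony$: the paper simply cites Proposition 2 of \cite{koltchinskii2011}, i.e.\ the version of the noncommutative Bernstein inequality for unbounded summands with controlled Orlicz norms --- the ``sub-exponential version'' you mention parenthetically as an alternative --- whereas your main route truncates $\epsilon_{kh}$ at $T\asymp\sqrt{\log\n}$, kills the tail part by a union bound over the $\n\d$ entries, recenters the truncated part (whose mean matrix has spectral norm at most $\sqrt{\n\d}\,\p\,e^{-cT^2}$, indeed negligible for $c_{\xi_2}$ large), and applies the bounded inequality with $R=T$. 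This is a valid and more self-contained substitute; its only cost is that the cross term $Tt$ now carries an extra $\sqrt{\log\n}$, so strictly you need roughly $\p\n/(\log\n)^2\to\infty$ rather than $\p\n/\log\n\to\infty$ for the variance term to dominate (and even if it does not, the resulting exponent $\asymp -\sqrt{\p\n}$ still beats $\log\n$ when $\p$ is a constant). Under the lemma's stated assumptions (constant $\p$, $\d\le\n$) this is harmless, so your proof is correct; the citation route in the paper avoids the truncation bookkeeping and is marginally sharper in the small-$\p$ regime discussed in the remarks.
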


\begin{proof}[Proof of Lemma \ref{lem9}]
Let $\My^{(i,j)} \in \real^{\n\times\d}$ be such that 
$${\My}_{kh}^{(i,j)}=
\left\{\begin{matrix}
(\y_{kh}-\p){\M}_{kh}, & (k,h)=(i,j)\\ 
0, & (k,h)\ne(i,j)
\end{matrix}\right.
\;\text{ for } 
1\le k\le \n \text{ and } 1\le h\le \d.
$$
Then, $$\frac{1}{\n\d}\My = \frac{1}{\n\d}\sum_{i=1}^\n\sum_{j=1}^\d\My^{(i,j)},$$ 
$\expect(\My^{(i,j)})=0$, and $\smallnorm{\My^{(i,j)}}_2\le\L$ for all $1\le k\le \n$ and $1\le h\le \d$.
Also, we have 
\begin{eqnarray}
&&\norm{ \frac{1}{\n\d}\expect\(\My^{(i,j)}\My^{(i,j)T}\) }_2 = \norm{ \frac{\p(1-\p)}{\n\d} \diag\(\M\M^T\) }_2 \le \frac{\p\,\L^2}{\n} \text{ and}
\cr
&&\norm{ \frac{1}{\n\d}\expect\(\My^{(i,j)T}\My^{(i,j)}\) }_2 = \norm{ \frac{\p(1-\p)}{\n\d} \diag\(\M^T\M\) }_2 \le \frac{\p\,\L^2}{\d}.
\end{eqnarray}
Thus, by Proposition 1 in \cite{koltchinskii2011}, we have 
$$\norm{\frac{1}{\n\d}\My}_2 \le C \max \( \sqrt{\frac{\p\,\L^2}{\d}}\sqrt{\frac{\log \n}{\n\d}},\L\frac{\log\n}{\n\d}\) \le C\sqrt{\frac{\p\,\log\n}{\n\d^2}}
$$
with probability at least $1 - \n^{-\xi_1}$.

In a similar way together with Proposition 2 in \cite{koltchinskii2011}, we can show that 
$\norm{\frac{1}{\n\d}\epsilony}_2\le C\sqrt{\frac{\p\,\log\n}{\n\d^2}}$
with probability at least $1 - \n^{-\xi_2}$.

\end{proof}

\begin{proof}[Proof of Lemma \ref{lem6}]
We only show the result \eqref{lem6:res1} because the other result holds similarly.

From \eqref{Sigp}, Proposition \ref{prop0} and triangle inequality, we have
\begin{eqnarray} \label{lem6:eq1}
&&\frac{1}{\n\d}\norm{ \hat{\Sig}_\p-\expect\hat{\Sig}_\p } _2
\cr
&&\le\frac{1}{\n\d} \norm{ \My^T\My -(1-\p) \diag(\My^T\My) - \p^2(1-\p)\,\diag(\M^T\M) } _2
\cr
&&+ \frac{1}{\n\d} \norm{ \epsilony^T\epsilony -(1-\p) \diag(\epsilony^T\epsilony)- \n\p^2\sig^2 I_\d } _2 
\cr
&&+ 2\,\frac{1}{\n\d} \norm{ \p \My^T\M -(1-\p) \p \diag(\My^T\M) } _2 
\cr
&&+ 2\,\frac{1}{\n\d} \norm{ \p \epsilony^T\M -(1-\p) \p \diag(\epsilony^T\M) } _2 
\cr
&&+ 2\,\frac{1}{\n\d} \norm{ \My^T\epsilony -(1-\p) \diag(\My^T\epsilony) } _2 
\cr
&&= (I) + (II) + 2\;(III) + 2\;(IV) + 2\;(V).
\end{eqnarray}
Because of similarity, we provide arguments only for $(I)$ and $(IV)$.

Consider the term $(I)$ in \eqref{lem6:eq1}. 
First, we have by Lemma \ref{lem9}
\begin{eqnarray} \label{lem6:MyMy}
\frac{1}{\n\d} \norm{ \My^T\My }_2
= \n\d\norm{\frac{1}{\n\d}\My}_2^2
\le C \p \frac{\log\n}{\d}
\end{eqnarray}
with probability at least $1-O(\n^{-\mu_1})$.
Also, we have with probability at least $1-O(\n^{-\mu_1})$,
\begin{eqnarray} \label{lem6:MyMy-diag}
&&\frac{1-\p}{\n\d}\norm{ \diag(\My^T\My) + \p^2\,\diag(\M^T\M) }_2
\cr
&&\le \frac{1-\p}{\n\d}\norm{ \diag(\My^T\My) - \p(1-\p)\,\diag(\M^T\M) }_2
\cr
&&\quad\quad
	+ \frac{\p(1-\p)}{\n\d}\norm{ \diag(\M^T\M) }_2
\cr
&&= (1-\p) \max_{1\le h\le \d} \left| \sum_{k=1}^\n \frac{ \[ (\y_{kh}-\p)^2-\p(1-\p) \]{\M}_{kh}^2 }{\n\d} \right|
\cr 
&&\quad\quad
	+ \frac{\p(1-\p)}{\n\d} \max_{1\le h\le \d} \left| \sum_{k=1}^\n {\M}_{kh}^2 \right|
\cr
&&\le C \sqrt{\frac{\p\log \n}{\n}}\frac{1}{\d} + \frac{\p(1-\p)\L^2}{\d}
\cr
&&\le C \p\d^{-1},
\end{eqnarray}
where the second inequality holds by \eqref{lem6:MyMy-diag-supp} below. 
Take $t^2=c\frac{\log\n}{\n\d^2}\p(1-\p)(3\p^2-3\p+1)$ for some large constant $c>0$. 
Then, by Bernstein's inequality,
\begin{eqnarray}\label{lem6:MyMy-diag-supp}
&& \prob\( \max_{1\le h\le \d}\left| \sum_{k=1}^\n \frac{ \[ (\y_{kh}-\p)^2-\p(1-\p) \]{\M}_{kh}^2 }{\n\d} \right| \ge t\)
\cr
&&\le \sum_{h=1}^{\d} \prob\( \left| \sum_{k=1}^\n \[ (\y_{kh}-\p)^2-\p(1-\p) \]{\M}_{kh}^2 \right| \ge \n\d t \)
\cr
&&\le 2 \d \; \exp \left\{ - \frac{\n\d^2 t^2}{2\L^4 \p(1-\p)(3\p^2-3\p+1)} \right\}
\cr
&&= C \n^{-\mu_1}.
\end{eqnarray}
By \eqref{lem6:MyMy} and \eqref{lem6:MyMy-diag}, we have 
\begin{eqnarray} \label{lem6:result(I)}
(I)\le C \p \frac{\log\n}{\d}
\end{eqnarray} 
with probability at least $1-O(\n^{-\mu_1})$.
Similarly, we can show that $(II)$ and $(V)$ are bounded by $C \p \frac{\log\n}{\d}$ with probability at least $1-O(\n^{-\mu_1})$.

Consider the term $(IV)$ in \eqref{lem6:eq1}.
We have
\begin{eqnarray*}
&& (IV)^2 \le \Bigg\{ \max_{1\le j\le \d}  \sum_{i=1}^{\d} \left| \sum_{k=1}^\n X_{kij}^{(IV)} \right| \Bigg\}
			\Bigg\{ \max_{1\le i\le \d}  \sum_{j=1}^{\d} \left| \sum_{k=1}^\n X_{kij}^{(IV)} \right| \Bigg\},
\end{eqnarray*}
where $\n\d\, X_{kij}^{(IV)} = \p\, \y_{ki}{\epsilon}_{ki}{\M}_{kj}\I_{(i\ne j)} 
	+ \p^2 \y_{ki}{\epsilon}_{ki}{\M}_{kj}\I_{(i=j)}$ and hence $X_{kij}^{(IV)}$ 
are centered sub-Gaussian random variables under the model setup in Section \ref{setup}.
Then, we have for any $\rho \in \real$ and for all $1\le k\le\n$, $1\le i\le d,$ and $1\le j\le\d$,
$$\expect \exp \left\{ \rho X_{kij}^{(IV)} \right\} \le \exp \left\{ \frac{\rho^2\frac{\p^3\beta}{\n^2\d^2}}{2} \right\} 
	\;\text{ for some constant } \beta>0.$$
Take $t^2 = c\p^3\frac{\log\n}{\n}$ for some large constant $c>0$ and $\rho = \frac{t/\d}{\n\frac{\p^3\beta}{\n^2\d^2}}$. 
Then, by Markov's inequality,
\begin{eqnarray} \label{prop4:eq3}
\prob\( \max_{1\le j\le\d} \sum_{i=1}^\d \left| \sum_{k=1}^\n X_{kij}^{(IV)} \right| > t \)
&\le& \sum_{j=1}^\d \sum_{i=1}^\d \prob \( \left| \sum_{k=1}^\n X_{kij}^{(IV)} \right| > t/\d \) \cr
&\le& 2\sum_{j=1}^{\d} \sum_{i=1}^{\d} \frac{\expect \( \exp\left\{\rho \sum_{k=1}^\n X_{kij}^{(IV)}\right\}\)}{\exp\left\{\rho(t/\d)\right\}} \cr
&\le& 2 \d^2 \; \exp \left\{ -\rho\frac{t}{\d} + \frac{\rho^2}{2} \frac{\p^3\beta}{\n\d^2} \right\}\cr
&=& 2\d^2 \; \exp \left\{ -\frac{\n t^2}{2\p^3\beta} \right\} \cr
&=& C\n^{-\mu_1}.
\end{eqnarray}
Similarly, 
\begin{eqnarray} \label{prop4:eq3-mirror}
\prob\( \max_{1\le i\le\d} \sum_{j=1}^\d \left| \sum_{k=1}^\n X_{kij}^{(IV)} \right| > t \)
	\le C\n^{-\mu_1}.
\end{eqnarray}
By \eqref{prop4:eq3} and \eqref{prop4:eq3-mirror}, with probability at least $1-O\(\n^{-\mu_1}\)$, 
\begin{eqnarray} \label{lem6:result(IV)}
\left| (IV) \right| \le C\p^{3/2}\sqrt{\frac{\log \n}{\n}}.
\end{eqnarray} 
Similarly, we can show that $(III)$ is bounded by $C \p^{3/2} \sqrt{\frac{\log\n}{\n}}$ with probability at least $1-O(\n^{-\mu_1})$.

The statement is showed by \eqref{lem6:result(I)} and \eqref{lem6:result(IV)}.
\end{proof}

\begin{proof}[Proof of Lemma \ref{lem4}]
We only show the result \eqref{lem4:res1} because the other result holds similarly.

By triangle inequality, we have
\begin{eqnarray} \label{lem4:eq0}
&&\frac{1}{\n\d} \norm{\hat{\Sig}_{\phat} - \hat{\Sig}_\p}_2 
= \frac{1}{\n\d} \norm{\(\phat - \p\)\diag(\hat{\Sig})}_2
\cr 
&&\le \frac{|\phat - \p|}{\n\d} \Big\{ \smallnorm{\diag(\hat{\Sig}) - \diag(\p\M^T\M + \n\p\sig^2 I_\d)}_2 
\cr
&&\quad\quad\quad\quad\quad\quad\quad\quad\quad\quad
	+ \smallnorm{ \diag(\p\M^T\M + \n\p\sig^2 I_\d)}_2 \Big\}.
\end{eqnarray}
We will look at the terms in \eqref{lem4:eq0} one by one.

By Bernstein's inequality, we have for large constant $C>0$,
\begin{eqnarray} \label{lem4:eq1}
&&\prob\( \left|\phat - \p\right| \ge C \sqrt{\frac{\p(1-\p)\log \n}{\n\d}} \) 
\cr
&&= \prob\( \left|\sum_{k=1}^{\n}\sum_{h=1}^{\d} (\y_{kh}-\p)\right| \ge C \sqrt{\p(1-\p) \n\d \log \n} \)
\cr
&&\le 2 \, \exp\left\{-\nu_1 \log\n\right\}
\cr
&&= 2\n^{-\nu_1}.
\end{eqnarray}

Take $t^2 = c \frac{\p\log\n}{\n\d^2}$ for some large constant $c>0$. 
Then, since $\y_{ki}^2({\M}_{ki} + \epsilon_{ki})^2 - \p ({\M}_{ki}^2 + \sig^2), k=1,\ldots,\n$, 
	are independent centered sub-exponential random variables, 
we have by Proposition 5.16 in \cite{vershynin2010}, 
\begin{eqnarray} \label{lem4:eq2}
&&\prob\( \frac{1}{\n\d} \norm{ \diag(\hat{\Sig}) - \diag(\p \M^T\M + \n\p\sig^2 I_\d) }_2 \ge t\) 
\cr
&&= \prob\( \frac{1}{\n\d} \max_{1\le i\le \d}\left|\sum_{k=1}^{\n}\bigg[ \y_{ki}^2({\M}_{ki} + \epsilon_{ki})^2 - \p ({\M}_{ki}^2 + \sig^2) \bigg]\right| \ge t\)
\cr
&&\le \sum_{i=1}^{\d}\prob\( \left|\sum_{k=1}^{\n}\bigg[ \y_{ki}^2({\M}_{ki} + \epsilon_{ki})^2 - \p ({\M}_{ki}^2 + \sig^2) \bigg]\right| \ge \n\d t \)
\cr
&&\le 2 \d \; \exp \left\{ - \frac{\n^2\d^2 t^2}{c_1\n\p} \right\}
\cr
&&\le C \n^{-\nu_1}.
\end{eqnarray}

Also, note that
\begin{eqnarray} \label{lem4:eq3}
\norm{\frac{1}{\n\d} \diag(\p\M^T\M + \n\p\sig^2 I_\d) }_2
&=& \frac{1}{\n\d} \; \max_{1\le i\le \d} \; \p \sum_{k=1}^\n {\M}_{ki}^2 + \n\p\sig^2
\cr
&\le& \frac{\p( \L^2 + \sig^2 )}{\d} .
\end{eqnarray}

Combining the results in \eqref{lem4:eq0}-\eqref{lem4:eq3}, we have
\begin{eqnarray} \label{lem4:eq4}
\frac{1}{\n\d} \norm{\hat{\Sig}_{\phat} - \hat{\Sig}_\p}_2
	\le C\p^{3/2}\,\sqrt{\frac{\log\n}{\n\d}}\;\frac{1}{\d}
\end{eqnarray}
with probability at least $1-O\(\n^{-\nu_1}\)$. 
\end{proof}

\begin{proof}[Proof of Lemma \ref{lem5}]
We only show the result \eqref{lem5:res1} because the other result holds similarly. 

We have
\begin{eqnarray} \label{lem5:eq1}
&&\expect \norm{ \frac{1}{\n\d} \( \hat{\Sig}_{\phat}-\hat{\Sig}_\p \) \V^{(\m)} }_F^2
\cr
&&\le \m \; \expect \norm{ \frac{1}{\n\d} \( \hat{\Sig}_{\phat}-\hat{\Sig}_\p \) }_2^2
\cr
&&\le \m \; \expect \left\{ \( \phat - \p \)^2 \norm{\frac{1}{\n\d} \diag(\hat{\Sig})}_2^2 \right\}
\cr
&&\le 4\m \; \expect \left\{ \( \phat - \p \)^2 \norm{\frac{1}{\n\d} \diag(\hat{\Sig}) - \frac{1}{\n\d} \diag(\p\M^T\M + \n\p\sig^2 I_\d) }_2^2 \right\}
\cr
&&\quad\quad
	+ 4\m \norm{\frac{1}{\n\d} \diag(\p\M^T\M + \n\p\sig^2 I_\d) }_2^2 \; \expect \( \phat - \p \)^2  
\cr
&&\le 4\m \; \sqrt{ \expect \( \phat - \p \)^4 \; \expect \norm{\frac{1}{\n\d} \diag(\hat{\Sig}) - \frac{1}{\n\d} \diag(\p\M^T\M + \n\p\sig^2 I_\d)}_2^4 }
\cr
&&\quad\quad
	+ 4\m \; \frac{\p^2 (\L^2 + \sig^2)^2}{\d^2} \frac{\p(1-\p)}{\n\d}
\cr
&&\le C_1 \frac{\p^2(1-\p)}{\n^2\d^{5/2}} + C_2 \frac{\p^3(1-\p)}{\n\d^3},
\end{eqnarray}
where the fourth inequality holds by H\"older's inequality and the fifth inequality is due to the fact that 
\begin{eqnarray} \label{lem5:eq1}
&&\expect \( \phat - \p \)^4 \; \expect \norm{\frac{1}{\n\d} \diag(\hat{\Sig}) - \frac{1}{\n\d} \diag(\p\M^T\M + \n\p\sig^2 I_\d)}_2^4 
\cr
&&\le \frac{\expect \big\{\sum_{k=1}^\n\sum_{h=1}^\d \(\y_{kh} - \p\)\big\}^4}{\n^4d^4}
\cr
&&\quad\quad\quad
	\frac{\expect \left\{ \max_{1\le i\le \d} \left| \sum_{k=1}^\n 
		\left[ \y_{ki}^2 ({\M}_{ki}+\epsilon_{ki})^2 - \p({\M}_{ki}^2+\sig^2) \right] \right|^4 \right\} }{\n^4\d^4}
\cr
&&\le \frac{\expect \big\{\sum_{k=1}^\n\sum_{h=1}^\d \(\y_{kh} - \p\)\big\}^4  }{\n^4\d^4}
\cr
&&\quad\quad\quad
		\frac{\d\,  \expect \left[ \sum_{k=1}^\n \(\y_{ki}^2 ({\M}_{ki}+\epsilon_{ki})^2 - \p({\M}_{ki}^2+\sig^2)\) \right]^4 }{\n^4\d^4}
\cr
&&= \frac{O\(\p^2(1-\p)^2\n^2\d^2\) \, O\( \p^2\n^2 \d\) }{\n^8\d^8}.
\end{eqnarray}
\end{proof}

\subsection{Proofs for Section \ref{proofs:thm2}} \label{apdx2}

\begin{lemma} \label{prop2}
Under the model setup in Section \ref{setup} and Assumption \ref{assume1}, we have
\begin{eqnarray*} 
&&\sum_{i=1}^{\m} {\lam_\p^2}_i - \p^2 \left[ \sum_{i=1}^{\m}\lam^2_i + \n\sig^2 \right]
\cr
&&= 2\p\,\sum_{k=1}^\n \sum_{h=1}^\d (\y_{kh}-\p){\M}_{kh}  
	\sum_{i=1}^\m \V_{ih} \Bigg[ \bigg(\sum_{h'=1}^\d {\M}_{kh'} \V_{ih'}\bigg) - (1-\p){\M}_{kh} \V_{ih}\Bigg] \cr
&&\quad+ 2\p\,\sum_{k=1}^\n \sum_{h=1}^\d \y_{kh}\epsilon_{kh}  
	\sum_{i=1}^\m \V_{ih} \Bigg[ \bigg(\sum_{h'=1}^\d {\M}_{kh'} \V_{ih'}\bigg) - (1-\p){\M}_{kh} \V_{ih}\Bigg] \cr
&&\quad+o_p\(\sqrt{\n\d}\)
\cr
&&= (i) + (ii) +O_p\(\p\sqrt{\n}+\p\d\)
\end{eqnarray*} 
and $(i)+(ii)=O_p\(\sqrt{\p^3\n\d}\)$, where $\lam_{\p i}$ and $\lam_{i}$ are defined in \eqref{Sigpp} and \eqref{mod}, respectively.
\end{lemma}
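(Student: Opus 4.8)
The plan is to linearize the eigenvalue sum $\sum_{i=1}^{\m}{\lam_\p^2}_i$ around its population value $\p^2\big[\sum_{i=1}^{\m}\lam_i^2+\n\sig^2\big]$ and then expand the linear part explicitly. Set $E=\hat{\Sig}_\p-\expect\hat{\Sig}_\p$, let $P_0=\V^{(\m)}\V^{(\m)T}$ be the spectral projector of $\expect\hat{\Sig}_\p=\p^2\M^T\M+\n\p^2\sig^2 I_\d$ onto its top $\m$ eigenvalues (which by Proposition \ref{prop0} are $\p^2(\lam_i^2+\n\sig^2)$ and, by Assumption \ref{assume1}(1)--(2), are separated from the rest by a gap $\asymp\p^2({b}_{\m}^2-{b}_{\m+1}^2)\n\d$), and let $P_1=\V_\p^{(\m)}\V_\p^{(\m)T}$ be the projector onto the top $\m$ eigenspace of $\hat{\Sig}_\p$. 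Then $\sum_{i=1}^{\m}{\lam_\p^2}_i=\trace(P_1\hat{\Sig}_\p)=\trace(P_1 E)+\trace\big(\V_\p^{(\m)T}\expect\hat{\Sig}_\p\V_\p^{(\m)}\big)$; by the Ky Fan principle and the gap, $0\le\trace(P_0\expect\hat{\Sig}_\p)-\trace\big(\V_\p^{(\m)T}\expect\hat{\Sig}_\p\V_\p^{(\m)}\big)\le\norm{\expect\hat{\Sig}_\p}_2\norm{\sin(\V_\p^{(\m)},\V^{(\m)})}_F^2=O_p(\p^2\n\d)\cdot O_p\big((\p\n)^{-1}\big)=O_p(\p\d)$ by Proposition \ref{prop1}. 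For $\trace(P_1E)$, write $\V_\p^{(\m)}=\V^{(\m)}\O+\Delta$ with $\norm{\Delta}_F^2\le 2\norm{\sin(\V_\p^{(\m)},\V^{(\m)})}_F^2=O_p\big((\p\n)^{-1}\big)$ (from \eqref{sineDist} and Proposition \ref{prop1}); then $\trace(P_1E)=\trace(P_0E)+2\trace(\O^T\V^{(\m)T}E\Delta)+\trace(\Delta^TE\Delta)$, where $|\trace(\Delta^TE\Delta)|\le\norm{E}_2\norm{\Delta}_F^2=o_p(\p^2\n\d)\cdot O_p((\p\n)^{-1})=o_p(\p\d)$ by Lemma \ref{lem6}, and $|\trace(\O^T\V^{(\m)T}E\Delta)|\le\norm{E\V^{(\m)}}_F\norm{\Delta}_F=O_p\big(\p\d\sqrt{\p\n}\big)\cdot O_p\big((\p\n)^{-1/2}\big)=O_p(\p\d)$ by Lemma \ref{lem7} (crucially using $\norm{E\V^{(\m)}}_F$, not $\norm{E}_F$). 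Hence $\sum_{i=1}^{\m}{\lam_\p^2}_i-\p^2\big[\sum_{i=1}^{\m}\lam_i^2+\n\sig^2\big]=\trace(P_0E)+O_p(\p\d)=\sum_{i=1}^{\m}\V_i^TE\,\V_i+O_p(\p\d)$.

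Next I would \textbf{expand $\sum_{i=1}^{\m}\V_i^TE\,\V_i$ term by term}. Writing $\Mp=\p\M+\My+\epsilony$, expanding $\hat{\Sig}=\Mp^T\Mp$ and its diagonal, subtracting the expectations of Lemma \ref{lem0}, and recalling $\hat{\Sig}_\p=\hat{\Sig}-(1-\p)\diag(\hat{\Sig})$ from \eqref{Sigpp}, the matrix $E$ decomposes into centered ``quadratic'' pieces — $\My^T\My-\p(1-\p)\diag(\M^T\M)$, $\epsilony^T\epsilony-\n\p\sig^2 I_\d$, $\My^T\epsilony+\epsilony^T\My$, minus $(1-\p)$ times their diagonals — and ``linear'' pieces — $\p(\M^T\My+\My^T\M)$, $\p(\M^T\epsilony+\epsilony^T\M)$, minus $2\p(1-\p)\diag(\M^T\My)$ and $2\p(1-\p)\diag(\M^T\epsilony)$ (the extra factor $\p$ arising because $\diag(\hat{\Sig})$ contains $2\p\diag(\M^T\My)$, not $\diag(\M^T\My)$). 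Contracting each with $\V_i\V_i^T$, summing over $i\le\m$, and using $\M\V_i=\lam_i\U_i$, i.e. $(\M\V_i)_k=\sum_{h'}\M_{kh'}\V_{ih'}$, the linear pieces reassemble exactly into $(i)+(ii)$, while the centered quadratic pieces form the remainder.

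It remains to \textbf{bound the quadratic remainder and the variance of $(i)+(ii)$}. Each quadratic piece contracted with $\sum_i\V_i\V_i^T$ is a sum of products of two independent mean-zero noise variables — over $k$ for the off-diagonal ones, over the $\d$ independent columns $h$ for the diagonal ones — so direct second-moment bounds using $|\M_{kh}|\le\L$, $\sum_h\V_{ih}^2=1$, the sub-Gaussian moments of $\epsilon$, and the incoherence bounds $\V_{ih}^2=O(1/\d)$, $\U_{ik}^2=O(1/\n)$ (which follow from $(\M^T\M)_{hh}\le\n\L^2$, $(\M\M^T)_{kk}\le\d\L^2$ and Assumption \ref{assume1}(1)) give each a variance of order $\p^2\n$, $\p^2\n/\d$, or $\p^2\d^2$; hence the quadratic remainder is $O_p(\p\sqrt{\n}+\p\d)$, which also absorbs the $O_p(\p\d)$ from the linearization step. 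Finally, since $\{\y_{kh}-\p\}_{k,h}$ and $\{\y_{kh}\epsilon_{kh}\}_{k,h}$ are independent with variances $\p(1-\p)$ and $\p\sig^2$, one has $\var((i))=4\p^3(1-\p)\sum_{k,h}\M_{kh}^2\big(\sum_i\V_{ih}[\lam_i\U_{ik}-(1-\p)\M_{kh}\V_{ih}]\big)^2$ and $\var((ii))=4\p^3\sig^2\sum_{k,h}\big(\sum_i\V_{ih}[\lam_i\U_{ik}-(1-\p)\M_{kh}\V_{ih}]\big)^2$; using $\sum_{k,h}\big(\sum_i\lam_i\U_{ik}\V_{ih}\big)^2=\sum_{i\le\m}\lam_i^2=O(\n\d)$, $\sum_{k,h}\M_{kh}^2=O(\n\d)$, $|\M_{kh}|\le\L$, and Cauchy--Schwarz for the cross terms, both variances are $O(\p^3\n\d)$, so $(i)+(ii)=O_p(\sqrt{\p^3\n\d})$ by Chebyshev.

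The \textbf{main obstacle} is the first step: bounding the second-order remainder of the eigenvalue-sum linearization forces one to pair the operator-norm control of $E$ (Lemma \ref{lem6}) and the sine-distance control (Proposition \ref{prop1}) with the sharper $\norm{E\V^{(\m)}}_F$-bound (Lemma \ref{lem7}) rather than a crude $\norm{E}_F$, and to exploit that the eigenvalue gap and $\norm{\expect\hat{\Sig}_\p}_2$ are both $\asymp\p^2\n\d$. The term-by-term bookkeeping of the second step — verifying that the linear pieces close up to exactly $(i)+(ii)$ with the correct powers of $\p$, with nothing linear leaking into the remainder — also requires care.
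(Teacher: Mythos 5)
Your proposal is correct and, in its core steps, follows the same route as the paper: after centering, the dominant term is $\trace\big(\V^{(\m)T}(\hat{\Sig}_\p-\expect\hat{\Sig}_\p)\V^{(\m)}\big)$, which you expand into exactly the paper's pieces --- the linear-in-noise terms $\p(\M^T\My+\My^T\M)$ and $\p(\M^T\epsilony+\epsilony^T\M)$ minus their $(1-\p)$-weighted diagonals, which contract against $\sum_{i\le\m}\V_i\V_i^T$ to give precisely $(i)+(ii)$, with the same variance bookkeeping based on $\sum_{k,h}\big(\sum_{i\le\m}\lam_i\U_{ik}\V_{ih}\big)^2=\sum_{i\le\m}\lam_i^2=O(\n\d)$ --- and the centered quadratic pieces, which the paper shows are $O_p(\p\sqrt{\n})$ and you bound more crudely but sufficiently by $O_p(\p\sqrt{\n}+\p\d)$. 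The one genuine difference is how you control the discrepancy between the trace over $\V_\p^{(\m)}$ and over $\V^{(\m)}$ (the paper's term $(f)$). The paper collapses $(f)=\trace(\V_\p^{(\m)T}\hat{\Sig}_\p\V_\p^{(\m)})-\trace(\O^T\V^{(\m)T}\hat{\Sig}_\p\V^{(\m)}\O)$ in one stroke through the eigen-equation $\hat{\Sig}_\p{\V_\p}_i={\lam_\p^2}_i{\V_\p}_i$, which reduces it to $2\,{\lam_\p^2}_1\smallnorm{\V^{(\m)}\O-\V_\p^{(\m)}}_F^2=O_p(\p\d)$ via Proposition \ref{prop1} and ${\lam_\p^2}_1=O_p(\p^2\n\d)$; you instead split it into a population part, bounded by the Ky Fan/curvature inequality $\trace\big(\expect\hat{\Sig}_\p\,(P_0-P_1)\big)\le\smallnorm{\expect\hat{\Sig}_\p}_2\smallnorm{\sin(\V_\p^{(\m)},\V^{(\m)})}_F^2=O_p(\p\d)$ (valid here since $\expect\hat{\Sig}_\p\succeq 0$), and a fluctuation part $\trace(P_1E)-\trace(P_0E)$ bounded by $2\smallnorm{E\V^{(\m)}}_F\smallnorm{\Delta}_F+\smallnorm{E}_2\smallnorm{\Delta}_F^2=O_p(\p\d)$ via Lemmas \ref{lem7} and \ref{lem6}, where $E=\hat{\Sig}_\p-\expect\hat{\Sig}_\p$ and $\Delta=\V_\p^{(\m)}-\V^{(\m)}\O$. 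Both treatments yield the same $O_p(\p\d)$ remainder; the paper's identity is slightly more economical (it needs only the sine bound and ${\lam_\p^2}_1$, not the curvature bound nor $\smallnorm{E\V^{(\m)}}_F$ at this stage), while your split is more modular and makes explicit that it is the sharp Lemma \ref{lem7} control of $E\V^{(\m)}$, rather than a crude $\smallnorm{E}_F$, that keeps the cross term at the right order. Your remaining order claims (e.g., $\smallnorm{E\V^{(\m)}}_F=O_p(\p^{3/2}\d\sqrt{\n})$, $\smallnorm{E}_2=o_p(\p^2\n\d)$ under $\p\d/\log\n\to\infty$) check out, so the argument is sound.
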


\begin{proof}[Proof of Lemma \ref{prop2}]
We have 
\begin{eqnarray} \label{prop2:eq1}
&&\sum_{i=1}^{\m} {\lam_\p^2}_i - \p^2 \left[ \sum_{i=1}^{\m}\lam^2_i + \n\sig^2 \right] 
\cr
&&=\trace({\V_\p}^{(\m)T} \hat{\Sig}_\p {\V_\p}^{(\m)}) - \trace\(\V^{(\m)T} (\p^2 \M^T\M + \n\p^2\sig^2 I_\d) \V^{(\m)} \)
\cr 
&&=\trace(\O^T\V^{(\m)T} \hat{\Sig}_\p \V^{(\m)}\O ) \cr
&&\quad\quad+ \trace({\V_\p}^{(\m)T} \hat{\Sig}_\p {\V_\p}^{(\m)} - \O^T\V^{(\m)T} \hat{\Sig}_\p \V^{(\m)}\O ) 
\cr
&&\quad\quad -\trace\(\V^{(\m)T} (\p^2 \M^T\M + \n\p^2\sig^2 I_\d) \V^{(\m)} \)
\cr
&&=\trace(\V^{(\m)T} \hat{\Sig}_\p \V^{(\m)} ) + \trace({\V_\p}^{(\m)T} \hat{\Sig}_\p {\V_\p}^{(\m)} - \O^T\V^{(\m)T} \hat{\Sig}_\p \V^{(\m)}\O ) 
\cr
&&\quad\quad -\trace\(\V^{(\m)T} (\p^2 \M^T\M + \n\p^2\sig^2 I_\d) \V^{(\m)} \)
\cr
&&=\trace\Big(\V^{(\m)T} (\My^T\My -(1-\p) \diag(\My^T\My) \cr
&&\quad\quad\quad\quad\quad\quad\quad\quad- \p^2(1-\p)\,\diag(\M^T\M)) \V^{(\m)} \Big) \cr
&&\quad\quad+\trace\(\V^{(\m)T} (\epsilony^T\epsilony -(1-\p) \diag(\epsilony^T\epsilony)- \n\p^2\sig^2 I_\d) \V^{(\m)} \) \cr
&&\quad\quad+\trace\Big(\V^{(\m)T} (\p \M^T\My + \p \My^T\M \cr
&&\quad\quad\quad\quad\quad\quad\quad\quad-(1-\p) \p \diag(\M^T\My + \My^T\M)) \V^{(\m)} \Big) \cr
&&\quad\quad+\trace\Big(\V^{(\m)T} (\p \M^T\epsilony + \p \epsilony^T \M \cr
&&\quad\quad\quad\quad\quad\quad\quad\quad-(1-\p) \p \diag(\M^T\epsilony + \epsilony^T\M)) \V^{(\m)} \Big) \cr
&&\quad\quad+\trace\Big(\V^{(\m)T} (\My^T\epsilony + \epsilony^T\My \cr
&&\quad\quad\quad\quad\quad\quad\quad\quad-(1-\p) \diag(\My^T\epsilony + \epsilony^T\My)) \V^{(\m)} \Big) \cr
&&\quad+ \trace({\V_\p}^{(\m)T} \hat{\Sig}_\p {\V_\p}^{(\m)} - \O^T\V^{(\m)T} \hat{\Sig}_\p \V^{(\m)}\O ) \cr
&&=(a)+(b)+(c)+(d)+(e)+(f),
\end{eqnarray}
where $\O \in \mathbb{V}_{\m,\m}$ is a solution to 
	$\inf_{\Q \in \mathbb{V}_{\m,\m}} \smallnorm{\V_\p^{(\m)}-\V^{(\m)}\Q}_F^2$
	and the fourth equality holds by \eqref{Sigpp} and \eqref{Sigp}. 
Below, we examine the six terms $(a)$-$(f)$ one by one.

The term $(a)$ in \eqref{prop2:eq1} is 
\begin{eqnarray} \label{prop2:eq2(a)-1}
(a)
&=& \sum_{i=1}^\m \V_i^T \(\My^T\My -(1-\p) \diag(\My^T\My) - \p^2(1-\p)\,\diag(\M^T\M)\) \V_i
\cr
&=&\sum_{i=1}^\m \Bigg\{ 
	\sum_{k=1}^\n \( \sum_{h=1}^\d (\y_{kh} - \p) {\M}_{kh} \V_{ih} \)^2
\cr
&&\quad\quad\quad\quad\quad
		 - (1-\p) \sum_{k=1}^\n \sum_{h=1}^\d (\y_{kh} - \p)^2 {\M}_{kh}^2 \V_{ih}^2
\cr
&&\quad\quad\quad\quad\quad
	- \p^2 (1-\p) \sum_{k=1}^\n \sum_{h=1}^\d {\M}_{kh}^2 \V_{ih}^2 \Bigg\}
\cr
&=&\sum_{k=1}^\n \sum_{h=1}^\d \p \Big[ (\y_{kh} - \p)^2 - \p(1-\p) \Big] {\M}_{kh}^2 \sum_{i=1}^\m \V_{ih}^2
\cr
&&\quad\quad\quad\quad\quad
	+ 2 \sum_{k=1}^\n \sum_{h<h'}^{1\sim\d} (\y_{kh} - \p)(\y_{kh'} - \p) {\M}_{kh}{\M}_{kh'} \sum_{i=1}^\m \V_{ih}\V_{ih'} .
\end{eqnarray}
Note that the two terms in \eqref{prop2:eq2(a)-1} are centered and uncorrelated with each other. 
So, the variance is 
\begin{eqnarray} \label{prop2:eq2(a)}
\var(a)
&=&\Bigg\{ \sum_{k=1}^\n \sum_{h=1}^\d \p^3(1-\p)(2\p-1)^2 {\M}_{kh}^4 \Bigg( \sum_{i=1}^\m \V_{ih}^2 \Bigg)^2 \Bigg\}
\cr
&&\quad\quad\quad\quad\quad
	+ \Bigg\{ 4 \sum_{k=1}^\n \sum_{h<h'}^{1\sim\d} \p^2(1-\p)^2 {\M}_{kh}^2{\M}_{kh'}^2 \Bigg(\sum_{i=1}^\m \V_{ih}\V_{ih'}\Bigg)^2 \Bigg\} 
\cr
&\le&\m \sum_{i=1}^\m \sum_{k=1}^\n \sum_{h=1}^\d \p^3(1-\p)(2\p-1)^2 {\M}_{kh}^4 \V_{ih}^4 
\cr
&&+ 4\m \sum_{i=1}^\m \sum_{k=1}^\n \sum_{h<h'}^{1\sim\d} \p^2(1-\p)^2 {\M}_{kh}^2{\M}_{kh'}^2 \V_{ih}^2\V_{ih'}^2
\cr
&\le& \m\L^4\p^3(1-\p)(2\p-1)^2 \sum_{i=1}^\m \sum_{k=1}^\n \sum_{h=1}^\d \V_{ih}^4 
\cr
&&+ 4\m\L^4\p^2(1-\p)^2 \sum_{i=1}^\m \sum_{k=1}^\n \sum_{h,h'}^{1\sim\d} \V_{ih}^2\V_{ih'}^2
\cr
&\le& C\p^2(1-\p)\n,
\end{eqnarray}
where the first inequality is due to Jensen's inequality. 
This shows that the term $(a)$ is $O_p(\p\sqrt{\n})$. 
Similarly, we can show that the terms $(b)$ and $(e)$ are $O_p(\p\sqrt{\n})$.

The term (c) in \eqref{prop2:eq1} is 
\begin{eqnarray*}
&&\frac{1}{2\p} (c)
\cr
&&= \sum_{i=1}^\m \V_i^T \(\M^T\My -(1-\p) \diag(\M^T\My)\) \V_i
\cr
&&=\sum_{i=1}^\m \Bigg\{ 
	\sum_{k=1}^\n \( \sum_{h=1}^\d {\M}_{kh} \V_{ih} \)\( \sum_{h'=1}^\d (\y_{kh'}-\p) {\M}_{kh'} \V_{ih'} \)
\cr
&&\quad\quad\quad\quad\quad
	- (1-\p) \sum_{k=1}^\n \sum_{h=1}^\d (\y_{kh}-\p) {\M}_{kh}^2 \V_{ih}^2 \Bigg\}
\cr
&&=\sum_{k=1}^\n \sum_{h=1}^\d (\y_{kh}-\p) {\M}_{kh} \sum_{i=1}^\m \V_{ih} \Bigg[ \bigg(\sum_{h'=1}^\d {\M}_{kh'} \V_{ih'}\bigg) 
	- (1-\p){\M}_{kh} \V_{ih}\Bigg] .
\end{eqnarray*}
Then, its variance is 
\begin{eqnarray*}
&&\(\frac{1}{2\p}\)^2 \var(c) 
\cr
&&= \sum_{k=1}^\n \sum_{h=1}^\d \p (1-\p) {\M}_{kh}^2 \Bigg\{ \sum_{i=1}^\m \V_{ih} \Bigg[ \bigg(\sum_{h'=1}^\d {\M}_{kh'} \V_{ih'}\bigg) 
	- (1-\p){\M}_{kh} \V_{ih}\Bigg] \Bigg\}^2
\cr
&&\le C\p (1-\p)\n\d,
\end{eqnarray*}
where the last inequality is due to Assumption \ref{assume1}(1) and the fact that 
\begin{eqnarray} \label{prop2:keyOrder-M}
&&\sum_{k=1}^\n \sum_{h=1}^\d {\M}_{kh}^2 \Bigg\{ \sum_{i=1}^\m \V_{ih} \Bigg[ \bigg(\sum_{h'=1}^\d {\M}_{kh'} \V_{ih'}\bigg) 
	- (1-\p){\M}_{kh} \V_{ih}\Bigg] \Bigg\}^2
\cr
&&=\sum_{k=1}^\n \sum_{h=1}^\d {\M}_{kh}^2 \Bigg\{ \sum_{i=1}^\m \lam_i\U_{ik}\V_{ih} - (1-\p)\sum_{i=1}^\m{\M}_{kh} \V_{ih}^2 \Bigg\}^2
\cr
&&=\sum_{k=1}^\n \sum_{h=1}^\d {\M}_{kh}^2 \Bigg\{ \sum_{i=1}^\m \lam_i\U_{ik}\V_{ih} \Bigg\}^2
\cr
&&\quad\quad
	+ (1-\p)^2 \sum_{k=1}^\n \sum_{h=1}^\d \Bigg\{ \sum_{i=1}^\m{\M}_{kh}^2 \V_{ih}^2 \Bigg\}^2
\cr
&&\quad\quad
	-2(1-\p)\sum_{k=1}^\n \sum_{h=1}^\d {\M}_{kh} \Bigg\{ \sum_{i=1}^\m \lam_i\U_{ik}\V_{ih} \Bigg\}\Bigg\{ \sum_{i=1}^\m{\M}_{kh}^2 \V_{ih}^2 \Bigg\}
\cr
&&=\sum_{k=1}^\n \sum_{h=1}^\d {\M}_{kh}^2 \Bigg\{ \sum_{i=1}^\m \lam_i\U_{ik}\V_{ih} \Bigg\}^2
	+ O(\n)
\cr\cr
&&= O(\n\d).
\end{eqnarray}

The term (d) in \eqref{prop2:eq1} is 
\begin{eqnarray*}
\frac{1}{2\p} (d)
&=& \sum_{i=1}^\m \V_i^T \(\M^T\epsilony -(1-\p) \diag(\M^T\epsilony)\) \V_i
\cr
&=&\sum_{i=1}^\m \Bigg\{ 
	\sum_{k=1}^\n \bigg( \sum_{h=1}^\d {\M}_{kh} \V_{ih} \bigg)\bigg( \sum_{h'=1}^\d \y_{kh'}\epsilon_{kh'} \V_{ih'} \bigg)
\cr
&&\quad\quad\quad\quad\quad\quad\quad\quad
	- (1-\p) \sum_{k=1}^\n \sum_{h=1}^\d \y_{kh}\epsilon_{kh} {\M}_{kh} \V_{ih}^2 \Bigg\}
\cr
&=&\sum_{k=1}^\n \sum_{h=1}^\d \y_{kh}\epsilon_{kh} \Bigg\{ \sum_{i=1}^\m \V_{ih} \Bigg[ \bigg(\sum_{h'=1}^\d {\M}_{kh'} \V_{ih'}\bigg) - (1-\p){\M}_{kh} \V_{ih}\Bigg] \Bigg\} .
\end{eqnarray*}
Then, its variance is
\begin{align*}
\bigg(\frac{1}{2\p}\bigg)^2 \var(d) 
&= \sum_{k=1}^\n \sum_{h=1}^\d \p\sig^2 \Bigg\{ \sum_{i=1}^\m \V_{ih} \Bigg[ \bigg(\sum_{h'=1}^\d {\M}_{kh'} \V_{ih'}\bigg) - (1-\p){\M}_{kh} \V_{ih}\Bigg] \Bigg\}^2
\cr
&\le C\p \n\d ,
\end{align*}
where the last inequality is due to Assumption \ref{assume1}(1) and the fact that
\begin{eqnarray} \label{prop2:keyOrder}
&&\sum_{k=1}^\n \sum_{h=1}^\d \Bigg\{ \sum_{i=1}^\m \V_{ih} \Bigg[ \bigg(\sum_{h'=1}^\d {\M}_{kh'} \V_{ih'}\bigg) - (1-\p){\M}_{kh} \V_{ih}\Bigg] \Bigg\}^2
\cr
&&=\sum_{k=1}^\n \sum_{h=1}^\d \Bigg\{ \sum_{i=1}^\m \lam_i\U_{ik}\V_{ih} - (1-\p)\sum_{i=1}^\m{\M}_{kh} \V_{ih}^2 \Bigg\}^2
\cr
&&=  \sum_{i=1}^\m \lam_i^2 
	+ (1-\p)^2 \sum_{k=1}^\n \sum_{h=1}^\d \( \sum_{i=1}^\m{\M}_{kh} \V_{ih}^2 \)^2  
\cr
&&\quad\quad\quad\quad
	- 2(1-\p) \sum_{i=1}^\m \lam_i^2 \sum_{h=1}^\d \V_{ih}^2 \sum_{i'=1}^\m \V_{i'h}^2
\cr
&&= \sum_{i=1}^\m \lam_i^2 
	+O(\n).
\end{eqnarray}

The term (f) in \eqref{prop2:eq1} is 
\begin{eqnarray} \label{prop2:eq2}
|(f)|
&=&\left| \trace({\V_\p}^{(\m)T} \hat{\Sig}_\p {\V_\p}^{(\m)} - \O^T\V^{(\m)T} \hat{\Sig}_\p \V^{(\m)}\O ) \right|
\cr
&\le& \sum_{i=1}^m \left| \O_i^T\V^T \hat{\Sig}_\p \V\O_i - {\V_\p}_i^T \hat{\Sig}_\p {\V_\p}_i \right| \cr
&=& \sum_{i=1}^m \left\{ \left| (\V\O_i - {\V_\p}_i)^T \hat{\Sig}_\p (\V\O_i - {\V_\p}_i)  
	+ 2 {\lam_\p^2}_i {\V_\p}_i^T (\V\O_i - {\V_\p}_i) \right| \right\} \cr
&\le& \sum_{i=1}^m {\lam_\p^2}_1 \( \norm{\V\O_i - {\V_\p}_i}_2^2 + 2 \left| {\V_\p}_i^T (\V\O_i - {\V_\p}_i) \right| \) \cr
&=& \sum_{i=1}^m {\lam_\p^2}_1 \Big( \norm{\V\O_i - {\V_\p}_i}_2^2 \cr
&&\quad\quad\quad\quad\quad+ \left| \O_i^T\V^T \V\O_i - \O_i^T\V^T {\V_\p}_i - {\V_\p}_i^T \V\O_i + {\V_\p}_i^T {\V_\p}_i) \right| \Big) \cr
&=& \sum_{i=1}^m 2 {\lam_\p^2}_1 \norm{\V\O_i - {\V_\p}_i}_2^2 \cr
&=& 2 {\lam_\p^2}_1 \norm{\V^{(\m)}\O - {\V_\p}^{(\m)}}_F^2 \cr
&=& O_p(\p\d),
\end{eqnarray}
where $\O_i$ is the $i$-th column of $\O$ and 
	the last equality holds by Proposition \ref{prop1}, \eqref{sineDist}, and \eqref{prop3:eq2(i)-supp-supp}.

Therefore, the result follows from \eqref{prop2:eq1}-\eqref{prop2:eq2}.
\end{proof}

\begin{proof}[Proof of Proposition \ref{lem8}]
By Cram{\`e}r-Wold device, it is enough to show that for any given $(c_1, c_2)^T \in \real^2 \setminus (0,0)^T$,
\begin{eqnarray*}
&&\frac{1}{\sqrt{\n\d}\gamma_{c_1,c_2}}
\begin{pmatrix}
c_1 \smallskip \\ 
c_2
\end{pmatrix}
^T
\[
\begin{pmatrix}
\p^{-2} \sum_{i=1}^{\m} {\lam_\p^2}_i \smallskip \\ 
\p^2 \sum_{i=1}^{\m}( \lam^2_i + \n\sig^2 ) \,\phat
\end{pmatrix}
-
\begin{pmatrix}
\sum_{i=1}^{\m}\left[ \lam^2_i + \n\sig^2 \right] \smallskip \\ 
\p^3 \sum_{i=1}^{\m}( \lam^2_i + \n\sig^2 ) 
\end{pmatrix}
\]  
\cr\cr
&&
\rightarrow  \mathcal{N}\( 0, 1\) \text{ in distribution,} \quad\text{as } \n,\d\to\infty,
\end{eqnarray*}
where $\gamma_{c_1,c_2}^2 = \big(c_1 \, c_2\big) \Gamma_{\n\d}
\begin{pmatrix}
c_1 \smallskip \\ 
c_2
\end{pmatrix}
$.
When $c_1=0$, this can be directly showed by CLT. 
Thus, we only consider the case where $c_1\neq 0$.

We have
\begin{eqnarray} \label{lem8:linearCombo}
&&
\begin{pmatrix}
c_1 \smallskip \\ 
c_2
\end{pmatrix}
^T
\[
\begin{pmatrix}
\p^{-2} \sum_{i=1}^{\m} {\lam_\p^2}_i \smallskip \\ 
\p^2 \sum_{i=1}^{\m}( \lam^2_i + \n\sig^2 ) \,\phat
\end{pmatrix}
-
\begin{pmatrix}
\sum_{i=1}^{\m}\left[ \lam^2_i + \n\sig^2 \right] \smallskip \\ 
\p^3 \sum_{i=1}^{\m}( \lam^2_i + \n\sig^2 ) 
\end{pmatrix}
\]
\cr
&&=c_1\frac{1}{\p^2} \sum_{i=1}^{\m} \[ {\lam_\p^2}_i - \p^2\( \lam^2_i + \n\sig^2 \)\]
	+ c_2 \,\p^2 \sum_{i=1}^{\m}( \lam^2_i + \n\sig^2 )\(\phat-\p\)
\cr
&&= \frac{2c_1}{\p}\sum_{k=1}^\n \sum_{h=1}^\d (\y_{kh}-\p){\M}_{kh}  
	\sum_{i=1}^\m \V_{ih} \bigg[ \bigg(\sum_{h'=1}^\d {\M}_{kh'} \V_{ih'}\bigg) \cr
&&\hspace*{7cm} - (1-\p){\M}_{kh} \V_{ih}\bigg] \cr
&&\;\;\;+\frac{2c_1}{\p}\sum_{k=1}^\n \sum_{h=1}^\d \y_{kh}\epsilon_{kh}  
	\sum_{i=1}^\m \V_{ih} \bigg[ \bigg(\sum_{h'=1}^\d {\M}_{kh'} \V_{ih'}\bigg) - (1-\p){\M}_{kh} \V_{ih}\bigg] \cr
&&\;\;\;+o_p\(\sqrt{\frac{\n\d}{\p}}\)
	+\frac{c_2 \,\p^2}{\n\d} \sum_{k=1}^\n \sum_{h=1}^\d (\y_{kh}-\p)
	\sum_{i=1}^{\m}( \lam^2_i + \n\sig^2 )
\cr
&&= \sum_{k=1}^\n \sum_{h=1}^\d (\y_{kh}-\p) \bigg\{ 
	\frac{2c_1}{\p} {\M}_{kh} \sum_{i=1}^\m \V_{ih} \bigg[ \bigg(\sum_{h'=1}^\d {\M}_{kh'} \V_{ih'}\bigg)  
\cr
&&\quad\quad\quad\quad\quad\quad\quad\quad\quad\quad
	- (1-\p){\M}_{kh} \V_{ih}\bigg]+ \frac{c_2 \,\p^2}{\n\d} \sum_{i=1}^{\m}( \lam^2_i + \n\sig^2 ) \bigg\}
\cr
&&\;\;\;+\frac{2c_1}{\p}\sum_{k=1}^\n \sum_{h=1}^\d \y_{kh}\epsilon_{kh}  
	\sum_{i=1}^\m \V_{ih} \bigg[ \bigg(\sum_{h'=1}^\d {\M}_{kh'} \V_{ih'}\bigg) - (1-\p){\M}_{kh} \V_{ih}\bigg] \cr
&&\;\;\;+o_p\(\sqrt{\frac{\n\d}{\p}}\)
\cr
&&=(a) + (b) + o_p\(\sqrt{\frac{\n\d}{\p}}\),
\end{eqnarray}
where the second equality holds by Lemma \ref{prop2}.
Since the terms $(a)$ and $(b)$ are centered and 
	not correlated with each other under the model setup in Section \ref{setup}, we have
\begin{eqnarray} \label{lem8:var}
&&\var\[(a) + (b)\]=\var\[(a)\] + \var\[(b)\]
\cr
&&=\sum_{k=1}^\n \sum_{h=1}^\d \expect (\y_{kh}-\p)^2 \Bigg\{ 
	\frac{2c_1}{\p} {\M}_{kh} \sum_{i=1}^\m \V_{ih} \Bigg[ \bigg(\sum_{h'=1}^\d {\M}_{kh'} \V_{ih'}\bigg) 
\cr
&&\quad\quad\quad\quad\quad\quad\quad\quad\quad
	- (1-\p){\M}_{kh} \V_{ih}\Bigg] 
	+ \frac{c_2 \,\p^2}{\n\d} \sum_{i=1}^{\m}( \lam^2_i + \n\sig^2 ) \Bigg\}^2
\cr
&&\quad+\frac{4c_1^2}{\p^2}\,\sum_{k=1}^\n \sum_{h=1}^\d \expect\(\y_{kh}^2\epsilon_{kh}^2\)  
	\Bigg\{ \sum_{i=1}^\m \V_{ih} \Bigg[ \bigg(\sum_{h'=1}^\d {\M}_{kh'} \V_{ih'}\bigg) 
\cr
&&\hspace*{7cm}
	- (1-\p){\M}_{kh} \V_{ih}\Bigg] \Bigg\}^2
\cr
&&=\p(1-\p)\sum_{k=1}^\n \sum_{h=1}^\d \Bigg\{ 
	\frac{2c_1{\M}_{kh}}{\p} \sum_{i=1}^\m \lam_i\U_{ik}\V_{ih}
	+ c_2 \p^2 \sum_{i=1}^{\m} b^2_i \Bigg\}^2
\cr
&&\hspace*{0.5cm}
	+\frac{4\sig^2c_1^2}{\p}\,\sum_{i=1}^\m \lam_i^2 + O\(\frac{\n}{\p}\)
\cr
&&=\frac{4c_1^2(1-\p)}{\p}\sum_{k=1}^\n \sum_{h=1}^\d {\M}_{kh}^2\Bigg\{ 
	\sum_{i=1}^\m \lam_i\U_{ik}\V_{ih} \Bigg\}^2 
	+\frac{4\sig^2c_1^2}{\p}\,\sum_{i=1}^\m \lam_i^2 
\cr
&&\hspace*{0.5cm}
	+ 4c_1c_2 \,\n\d\,\p^2(1-\p) \Bigg(\sum_{i=1}^\m  b^2_i\Bigg)^2
	+ c_2^2 \,\n\d\, \p^5(1-\p) \(\sum_{i=1}^\m b_{i}^2\)^2
\cr
&&\hspace*{0.5cm}
	+ O\(\frac{\n}{\p}\)
\cr
&&=\n\d\, \big(c_1 \, c_2\big) \Gamma_{\n\d}
\begin{pmatrix}
c_1 \smallskip \\ 
c_2
\end{pmatrix}  + O\(\frac{\n}{\p}\),
\end{eqnarray}
where the third equality is due to \eqref{prop2:keyOrder-M}, \eqref{prop2:keyOrder} and Assumption \ref{assume1}(1).
Note that  
\begin{eqnarray} \label{lem8:var-order}
\n\d\, \big(c_1 \, c_2\big) \Gamma_{\n\d}
\begin{pmatrix}
c_1 \smallskip \\ 
c_2
\end{pmatrix}
\ge \frac{4c_1^2\sig^2}{\p} \sum_{i=1}^\m \lam_i^2 \ge \frac{c \,\n\d}{\p}.
\end{eqnarray}
Thus, Liapunov's condition is satisfied with $(a)+(b)$ because we have
\begin{eqnarray} \label{lem8:liapunov}
&&\sum_{k=1}^\n \sum_{h=1}^\d \expect \bigg| (\y_{kh}-\p) \bigg\{ 
	\frac{2c_1}{\p} {\M}_{kh} \sum_{i=1}^\m \V_{ih} \bigg[ \bigg(\sum_{h'=1}^\d {\M}_{kh'} \V_{ih'}\bigg) 
\cr
&&\quad\quad\quad\quad\quad\quad\quad\quad\quad\quad
	- (1-\p){\M}_{kh} \V_{ih}\bigg] + \frac{c_2 \,\p^2}{\n\d} \sum_{i=1}^{\m}( \lam^2_i + \n\sig^2 ) \bigg\}
\cr
&&\quad\;\;+ \y_{kh}\epsilon_{kh} \bigg\{ \frac{2c_1}{\p} 
	\sum_{i=1}^\m \V_{ih} \bigg[ \bigg(\sum_{h'=1}^\d {\M}_{kh'} \V_{ih'}\bigg) - (1-\p){\M}_{kh} \V_{ih}\bigg] \bigg\} \bigg|^3
\cr
&&\le 8 \sum_{k=1}^\n \sum_{h=1}^\d \Bigg\{ \expect | \y_{kh}-\p |^3 \bigg| 
	\frac{2c_1}{\p} {\M}_{kh} \sum_{i=1}^\m \V_{ih} \bigg[ \bigg(\sum_{h'=1}^\d {\M}_{kh'} \V_{ih'}\bigg) 
\cr
&&\hspace*{6cm}
	- (1-\p){\M}_{kh} \V_{ih} \bigg]
	+ O\(1\) \bigg|^3
\cr
&&\quad\quad\quad\quad\quad\quad\quad+ \expect |\y_{kh}\epsilon_{kh} |^3 \bigg| \frac{2c_1}{\p} 
	\sum_{i=1}^\m \V_{ih} \bigg[ \bigg(\sum_{h'=1}^\d {\M}_{kh'} \V_{ih'}\bigg) 
\cr
&&\hspace*{7cm}
	- (1-\p){\M}_{kh} \V_{ih}\bigg] \bigg|^3 \Bigg\}
\cr
&&\le \frac{C}{\p^2} \sum_{k=1}^\n \sum_{h=1}^\d \Bigg\{ 
	\bigg| \sum_{i=1}^\m \V_{ih} \bigg[ \bigg(\sum_{h'=1}^\d {\M}_{kh'} \V_{ih'}\bigg) 
\cr
&&\hspace*{5.5cm}
	- (1-\p){\M}_{kh} \V_{ih}\bigg] \bigg|^3 
	+ O\(1\) \Bigg\}
\cr
&&\le \frac{C}{\p^2} \sum_{k=1}^\n \sum_{h=1}^\d \Bigg\{ 
	\sum_{i=1}^\m |\V_{ih}|^3 \( \bigg|\sum_{h'=1}^\d {\M}_{kh'} \V_{ih'} \bigg|^3 + |\V_{ih}|^3 \) 
	+ O\(1\) \Bigg\}
\cr
&&\le \frac{C}{\p^2} \sum_{i=1}^\m\sum_{k=1}^\n \bigg|\sum_{h'=1}^\d {\M}_{kh'} \V_{ih'} \bigg|^3 + O\(\n\d\)
\cr
&&= O\Bigg(\frac{\n\d^{3/2}}{\p^2}\Bigg),
\end{eqnarray}
where the first inequality holds by Assumption \ref{assume1}(1),
	and the last two lines are due to Cauchy-Schwarz inequality.

By \eqref{lem8:linearCombo}-\eqref{lem8:liapunov}, Liapunov CLT and Slutsky theorem, we have
\begin{eqnarray*}
&&\frac{1}{\sqrt{\n\d}\gamma_{c_1,c_2}}
\begin{pmatrix}
c_1 \smallskip \\ 
c_2
\end{pmatrix}
^T
\[
\begin{pmatrix}
\p^{-2} \sum_{i=1}^{\m} {\lam_\p^2}_i \smallskip \\ 
\p^2 \sum_{i=1}^{\m}( \lam^2_i + \n\sig^2 ) \,\phat
\end{pmatrix}
-
\begin{pmatrix}
\sum_{i=1}^{\m}\left[ \lam^2_i + \n\sig^2 \right] \smallskip \\ 
\p^3 \sum_{i=1}^{\m}( \lam^2_i + \n\sig^2 ) 
\end{pmatrix}
\]
\cr\cr
&&\rightarrow \, \mathcal{N}\( 0, \, 1\) \text{ in distribution,} \quad\text{as }\n,\d\to\infty.
\end{eqnarray*}
\end{proof}

\begin{proof}[Proof of Proposition \ref{prop5}]
Similarly to the proof of \eqref{prop2:eq1}, we have
\begin{eqnarray*}
&&\hat{\tau}_{\p} - \n\p^2\sig^2 
\cr
&&= \frac{1}{\d-\rank} \trace\(\V_{\p c}^{T} \hat\Sig_\p \V_{\p c}\) 
		- \frac{1}{\d-\rank} \trace\(\Vc^{T} \expect \, \hat\Sig_\p \Vc\) \cr
&&= \frac{1}{\d-\rank} \trace \( \Vc^{T} \hat\Sig_\p \Vc \) 
\cr
&&\quad\quad
		+ \frac{1}{\d-\rank} \trace \( \V_{\p c}^{T} \hat\Sig_\p \V_{\p c} - \O^T\Vc^{T} \hat\Sig_\p \Vc\O^T \)
		- \frac{1}{\d-\rank} \trace\(\Vc^{T} \expect \, \hat\Sig_\p \Vc \) \cr
&&= \frac{1}{\d-\rank} \trace \Big(\Vc^{T}\big( \My^T\My -(1-\p) \diag(\My^T\My) 
\cr
&&\hspace*{6cm}
	- \p^2(1-\p)\,\diag(\M^T\M) \big)\Vc \Big) 
\cr
&&\quad\quad+ \frac{1}{\d-\rank} \trace \Big(\Vc^{T}\(\epsilony^T\epsilony -(1-\p) \diag(\epsilony^T\epsilony)\)\Vc - \n\p^2\sig^2 I_{\d-\rank}  \Big)
\cr
&&\quad\quad- 2\p(1-\p) \frac{1}{\d-\rank} \trace\Big(\Vc^{T} \(\diag(\My^T\M) \)\Vc \Big)
\cr
&&\quad\quad- 2\p(1-\p)\frac{1}{\d-\rank} \trace\Big(\Vc^{T}\(\diag(\epsilony^T\M)\)\Vc \Big)
\cr
&&\quad\quad+ 2\frac{1}{\d-\rank} \trace \Big(\Vc^{T}\(\My^T\epsilony -(1-\p) \diag(\My^T\epsilony)\)\Vc \Big) \cr
&&\quad\quad + \frac{1}{\d-\rank} \trace \( \V_{\p c}^{T} \hat\Sig_\p \V_{\p c} - \O^T\Vc^{T} \hat\Sig_\p \Vc\O \) \cr
&&= (A) + (B) - 2\p(1-\p)\cdot(C) - 2\p(1-\p)\cdot(D) + 2\cdot(E) + (F),
\end{eqnarray*}
where $\O \in \mathbb{V}_{\d-\rank,\d-\rank}$ is a solution to 
	$\inf_{\Q \in \mathbb{V}_{\d-\rank,\d-\rank}} \smallnorm{\V_{\p c}-\V_{c}\Q}_F^2$,
	and the third equality is due to the fact that $\M\Vc = \U\Lam\V^T\Vc = 0$. 
We will show that $(A)$-$(F)$ are $O_p\(\p\sqrt{\n}\)$.

Since the first five terms, $(A)$-$(E)$, are centered, 
	we only need to check their variances to find their rates. 
The variances of the terms $(A),(B),$ and $(E)$ are $O\(\p^2\n\)$, which can be shown similarly to the proof of \eqref{prop2:eq2(a)}.
The variance of the term $(C)$ is
\begin{eqnarray*}
\var (C) 
&\le& \frac{1}{\d-\rank} \sum_{i=1}^{\d-\rank} \expect\left[ {\Vc}_i^{T} \(\diag(\My^T\M) \) {\Vc}_i \right]^2 
\cr
&=& \frac{1}{\d-\rank} \sum_{i=1}^{\d-\rank} \var\Bigg[ 
	\sum_{k=1}^\n \sum_{h=1}^\d (\y_{kh}-\p){\M}_{kh}^2 {\Vc}_{ih}^2 \Bigg] 
\cr
&=& \frac{1}{\d-\rank} \sum_{i=1}^{\d-\rank} \Bigg[ 
	\L^4\,\sum_{k=1}^\n O(\p(1-\p)) \Bigg]
\cr
&=& O(\p\n),
\end{eqnarray*}
where the inequality is due to Jensen's inequality. 
Similarly, the variance of the term $(D)$ is $O(\p\n)$.

Now, consider the term $(F)$. Similarly to the proof of \eqref{prop2:eq2},
\begin{eqnarray*}
|(F)|
&\le&\frac{1}{\d-\rank} 
	\left| \trace \( \V_{\p c}^{T} \hat\Sig_\p \V_{\p c} - \O^T\Vc^{T} \hat\Sig_\p \Vc\O \) \right|
\cr
&\le& \frac{1}{\d-\rank} \sum_{i=1}^{\d-\rank} \left| {\V_{\p c}}_i^{T} \hat\Sig_\p {\V_{\p c}}_i - \O_i^T\Vc^{T} \hat\Sig_\p \Vc\O_i  \right| \cr
&\le& \frac{1}{\d-\rank} \cdot 2 {\lam_\p^2}_1 \norm{ \V_{\p c} - \Vc\O }_F^2 \cr
&\le& \frac{1}{\d-\rank} \cdot 4 {\lam_\p^2}_1 \norm{ \sin\( \V_{\p c}, \Vc\) }_F^2 \cr
&=& \frac{1}{\d-\rank} \cdot 2 {\lam_\p^2}_1 \norm{ \V_{\p c} \V_{\p c}^T - \Vc \Vc^T }_F^2 \cr
&=& \frac{1}{\d-\rank} \cdot 2 {\lam_\p^2}_1 \norm{ (I_\d - \V_\p \V_\p^T) - (I_\d - \V \V^T ) }_F^2 \cr
&=& \frac{1}{\d-\rank} \cdot 2 {\lam_\p^2}_1 \norm{ \V_\p \V_\p^T - \V \V^T }_F^2 \cr
&=& \frac{1}{\d-\rank} \cdot 4 {\lam_\p^2}_1 \norm{ \sin (\V_\p, \V) }_F^2 \cr
&=& O_p(\p),
\end{eqnarray*}
where $\O_i$ is the $i$-th column of $\O$,
	the third inequality can be derived similarly to the proof of \eqref{prop3:eq2(i)-supp}, and 
	the last equality holds by Proposition \ref{prop1} and \eqref{prop3:eq2(i)-supp-supp}.
\end{proof}

\subsection{Proofs for Section \ref{proofs:thm3}} \label{apdx3}

\begin{proof}[Proof of Proposition \ref{prop4}]
Let $\Delta_{\lam_i} = \lamhat_i-\lam_i$, $\Delta_{\U_i} = \text{sign}(\langle \Uhat_i, \U_i \rangle)\Uhat_i-\U_i$, and $\Delta_{\V_i} = \text{sign}(\langle \Vhat_i, \V_i \rangle)\Vhat_i-\V_i$ for all $i \in \{1, \ldots,\rank \}$. 
Similarly to the proof of Theorem \ref{thm2}, we can show that for all $i=1,\ldots,\rank$, 
\begin{eqnarray} \label{prop4:lamDelta}
\left|\Delta_{\lam_i}\right| = O_p \(\frac{1}{\sqrt{\p}}+\frac{1}{\p}\sqrt{\frac{\d}{\n}}\). 
\end{eqnarray}
Then,
\begin{eqnarray*}
&&\norm{\Mhat(\s_0)-\M}_F^2
\cr
&&= \norm{\sum_{i=1}^\rank \s_{0i} \; \lamhat_i \Uhat_i \Vhat_i^T - \sum_{i=1}^\rank \lam_i \U_i \V_i^T}_F^2
\cr
&&\le \rank^2 \sum_{i=1}^\rank \norm{ \s_{0i} \; \lamhat_i \Uhat_i \Vhat_i^T - \lam_i \U_i \V_i^T}_F^2
\cr
&&= \rank^2 \sum_{i=1}^\rank \norm{ \( \lam_i + \Delta_{\lam_i} \) \( \U_i + \Delta_{\U_i} \) \( \V_i + \Delta_{\V_i} \)^T - \lam_i \U_i \V_i^T}_F^2
\cr
&&\le C \rank^2 \sum_{i=1}^\rank \bigg\{ \Big\| \Delta_{\lam_i}\U_i \V_i^T\Big\|_F^2 + \Big\| \lam_i \Delta_{\U_i} \V_i^T\Big\|_F^2 + \Big\|\lam_i \U_i \Delta_{\V_i}^T\Big\|_F^2 \bigg\}
\cr
&&= C\rank^2 \sum_{i=1}^\rank \Bigg\{ O_p \(\frac{1}{\sqrt{\p}}+\frac{1}{\p}\sqrt{\frac{\d}{\n}}\) 
	+ O\(\n\d\) \frac{1}{\p\,{b}_{\rank}^4}O_p\(\frac{1}{\d}\) 
	+ O\(\n\d\)\frac{1}{\p\,{b}_{\rank}^4} O_p\(\frac{1}{\n}\) \Bigg\}
\cr
&&=\frac{1}{\p\,{b}_{\rank}^4} \, O_p(\n),
\end{eqnarray*}
where the third equality holds due to \eqref{prop4:lamDelta} and Theorem \ref{thm1}.
\end{proof}

\subsection{Proofs for Lemma \ref{thm5}} \label{apdx4}


\begin{proof}[Proof of Lemma \ref{thm5}]
By Weyl's theorem (\cite{li1998one}), Lemma \ref{lem6}, and Lemma \ref{lem4}, for any given $\delta>0$, 
	there exists a large constant $C_{\delta}>0$ such that
\begin{eqnarray}\label{thm5:eq1}
\max\left\{\big|\lam_{\phat \rank}^2 -\p^2(\lam_\rank^2 + \n\sig^2)\big|,\big|\lam_{\phat \,\rank+1}^2 - \p^2\n\sig^2\big|\right\}
&\le& \norm{\hat\Sig_{\phat}-\expect(\hat\Sig_{\p})}_2\cr
	&\le& \norm{\hat\Sig_{\phat}-\hat\Sig_{\p}}_2+\norm{\hat\Sig_{\p}-\expect(\hat\Sig_{\p})}_2\cr
	&\le& C_{\delta}\,\p^{3/2}\sqrt{\frac{\n\log\n}{\d}}
\end{eqnarray}
with probability at least $1-O(\n^{-\delta})$. 
Also, by definition of $\hat\rank$, we have 
\begin{eqnarray}\label{thm5:eq2}
\big\{\hat\rank = \rank\big\}&=&\big\{\lam_{\phat \rank}^2 \ge \p^2\n\, C_\d ,\; \lam_{\phat \,\rank+1}^2 < \p^2\n\, C_\d \big\}\cr
	&=&\Big\{\big[\lam_{\phat \rank}^2 -\p^2(\lam_\rank^2 + \n\sig^2)\big] + \p^2(\lam_\rank^2 + \n\sig^2) \ge \p^2\n\, C_\d ,\; \cr
	&&\quad\quad\quad\quad\quad\quad\quad 
		\big[\lam_{\phat \,\rank+1}^2 - \p^2\n\sig^2\big] + \p^2\n\sig^2 < \p^2\n\, C_\d \Big\},
\end{eqnarray}
where $\lam_\rank^2=b_r^2\, \n\d$ by Assumption \ref{assume1}(1).
The result follows by \eqref{thm5:eq1} and \eqref{thm5:eq2}.

\end{proof}



\markboth{\hfill{\footnotesize\rm Juhee Cho, Donggyu Kim, and Karl Rohe} \hfill}
{\hfill {\footnotesize\rm Asymptotic theory for LRMC} \hfill}

\bibhang=1.7pc
\bibsep=2pt
\fontsize{9}{14pt plus.8pt minus .6pt}\selectfont
\renewcommand\bibname{\large \bf References}

\vskip .65cm
\vskip 2pt
\noindent
Department of Statistics, University of Wisconsin, Madison, WI 53706, U.S.A
\vskip 2pt
\noindent
E-mail: chojuhee@stat.wisc.edu
\vskip 2pt

\noindent
Department of Statistics, University of Wisconsin, Madison, WI 53706, U.S.A
\vskip 2pt
\noindent
E-mail: kimd@stat.wisc.edu
\vskip 2pt

\noindent
Department of Statistics, University of Wisconsin, Madison, WI 53706, U.S.A
\vskip 2pt
\noindent
E-mail: karlrohe@stat.wisc.edu
\end{document}